\documentclass[11pt]{article}

\usepackage[margin=1in]{geometry}
\usepackage[T1]{fontenc}
\usepackage[utf8]{inputenc}
\usepackage{lmodern}
\usepackage{microtype}
\usepackage{amsmath,amssymb,amsthm,mathtools}
\usepackage{booktabs,array}
\usepackage{tikz}
\usetikzlibrary{arrows.meta,positioning}
\usepackage{enumitem}
\usepackage{xcolor}
\usepackage[hidelinks]{hyperref}
\usepackage[nameinlink,noabbrev]{cleveref}

\allowdisplaybreaks

\newtheorem{theorem}{Theorem}[section]
\newtheorem{lemma}[theorem]{Lemma}
\newtheorem{proposition}[theorem]{Proposition}
\newtheorem{corollary}[theorem]{Corollary}
\theoremstyle{definition}
\newtheorem{definition}[theorem]{Definition}

\theoremstyle{remark}
\newtheorem{remark}[theorem]{Remark}

\newcommand{\F}{\mathbb{F}}
\newcommand{\Q}{\mathbb{Q}}
\newcommand{\N}{\mathbb{N}}
\newcommand{\AC}{\mathsf{AC}}
\newcommand{\PAC}{\mathsf{PAC\!\!-\!AC}}
\newcommand{\VPe}{\mathsf{VP}_{\!e}}
\newcommand{\CERS}{\mathsf{CERS}}
\newcommand{\CEEA}{\mathsf{CEEA}}
\newcommand{\psc}{\operatorname{psc}}
\newcommand{\lc}{\operatorname{lc}}
\newcommand{\DegRed}{\preceq_{\mathrm{DEG}}}
\newcommand{\eps}{\varepsilon}
\newcommand{\Match}{\operatorname{Match}}

\title{
The Complete Extended Euclidean Scheme Is Not in
Piecewise Arithmetic $\mathrm{AC}^0$\\
\large
A Continuant--Hankel Degeneration and a Fixed-Bound Subresultant Reduction
}

\author{
Amiel Ferman\thanks{
Email: \texttt{[amiel.ferman@openu.ac.il]}.
}
}
\date{August 1, 2026}

\begin{document}
\maketitle

\begin{abstract}
We prove that the complete extended Euclidean scheme for pairs of monic univariate polynomials over a field of characteristic zero cannot be computed by polynomial-size, 
constant-depth piecewise arithmetic circuits in the select-gate model of Andrews and Wigderson. In fact, the lower bound already holds for the simpler task of 
outputting the complete padded list of nonzero Euclidean remainders.

The proof connects the Euclidean algorithm to two classical polynomial families: continuants and consecutive Hankel determinants. 
We first show that the continuant family cannot be computed by polynomial-size arithmetic circuits of constant depth. 
This follows by combining the universality of the continuant for arithmetic formulas with the adjacent-depth hierarchy for arithmetic circuits. 
We then give an explicit degeneration that extracts the continuant from a consecutive Hankel determinant. Combinatorially, the lowest-weight terms in the determinant 
expansion correspond exactly to matchings of a path, whose matching polynomial is the continuant. This transfers the constant-depth lower bound from continuants to Hankel determinants.

Finally, we show that a suitable Hankel determinant can be recovered from fixed coordinates of the complete Euclidean remainder sequence on a nonempty Zariski-open set. 
The connection is provided by a middle principal subresultant coefficient. A generic removal of select gates, followed by constant-depth division elimination, 
would therefore turn any piecewise constant-depth algorithm for the complete remainder sequence into an ordinary constant-depth circuit for Hankel determinants, 
contradicting the lower bound above.

We also show that the same obstruction applies to several related outputs. 
It yields lower bounds for the complete polynomial continued-fraction expansion 
and for the complete profile of fixed-bound principal subresultant coefficients, 
since each of these outputs directly exposes the Hankel determinant used in the Euclidean reduction. 
In addition, we obtain a lower bound for normalized subdiagonal Pad'e approximation: even the normalized denominator alone suffices, 
through polynomially many parallel Pad'e computations and a telescoping product of determinantal ratios, 
to recover the same consecutive Hankel determinant. 
Consequently, none of these problems can be computed by polynomial-size, 
constant-depth piecewise arithmetic circuits.
\end{abstract}

\medskip
\noindent\textbf{Keywords.}
Algebraic circuit complexity; constant-depth arithmetic circuits;
piecewise arithmetic circuits; extended Euclidean algorithm;
continuants; Hankel determinants; subresultants.

\medskip
\noindent\textbf{2020 Mathematics Subject Classification.}
Primary: 68Q17; Secondary: 68Q25, 13P15.
\medskip

\newpage
\tableofcontents

\section{Introduction and main result}\label{sec:introduction}

For polynomials $f,g\in\F[x]$ with $\deg f>\deg g$, the ordinary Euclidean
algorithm constructs
\[
R_0=f,\qquad R_1=g,\qquad
R_{i-1}=Q_iR_i+R_{i+1},\qquad \deg R_{i+1}<\deg R_i,
\]
until the next remainder is zero.  Starting from
\[
(U_0,V_0)=(1,0),\qquad (U_1,V_1)=(0,1),
\]
the standard extended recurrence
\[
U_{i+1}=U_{i-1}-Q_iU_i,\qquad
V_{i+1}=V_{i-1}-Q_iV_i
\]
produces the canonical B\'ezout identities
\[
U_if+V_ig=R_i.
\]
The complete extended Euclidean scheme asks for the entire sequence of
quotients, nonzero remainders, and canonical B\'ezout coefficients, rather
than only the greatest common divisor and the final B\'ezout identity.

Classical parallel algorithms place this complete scheme in arithmetic
$\mathsf{NC}^2$ \cite{vzGathen84}.  More recently, Andrews and Wigderson gave
piecewise arithmetic $\AC^0$ algorithms for polynomial GCD, resultants,
discriminants, final B\'ezout coefficients, polynomial division, and several
related structured linear-algebraic problems \cite{AW}.  They left open
whether the complete extended Euclidean scheme can also be computed in
constant depth \cite[Section~11]{AW}.  We prove that it cannot.

Throughout the paper, $\AC^0_{\F}$ denotes ordinary division-free arithmetic
circuits of polynomial size and constant depth.  The notation $\PAC^0_{\F}$
denotes the piecewise select-gate model of Andrews and Wigderson; such circuits
may use divisions, subject to the requirement that every division be defined
on the declared input domain.

\begin{theorem}[Main theorem]\label[theorem]{thm:main}
Let $\F$ be a field of characteristic zero.  The family computing the complete
extended Euclidean scheme for pairs of monic univariate polynomials is not in
$\PAC^0_{\F}$.  The conclusion remains true if the required output is only the
padded complete list of ordinary nonzero Euclidean remainders.
\end{theorem}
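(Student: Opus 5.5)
The proof follows the chain announced in the abstract. It has three reductions, and the last one absorbs the select gates.

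\emph{Step 1 (continuants are hard).} Let $K_n(x_1,\dots,x_n)$ be the continuant, i.e.\ the matching polynomial of the path, defined by $K_n=x_nK_{n-1}+K_{n-2}$. Continuants are universal for arithmetic formulas: every polynomial-size formula is a projection of a polynomial-size continuant, since continuants are entries of products of $2\times 2$ matrices and such products simulate formulas in the usual Ben-Or--Cleve fashion. If $\{K_n\}$ were in $\AC^0_{\F}$ with depth $\Delta$, then every polynomial-size formula of depth $\Delta+1$ would be computable in depth $\Delta$ with polynomial size. This contradicts the adjacent-depth hierarchy for constant-depth arithmetic circuits in characteristic zero. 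Some care is needed: the projection must keep depth essentially fixed, so the simulation should be applied only to formulas of a fixed depth $\Delta+1$.

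\emph{Step 2 (Hankel determinants degenerate to continuants).} Take the Hankel matrix $H_n=(c_{i+j})$ in indeterminate moments. Substitute $c_k=t^{w_k}a_k$, with weights chosen so that the minimum-weight permutations in $\det H_n$ are exactly those supported near the antidiagonal. These permutations should be involutions that correspond to matchings of a path. Pick the free scalings so that the lowest-$t$ coefficient equals $\pm K_n$ in suitable variables. Extracting this coefficient from a depth-$\Delta$ circuit costs only an interpolation over $\mathrm{poly}(n)$ values of $t$, which adds constant depth. So $\det H_n\notin\AC^0_{\F}$ for the $(n\times n)$ consecutive Hankel determinants.

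\emph{Step 3 (from remainders to Hankel, removing select gates).} Choose monic $f$ of degree $2m$ and $g$ of degree $2m-1$, for example with $g/f$ having Laurent expansion $\sum c_k x^{-k-1}$ over generic moments, or the reverse construction. By the subresultant theorem, on the Zariski-open set where the remainder sequence is normal, every degree appears. The leading coefficient of a fixed middle remainder, say the one of degree $m-1$, is then a ratio of consecutive principal subresultant coefficients. Each of these is, up to sign and a fixed power of $\lc$, the Hankel determinant of the moments. A telescoping product of leading coefficients of the first $j$ remainders, all of which sit at fixed padded coordinates, therefore recovers $\det H_m$ as a polynomial in the inputs on that open set.

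Now suppose a $\PAC^0_{\F}$ circuit outputs the padded remainder list. On a nonempty Zariski-open subset of the normal-sequence region, every select gate takes a constant branch, so the circuit agrees with an ordinary constant-depth circuit with divisions. This is the generic removal of select gates. Since $\det H_m$ is a polynomial, constant-depth division elimination in characteristic zero (Strassen-style, via Taylor truncation about a generic point, costing polynomial size and constant extra depth because the degree is polynomial) yields a division-free $\AC^0_{\F}$ circuit for $\det H_m$. This contradicts Step 2.

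The main obstacle is Step 2: finding weights for which the lowest-order part is exactly a single continuant and not a signed sum with cancellation. I would first try $w_k$ quadratic in $|k-(n-1)|$, so that only permutations that move indices by at most one across the antidiagonal survive. A secondary subtle point is making sure the generic branch in Step 3 lies inside the normal-degree-sequence open set. This holds by taking the intersection of two nonempty Zariski-open sets, which is nonempty because the field is infinite.
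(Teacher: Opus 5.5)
Step 3 of your proposal is essentially the paper's Euclid-to-Hankel argument: the middle fixed-bound principal subresultant of $(x^{2n},g)$ is $\pm\Delta_n$, on the normal locus it equals $c_n\prod_{i<n}c_i^2$, select gates are removed generically, and divisions are eliminated in constant depth. Steps 1 and 2, however, each rest on a claim that fails as stated. \emph{Step 1.} Continuants are not universal for formulas under projections. An affine projection $K_t(\ell_1,\dots,\ell_t)=(Q(\ell_t)\cdots Q(\ell_1))_{1,1}$ is a width-two algebraic branching program. Ben-Or--Cleve needs $3\times 3$ matrices, and Allender and Wang proved that $x_1x_2+x_3x_4+\cdots+x_{15}x_{16}$ is computed by no width-two branching program of any size. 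What Bringmann, Ikenmeyer and Zuiddam prove is universality under \emph{degeneration}: $K_t(\ell_1(\eps,x),\dots,\ell_t(\eps,x))=f(x)+\eps h_1(x)+\cdots+\eps^E h_E(x)$ with affine Laurent forms. To transfer a fixed-depth upper bound you then need the quantitative bookkeeping of Proposition~3.4. After balancing to fan-in-two depth $d=O(\log n)$, the number of factors ($\le 45\cdot 9^d$), the error degree ($\le 12\cdot 25^d$) and the Laurent exponents ($\le 2\cdot 3^d$) are all polynomial, so $f$ is recovered from $E+1$ specializations of $\eps$ by Lagrange interpolation. That costs a constant number of extra layers (the paper gets $d+4$ and invokes the hierarchy at $D=d+5$), not the zero overhead implicit in your ``depth $\Delta+1$ versus $\Delta$'' accounting.

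\emph{Step 2.} The antidiagonal band cannot yield $K_n$. Hankel entries are constant along antidiagonals, so every permutation with $|i+\pi(i)-(n-1)|\le 1$ uses only $c_{n-2},c_{n-1},c_n$. Whatever the weights, the lowest part is then a polynomial in three variables with $O(n^2)$ monomials, hence trivially in $\AC^0$, and never the $n$-variable continuant. Moreover, a weight quadratic in $|k-(n-1)|$ already penalizes displacement one, so the antidiagonal permutation is the unique minimizer and you get only $\pm a_{n-1}^n$. The paper instead works near the \emph{main} diagonal. There, entry $(i,i)$ lies on antidiagonal $2i$ and can carry its own variable $x_{i+1}$, while an adjacent transposition of $i,i+1$contributes $a_{2i+1}^2$. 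The weight must be quadratic but rounded so that displacement one is free. With exponent $M-\lceil s^2/2\rceil$ on antidiagonal $s$, the identity $\lceil (i+j)^2/2\rceil=i^2+j^2-\lfloor (i-j)^2/2\rfloor$ makes the exponent of the $\pi$-term a constant plus $\sum_i\lfloor (i-\pi(i))^2/2\rfloor$. This vanishes exactly for products of disjoint adjacent transpositions, i.e.\ for path matchings. The signs $(-1)^i$ on the diagonal variables, together with $\sigma_n=(-1)^{\lfloor n/2\rfloor}$, make every matching term $+\prod_{j\notin V(T)}x_j$. The extra entry $A_{2n}=\sigma_n\eps^{-E_n}$ shifts this layer to exponent $0$, though interpolating the lowest coefficient directly would also work.

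Three smaller points in Step 3. First, a single $c_i$ is not a ratio of consecutive principal subresultants; with the paper's conventions that ratio is $c_ic_{i-1}$, although only the product formula is actually needed. Second, monic inputs force the leading moment to be $1$, so you must dehomogenize, as the paper does via $\widehat G_a=a_0^{-1}G_a$ on $D(a_0)$ and $\psc_n(F_n,\lambda G)=\lambda^n\psc_n(F_n,G)$. Third, nonemptiness of the normal locus needs a witness; the paper uses the positive-definite moment matrices $V_kV_k^{\mathsf T}$.
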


We use two closely related computational problems.  The \emph{complete
Euclidean remainder scheme} (\(\CERS\)) takes a pair of input polynomials
\(f,g\) and outputs the entire ordinary Euclidean remainder sequence
\[
R_0=f,\qquad R_1=g,\qquad
R_{i-1}=Q_iR_i+R_{i+1},
\qquad \deg R_{i+1}<\deg R_i,
\]
with the output padded by zero polynomials to a fixed length determined by
the input degree bounds.  The \emph{complete extended Euclidean scheme}
(\(\CEEA\)) outputs, in addition, the complete quotient sequence and the
Bézout coefficient sequences \(U_i,V_i\) satisfying
\[
U_i f+V_i g=R_i
\]
at every step.  Thus \(\CERS\) is obtained from \(\CEEA\) simply by
discarding the quotient and Bézout-coordinate outputs.  Consequently, 
any upper bound for CEEA immediately gives the same upper
bound for CERS, whereas a lower bound for CERS also applies to CEEA.
Our result is therefore stronger than a lower bound for the full
extended output: it already holds for the remainder coordinates alone
and hence, a fortiori, for the extended Euclidean scheme in the output
convention of Andrews and Wigderson~\cite{AW}.

\subsection*{The proof at a glance}

The proof starts from the \emph{adjacent-depth hierarchy} for arithmetic
circuits.  Intuitively, this hierarchy says that constant-depth arithmetic
computation does not collapse: allowing even one additional layer of
arithmetic gates can make it possible to compute explicit polynomial families
that cannot be computed by polynomial-size circuits at the preceding depth.
More formally, Limaye, Srinivasan, and Tavenas proved that, for every fixed
integer \(D\geq 2\), there is an explicit polynomial family computable by
polynomial-size arithmetic circuits of depth \(D\), in their
linear-combination-gate model, but not by polynomial-size circuits of depth
\(D-1\) \cite{LST}.  After the elementary translation
between their model and the \(+,\times\) model used here, this provides, for
every fixed depth bound, a polynomial family of slightly larger constant depth
that cannot be compressed to that bound.  The argument transfers this
hierarchy-based hardness first to the continuant, then to consecutive Hankel
determinants, and finally to the complete Euclidean schemes.  The technical
sections make each transfer effective and verify that it preserves polynomial
size and constant depth.  We now describe the three steps.

\paragraph{Effective degeneration.}
The reduction used in the first two steps is denoted by
$\DegRed$.  Informally,
\[
f\DegRed g
\]
means that $f$ can be recovered as the constant term of $g$ after substituting
suitable affine Laurent forms depending on an auxiliary parameter $\eps$.
More precisely, the substitution has the form
\[
g\bigl(\ell_1(\eps,x),\ldots,\ell_t(\eps,x)\bigr)
 = f(x)+\eps h_1(x)+\cdots+\eps^E h_E(x),
\]
where the number of substituted forms, the error degree $E$, the Laurent
exponents, and the descriptions of the forms are all polynomially bounded.
The full definition is given in \cref{sec:degenerations}.

The point of this relation is computational.  If $g$ has polynomial-size
circuits of depth $d$, then one may evaluate the displayed identity at
$E+1$ nonzero values of $\eps$ and recover its constant term by interpolation:
choose \(E+1\) distinct nonzero elements 
\[ \alpha_0,\ldots,\alpha_E\in F. \] 
Lagrange interpolation provides field constants 
\(\lambda_0,\ldots,\lambda_E\) such that, 
for every polynomial \(P(\varepsilon)\) of degree at most \(E\), 
\[ [\varepsilon^0]P(\varepsilon) = \sum_{j=0}^{E}\lambda_j P(\alpha_j). \] 
Consequently, \[ f(x) = \sum_{j=0}^{E} \lambda_j\, g\bigl( \ell_1(\alpha_j,x),\ldots,\ell_t(\alpha_j,x) \bigr). \] 
The \(E+1\) specialized copies of the circuit for \(g\) are evaluated in parallel. 
Since addition gates have unbounded fan-in and field constants are freely available in the nonuniform model, 
interpolation adds only a constant number of arithmetic layers and polynomial size. Thus an effective degeneration 
turns a depth-\(d\) circuit for \(g\) into a polynomial-size circuit for \(f\) of depth \(d+O(1)\).

Consequently, an effective degeneration
\[
f\DegRed g
\]
turns a depth-$d$ circuit for $g$ into a polynomial-size circuit for $f$ of
depth $d+O(1)$.  Thus constant-depth easiness passes from right to left, or,
equivalently, constant-depth hardness passes from left to right.

\paragraph{New reductions and external ingredients.}
The proof combines two new reductions established in this paper with
several results from the existing literature. The first new reduction
is the explicit effective Laurent degeneration
\[
K_n \preceq_{\mathrm{DEG}} \Delta_{n+1},
\]
proved in Theorem~4.1. Thus the continuant \(K_n\) is recovered as the
constant coefficient, with respect to the auxiliary parameter
\(\varepsilon\), of a consecutive Hankel determinant after an explicit
affine Laurent substitution. In particular, this is not merely an
appeal to the general universality of the continuant: it is a direct
and quantitatively effective reduction from the continuant to the
specific Hankel family used later in the proof.

The second new reduction is the Euclid-to-Hankel implication proved in
Theorem~1.3:
\[
\mathrm{CERS}\in \mathrm{PAC}\text{-}AC^0_F
\quad\Longrightarrow\quad
(\Delta_N)_{N\geq 1}\in AC^0_F.
\]
It is obtained by identifying a middle fixed-bound principal
subresultant coefficient of the pair
\[
\left(x^{2n},
a_0x^{2n-1}+a_1x^{2n-2}+\cdots+a_{2n-2}x+1\right)
\]
with a fixed sign times \(\Delta_n\), and then reconstructing this
subresultant from fixed leading-coefficient coordinates of the
complete Euclidean remainder output on a nonempty normal locus.

The remaining ingredients are taken from earlier work. The
universality of the continuant for polynomial-size arithmetic
formulas, together with the matrix gadgets underlying its
quantitative form, comes from Bringmann, Ikenmeyer, and
Zuiddam~\cite{BIZ}. Formula balancing is supplied by
Brent~\cite{Brent}. The constant-depth separation used to derive the
continuant lower bound is the adjacent-depth hierarchy of Limaye,
Srinivasan, and Tavenas~\cite{LST}. Finally, the generic removal of
select gates and the constant-depth elimination of division gates use
the results of Andrews and Wigderson~\cite{AW}. Accordingly, the two
principal contributions of the present paper are the explicit
continuant--Hankel degeneration and the reduction from complete
Euclidean remainders to consecutive Hankel determinants; the cited
results provide the complexity-theoretic framework in which these two
reductions yield the main lower bound.

\paragraph{Note added.}
After completion of this manuscript, I learned that Andrews and Wigderson
had independently obtained the same main result. The two works were developed
independently. We plan to prepare a joint final version for submission to a
journal and conference.

\paragraph{Step 1: the continuant is not in arithmetic $\AC^0$.}
The continuant family is defined by
\[
K_0=1,\qquad K_1(x_1)=x_1,\qquad
K_n(x_1,\ldots,x_n)=x_nK_{n-1}+K_{n-2}.
\]
Its relevance comes from its universality for the formula class $\VPe$.
Informally, $\VPe$ consists of polynomial families that can be computed by
division-free arithmetic formulas of polynomial size; unlike circuits,
formulas do not allow a previously computed value to be reused by several
later gates.  The formal definition appears in \cref{def:vpe}.

Bringmann, Ikenmeyer, and Zuiddam \cite{BIZ} show that every family in this class can be
obtained from a polynomially larger continuant by an appropriate degeneration.
In the effective form established in
\cref{sec:continuant-simulation}, every $(f_n)\in\VPe$ satisfies
\[
(f_n)\DegRed(K_n),
\]
up to a polynomially bounded reindexing.  Thus, if the continuant had
polynomial-size circuits of some fixed depth $d$, the preservation property of
effective degenerations would imply that every family in $\VPe$ is computable
by polynomial-size circuits of depth $d+O(1)$.

This conclusion contradicts the adjacent-depth hierarchy.  Indeed, every
polynomial-size circuit of fixed depth can be unfolded into a
polynomial-size formula, and hence computes a family in $\VPe$.  The hierarchy
theorem of Limaye, Srinivasan, and Tavenas\cite{LST} supplies fixed-depth
polynomial-size circuit families that cannot be computed at the smaller
constant depth forced by the assumed upper bound for the continuant.
Consequently,
\[
(K_n)\notin\AC^0_{\F}.
\]

\paragraph{Step 2: the continuant occurs inside a Hankel determinant.}
For $N\geq1$, let
\[
\Delta_N(a_0,\ldots,a_{2N-2})
 :=\det(a_{i+j})_{0\leq i,j<N}
\]
be the consecutive Hankel determinant.  Our first new reduction is the
explicit effective degeneration
\[
K_n\DegRed\Delta_{n+1}.
\]
The substitution assigns carefully chosen powers of $\eps$ to the Hankel
anti-diagonals.  In the determinant expansion, the terms of lowest weight are
exactly the permutations that are products of disjoint adjacent
transpositions.  These permutations are naturally identified with matchings
of a path, and the corresponding matching polynomial is precisely $K_n$.

If the Hankel determinant were in $\AC^0_{\F}$, the degeneration would put the
continuant in $\AC^0_{\F}$ as well.  Step~1 therefore implies the following
intermediate lower bound.

\begin{theorem}[Hankel lower bound]\label[theorem]{thm:hankel-lb}
Over every field of characteristic zero, the family $(\Delta_N)_{N\geq1}$ is
not in arithmetic $\AC^0_{\F}$.
\end{theorem}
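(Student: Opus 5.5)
The argument is by contradiction and follows Step~2: assume $(\Delta_N)_{N\geq1}\in\AC^0_{\F}$ and push this through the effective degeneration $K_n\DegRed\Delta_{n+1}$ to obtain $(K_n)\in\AC^0_{\F}$, contradicting Step~1. The work has two parts: building the degeneration explicitly, and checking that the transfer preserves polynomial size and constant depth.

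\textbf{The degeneration.} The substitution assigns to the anti-diagonal $a_k$ of $\Delta_{n+1}$ a Laurent monomial in $\eps$ times a variable or a constant. Take the diagonal entries $a_{2i}=\eps^{w_{2i}}$ and the first superdiagonal entries $a_{2i+1}=\eps^{w_{2i+1}}x_{i+1}$ (up to the reindexing needed to hit $x_1,\ldots,x_n$). The remaining entries $a_k$ get large positive exponents, or are set to $0$ where the Hankel constraint allows. The matrix is then the Jacobi-type tridiagonal pattern
\[
\begin{pmatrix}\eps^{w_0}& x_1\eps^{w_1}&\cdots\\ x_1\eps^{w_1}&\eps^{w_2}&\cdots\\ \vdots&&\ddots\end{pmatrix}
\]
plus higher-order garbage. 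The real difficulty is that the Hankel constraint forces each anti-diagonal to be shared, so $x_1\eps^{w_1}$ also sits at position $(0,1)$ and its symmetric partner, while $a_2$ appears both on the diagonal and at $(0,2)$.

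I would choose the weights as a convex function of the index, e.g.\ $w_k$ growing quadratically. The weight of a permutation $\sigma$ is $\sum_i w_{i+\sigma(i)}$, and convexity makes it minimized exactly by involutions built from fixed points and adjacent transpositions. Any non-adjacent displacement $|i-\sigma(i)|\geq2$ then strictly increases the weight, by at least $1$ after integer scaling, by a Jensen/rearrangement argument on the multiset $\{i+\sigma(i)\}$.

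Among the minimal permutations, a matching $M$ of the path contributes $\prod_{(i,i+1)\in M}(-x_{i+1}^2)$ times a power of $\eps$. To make all of these share the same $\eps$-power, I impose $2w_{2i+1}=w_{2i}+w_{2i+2}$; that is, $w$ is affine on adjacent triples, and strict convexity enters only at distance two. To get $K_n$ rather than the signed matching polynomial in the squares $x^2$, I will instead place $x_i$ on diagonal entries and $1$ (times $\eps$-powers) on the off-diagonal entries. This uses the identity that the tridiagonal determinant with diagonal $x_i$ and off-diagonal products $-1$ equals the continuant; to fix the sign, I scale the off-diagonal by a constant $c$ with $c^2=-1$, or adjoin it, since characteristic zero allows working over an extension and descending.

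Multiplying by the global power $\eps^{-W_{\min}}$, distributed as Laurent factors on the substituted forms, normalizes the lowest term to $\eps^0$. The error degree $E$ is bounded by the maximum weight minus the minimum, which is $O(n^3)$ and hence polynomial. This step is the main obstacle: reconciling the shared anti-diagonal entries with the strict separation of non-adjacent permutations. If the pure-weight approach collides, I would fall back to adding a fresh auxiliary variable per anti-diagonal and specializing it.

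\textbf{Transfer.} Given depth-$d$ polynomial-size circuits for $\Delta_N$, the interpolation formula from the introduction, with $E+1$ parallel copies evaluated at $\alpha_j$ and combined with the constants $\lambda_j$, gives depth-$(d+O(1))$ polynomial-size circuits for $K_n$. Substituting affine forms adds one layer. This contradicts $(K_n)\notin\AC^0_{\F}$ from Step~1, which completes the proof of the theorem.
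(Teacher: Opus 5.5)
You have the paper's overall plan: an explicit degeneration $K_n\DegRed\Delta_{n+1}$, interpolation as in \cref{lem:degeneration-preserves-depth}, then \cref{thm:continuant-lower-bound}. Your tie condition $2w_{2i+1}=w_{2i}+w_{2i+2}$ and the decision to put the $x_i$ on the diagonal are also exactly right. The step that fails is the choice of \emph{convex} weights. For every $\sigma\in S_N$ one has $\sum_i(i+\sigma(i))=N(N-1)$. So Jensen's inequality for convex $w$ gives $\sum_i w_{i+\sigma(i)}\geq N\,w_{N-1}$, with equality at the reversal $i\mapsto N-1-i$. The identity and the adjacent-transposition involutions have maximally spread anti-diagonal indices. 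They can tie with the reversal only if $w$ is affine on $[0,2N-2]$, and then every permutation ties. So the Jensen/rearrangement argument you invoke proves the opposite of what you need.

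Concretely, take $N=3$ and $(w_0,\ldots,w_4)=(0,0,0,1,2)$, which is convex and satisfies your tie condition. The three path-matching permutations have weight $2$, the two $3$-cycles have weight $1$, and the reversal $(0\;2)$ has weight $3w_2=0$. So the lowest-order term is $-a_2^3$, not a continuant. Nor can the off-tridiagonal positions be suppressed separately. Each $a_{2i}$ occupies $(i,i)$ and each $a_{2i+1}$ occupies $(i,i+1)$, so there are no ``remaining'' anti-diagonals to set to $0$ or to give large exponents. The separation must come entirely from the weight function, and that is exactly where the sign error bites.

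The missing idea is to use \emph{concave} weights, chosen so that the separation is exact. The paper takes $\tilde w_s=M-\lceil s^2/2\rceil$ and uses the identity $\lceil (i+j)^2/2\rceil=i^2+j^2-\lfloor (i-j)^2/2\rfloor$. The exponent of the $\pi$-term is then a constant plus $P(\pi)=\sum_i\lfloor (i-\pi(i))^2/2\rfloor$. This vanishes exactly when $|i-\pi(i)|\leq 1$ for all $i$, and is at least $2$ otherwise (\cref{lem:hankel-term-weight}). These weights satisfy your tie condition automatically.

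You also leave unexplained the role of the extra row of $\Delta_{n+1}$. The paper puts $\sigma_n\varepsilon^{-E_n}$ on the corner entry $a_{2n}$, which occurs only at position $(n,n)$. This single choice normalizes the minimal exponent to $0$ and raises every permutation with $\pi(n)\neq n$ by $E_n>0$. The survivors are therefore the matchings of $P_n$ rather than of $P_{n+1}$. Your ``distribute $\varepsilon^{-W_{\min}}$'' step needs care. A rescaling $a_k\mapsto\varepsilon^{c+\alpha k}a_k$ preserves relative weights but shifts every exponent by $N(c+\alpha(N-1))$, so it requires $N\mid W_{\min}$. For the circuit transfer alone, you could instead interpolate out the coefficient of $\varepsilon^{W_{\min}}$.

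Your sign device $c^2=-1$ is fine and needs no descent, since \cref{thm:continuant-lower-bound} holds over $\F(\sqrt{-1})$. The paper avoids it by putting $(-1)^i$ on the diagonal and checking that the total sign is $\sigma_n^2=1$.
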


\paragraph{Step 3: complete Euclidean remainders would compute Hankel
determinants.}
The second new reduction connects $\Delta_n$ to the Euclidean algorithm.  For
the special pair
\[
F_n(x)=x^{2n},\qquad
G_a(x)=a_0x^{2n-1}+a_1x^{2n-2}+\cdots+a_{2n-2}x+1,
\]
a middle principal subresultant coefficient is, up to a fixed sign, exactly
\[
\Delta_n(a_0,\ldots,a_{2n-2}).
\]
On a nonempty Zariski-open set, the ordinary Euclidean remainder sequence of
this pair is normal: its degree drops by exactly one at each step.  On this
normal locus, the same principal subresultant is a simple product of the
leading coefficients of the successive remainders.  Those leading
coefficients occur at fixed output coordinates of the padded complete
remainder list.

It follows that a piecewise constant-depth circuit for the complete remainder
scheme would give a piecewise circuit agreeing with $\Delta_n$ on a nonempty
open set.  Generic removal of the select gates produces an ordinary rational
circuit computing $\Delta_n$ identically, and constant-depth division
elimination then produces a division-free $\AC^0_{\F}$ circuit.  We therefore
obtain the implication

\begin{theorem}[Euclid-to-Hankel implication]\label[theorem]{thm:euclid-hankel}
Over every field of characteristic zero,
\[
\CERS\in\PAC^0_{\F}
\quad\Longrightarrow\quad
(\Delta_N)_{N\geq1}\in\AC^0_{\F},
\]
where $\CERS$ denotes the complete ordinary Euclidean remainder scheme for
monic inputs.
\end{theorem}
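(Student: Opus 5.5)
We prove \cref{thm:euclid-hankel} by specializing a hypothetical $\PAC^0_{\F}$ circuit for $\CERS$ to the pair $(F_n,G_a)$ of Step~3. On a Zariski-open set this specialized circuit can be read as computing $\Delta_n$, and generic select-removal plus division elimination then turn it into an ordinary division-free constant-depth circuit. In order, the steps are as follows.

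\textbf{Step (a): the subresultant identity.} First we prove the purely algebraic identity: the principal subresultant coefficient $\psc_{n}(F_n,G_a)$, in degree $n$ for the degree bounds $(2n,2n-1)$, equals $\pm\Delta_n(a_0,\ldots,a_{2n-2})$. Our approach is to reverse polynomials: $x^{2n-1}G_a(1/x)=1+a_{2n-2}x+\cdots$. We then interpret the Sylvester-type submatrix defining $\psc_n$ as the linear system for Pad\'e-type approximation of $x^{2n}$ modulo $G_a$. Equivalently, we row-reduce the Sylvester matrix using the rows of $x^{2n}$, which are unit vectors. Deleting those rows and columns leaves an $n\times n$ block whose entries are the coefficients $a_{i+j}$ in Hankel pattern, up to a reversal permutation that contributes the fixed sign. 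We will also record that the input is monic after scaling. If monicity of both inputs is required, we replace $G_a$ by $x^{2n-1}+\cdots$ via the substitution $a_0=1$ and a change of variables; alternatively we use the homogeneity of $\Delta_n$ to absorb a leading coefficient, since $\Delta_n(ta)=t^n\Delta_n(a)$. This detail needs care but is routine.

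\textbf{Step (b): normality and leading coefficients.} Next we show that the locus where $\psc_j(F_n,G_a)\neq 0$ for all $j<2n-1$ is nonempty. A single explicit $a$ suffices, for instance one whose remainder sequence has consecutive degrees. By the fundamental theorem of subresultants, on this locus the Euclidean remainders $R_i$ have degrees $2n-i+1$. Moreover each $\psc_j$ is a fixed monomial in the leading coefficients $\lc(R_1),\ldots,\lc(R_{k})$ with integer exponents (positive and negative) and a fixed sign; this is the classical Habicht/Collins formula for normal sequences. In particular, $\Delta_n=\pm\prod_i \lc(R_i)^{e_i}$ on the normal locus. Each $\lc(R_i)$ is a fixed coordinate of the padded $\CERS$ output, because on the normal locus $R_i$ has known degree.

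\textbf{Step (c): circuit transfer.} Given a $\PAC^0_{\F}$ family for $\CERS$, we hardwire the input $(F_n,G_a)$ and read off the $\lc(R_i)$ coordinates. We then form the product $\pm\prod\lc(R_i)^{e_i}$ with one unbounded fan-in multiplication and divisions. The exponents are bounded by $O(n)$, so the powers cost constant depth via iterated multiplication with repeated inputs. The result is a polynomial-size, constant-depth piecewise circuit with divisions that agrees with $\Delta_n$ on a nonempty Zariski-open set. The divisions are legal there, since the $\lc(R_i)$ are nonzero on the normal locus. We then apply the Andrews--Wigderson generic select-removal, which follows from the Zariski-density of some branch: it yields an ordinary rational constant-depth circuit computing a rational function equal to $\Delta_n$ on an open set, hence identically. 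Finally we apply their constant-depth division elimination, which works over characteristic zero and for polynomial-degree outputs, to obtain $(\Delta_N)\in\AC^0_{\F}$. Indices $N$ are handled by $N=n$.

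\textbf{Main obstacle.} We expect the main obstacle to be Step~(c)'s interface with the precise Andrews--Wigderson hypotheses. The generic select-removal requires the restricted domain (the normal locus, a subset of $\F^{2n-1}$ embedded affinely into the $\CERS$ input space) to meet the region where some fixed branch is taken in a Zariski-dense way. We would first try to prove this by noting that the branch regions form a finite constructible partition, so one of them is dense in the irreducible affine-linear image. Division elimination also needs a point where all denominators are nonzero; this is supplied by the same density. Getting the sign and exponent pattern in Step~(b) exactly right is secondary, since any fixed monomial works.
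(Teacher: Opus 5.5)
Your proposal is correct and follows essentially the paper's route. The unit-vector rows of $x^{2n}$ reduce $\psc_n^{2n,2n-1}(F_n,G_a)$ to $(-1)^{\binom n2}\Delta_n$, and the input is made monic as $a_0^{-1}G_a$; your homogeneity remark is exactly the paper's scaling factor $a_0^{-n}$. On the normal locus $\psc_n=\lc(R_n)\prod_{i<n}\lc(R_i)^2$, which the paper proves directly rather than quoting the fundamental theorem of subresultants. The argument then finishes with generic select removal on an open set and Andrews--Wigderson division elimination.

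The one step you leave unproved is nonemptiness of the normal locus. The paper settles it by taking $a_s=\sum_{t=1}^n\lambda_t^s$ with distinct positive rationals $\lambda_t$, so that the leading Hankel minors are positive Gram determinants. Note also that only normality through $R_n$ is needed, with $\deg R_i=2n-i$ rather than $2n-i+1$.
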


Combining the Hankel lower bound with the Euclid-to-Hankel implication shows
that $\CERS\notin\PAC^0_{\F}$.  Since the complete remainder list is a
coordinate projection of the complete extended Euclidean output, the same
lower bound holds for the complete extended scheme.

\begin{figure}[ht]
\centering
\begin{tikzpicture}[
  node distance=1.15cm,
  proofbox/.style={
    draw,
    rounded corners,
    align=center,
    inner sep=7pt,
    text width=0.68\linewidth
  },
  proofarrow/.style={-{Latex[length=2.2mm]},thick}
]
\node[proofbox] (hierarchy)
  {Adjacent-depth hierarchy + continuant universality};
\node[proofbox,below=of hierarchy] (continuant)
  {$K_n\notin\AC^0_{\F}$};
\node[proofbox,below=of continuant] (hankel)
  {$K_n\DegRed\Delta_{n+1}$, hence $\Delta_n\notin\AC^0_{\F}$};
\node[proofbox,below=of hankel] (euclid)
  {$\CERS\in\PAC^0_{\F}\Longrightarrow\Delta_n\in\AC^0_{\F}$};
\node[proofbox,below=of euclid] (conclusion)
  {$\CERS\notin\PAC^0_{\F}$, and therefore
   $\CEEA\notin\PAC^0_{\F}$};

\draw[proofarrow] (hierarchy) -- (continuant);
\draw[proofarrow] (continuant) -- node[right=5pt] {effective degeneration} (hankel);
\draw[proofarrow] (hankel) -- node[right=5pt,align=left] {Euclid-to-Hankel\ implication} (euclid);
\draw[proofarrow] (euclid) -- node[right=5pt] {contradiction} (conclusion);
\end{tikzpicture}
\caption{The logical flow of the proof.  The first two steps establish an
ordinary arithmetic $\AC^0$ lower bound for consecutive Hankel determinants;
the last step transfers this lower bound to the complete Euclidean schemes.}
\label{fig:proof-overview}
\end{figure}

\section{Computational models and output conventions}\label{sec:models}

\subsection{Rational and division-free arithmetic circuits}

\begin{definition}[Rational arithmetic circuit]\label[definition]{def:rational-circuit}
A rational arithmetic circuit over $\F$ is a directed acyclic graph whose
input gates are variables or constants from $\F$, and whose internal gates are
labelled by $+$, $\times$, or division.  At the formal level, the rational
function used as the divisor at every division gate is required not to be
identically zero.  The size is the number of wires and the depth is the maximum
length of an input--output path.

When such a circuit is asserted to compute a function on a set
$D\subseteq\F^N$, every division gate is additionally required to be defined at
every point of $D$.
\end{definition}

For polynomial families, $\AC^0_{\F}$ denotes polynomial-size,
constant-depth, division-free arithmetic circuits with unbounded fan-in
addition and multiplication.  Constants from $\F$ are nonuniformly available,
and subtraction is implemented using the constant $-1$.

We will temporarily use rational circuits.  Strassen's original
division-elimination theorem~\cite{Strassen} shows how divisions can be
removed from arithmetic computations of polynomials.  We use the
depth-preserving constant-depth formulation proved by Andrews and
Wigderson~\cite[Theorem~2.13]{AW}: a rational
$\mathrm{AC}^{0}$ circuit computing a polynomial of polynomially bounded
degree can be converted into a division-free $\mathrm{AC}^{0}$ circuit
with polynomial size and constant depth.  In the same work, polynomial
division with remainder is given in
\cite[Theorem~2.12]{AW}.

\subsection{Piecewise arithmetic circuits: the Andrews--Wigderson model}

For $m\geq1$, define
\[
\operatorname{select}(y_1,\ldots,y_m)=
\begin{cases}
y_1,&y_1\ne0,\\
y_2,&y_1=0,\ y_2\ne0,\\
\ \vdots\\
y_m,&y_1=\cdots=y_{m-1}=0,\ y_m\ne0,\\
0,&y_1=\cdots=y_m=0.
\end{cases}
\]

\begin{definition}[Piecewise arithmetic circuit]\label[definition]{def:piecewise}
A piecewise arithmetic circuit over $\F$ is a rational arithmetic circuit
augmented by internal gates labelled $\operatorname{select}$, with arbitrary
fan-in.  As in Andrews and Wigderson's \cite[Definitions~2.3]{AW}, no division by zero is
permitted on any input in the declared domain.  The size and depth are defined
as for ordinary arithmetic circuits.  A family belongs to $\PAC^0_{\F}$ if it
is computed by polynomial-size, constant-depth piecewise arithmetic circuits.
\end{definition}

This is exactly the internal select-gate model of
\cite[Definitions~2.3 and~2.4]{AW}.  In intermediate reductions we will also
speak of a \emph{formal piecewise rational circuit on a nonempty open set
$U$}: this means that every division gate is defined at every point of $U$,
without assigning any value outside $U$.  This restricted-domain usage is only
an intermediate device; the assumed circuit for $\CERS$ is a genuine
piecewise circuit on its full monic-input domain.

Ordinary composition and coordinate projection preserve polynomial size and
constant depth.  The same is true in the select-gate model.  In particular, 
substituting a rational expression such as \(a_0^{-1}\)
produces a formal piecewise rational circuit on the nonempty Zariski-open set
\[
\{a\in F^N : a_0\neq 0\}.
\]
We denote this set by \(D(a_0)\).

\subsection{Generic extraction on a prescribed open set}

\begin{lemma}[Nonvanishing over an infinite field]\label[lemma]{lem:nonvanishing}
Let $\F$ be infinite.  If $p\in\F[z_1,\ldots,z_N]$ is nonzero, then some
$a\in\F^N$ satisfies $p(a)\ne0$.  Consequently, a finite intersection of
nonempty basic Zariski-open subsets of $\F^N$ is nonempty.
\end{lemma}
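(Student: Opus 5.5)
The first claim is the standard fact that a nonzero polynomial over an infinite field is not identically zero as a function. I would prove it by induction on the number of variables $N$. For $N=1$, a nonzero $p\in\F[z_1]$ of degree $d$ has at most $d$ roots, because each root $c$ splits off a linear factor $z_1-c$ by division with remainder and degrees add in the integral domain $\F[z_1]$. Since $\F$ is infinite, some $a_1\in\F$ is not a root.

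For the inductive step I would write
\[
p=\sum_{k=0}^{d} p_k(z_1,\ldots,z_{N-1})\,z_N^k
\]
with $p_d\neq0$. By the inductive hypothesis, choose $(a_1,\ldots,a_{N-1})$ with $p_d(a_1,\ldots,a_{N-1})\neq0$. Then $p(a_1,\ldots,a_{N-1},z_N)$ is a nonzero univariate polynomial in $z_N$. The base case supplies $a_N$ with $p(a)\neq0$.

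For the consequence, I would take a basic open set to be $D(p)=\{a\in\F^N:p(a)\neq0\}$ with $p\neq0$. Nonemptiness of such a set is exactly the first claim, and
\[
D(p_1)\cap\cdots\cap D(p_r)=D(p_1\cdots p_r).
\]
The product $p_1\cdots p_r$ is nonzero because $\F[z_1,\ldots,z_N]$ is an integral domain; if needed I would justify this quickly by comparing leading monomials in a lexicographic order. Applying the first claim to this product finishes the proof.

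There is no real obstacle here. The only point needing care is the integral-domain property, which the nonvanishing of the product relies on, together with the convention that ``basic open'' means $D(p)$ for a nonzero $p$. If the paper instead means arbitrary nonempty Zariski-open sets, I would reduce to this case: each such set contains some nonempty basic $D(p_i)$, and the intersection of these basic sets is contained in the original intersection.
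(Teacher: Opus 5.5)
Your proof is correct and follows essentially the same route as the paper. The paper also inducts on $N$ by specializing the first $N-1$ variables so that a nonzero coefficient of a power of $z_N$ survives, and it also uses $D(q_1)\cap\cdots\cap D(q_t)=D(q_1\cdots q_t)$. Your explicit appeal to the integral-domain property, to guarantee that the product is nonzero, makes precise a step the paper leaves implicit.
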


\begin{proof}
The first assertion is proved by induction on $N$.  For $N=1$, a nonzero
polynomial has only finitely many roots.  For $N>1$, write
\[
p=\sum_{k=0}^d p_k(z_1,\ldots,z_{N-1})z_N^k
\]
with some $p_k\ne0$.  By induction, specialize $z_1,\ldots,z_{N-1}$ so that
this coefficient remains nonzero.  The resulting univariate polynomial is
nonzero and therefore does not vanish on all of $\F$.

A basic open set has the form
\[
D(q):=\{z\in\F^N:q(z)\ne0\}.
\]
If $D(q_1),\ldots,D(q_t)$ are nonempty, then every $q_i$ is nonzero and
\[
\bigcap_{i=1}^tD(q_i)=D(q_1\cdots q_t)
\]
is nonempty by the first assertion.
\end{proof}

Every nonempty Zariski-open $U\subseteq\F^N$ contains a nonempty basic open
set.  Indeed, if $a\in U$ and $\F^N\setminus U=V(I)$, then some $u\in I$
satisfies $u(a)\ne0$, and $D(u)\subseteq U$.

The next lemma is the prescribed-open-set variant of the select-removal
argument in \cite[Lemma~2.5]{AW}.  The extra domain bookkeeping is needed
because the monic normalization used later is defined only on $D(a_0)$.

\begin{lemma}[Generic extraction of an ordinary circuit]\label[lemma]{lem:generic-extraction}
Let $\F$ be infinite.  Let $\Phi$ be a scalar-output formal piecewise rational
circuit on a nonempty Zariski-open set $U\subseteq\F^N$.  Suppose that the
function computed by $\Phi$ agrees on $U$ with a rational function
$h\in\F(z_1,\ldots,z_N)$.  Then there is an ordinary rational arithmetic
circuit $C$, obtained by replacing every select gate by one of its inputs or by
the constant zero, such that $C=h$ identically in
$\F(z_1,\ldots,z_N)$.  The size and depth of $C$ do not exceed those of
$\Phi$.
\end{lemma}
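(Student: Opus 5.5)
The plan is to process the gates of $\Phi$ in a topological order and, gate by gate, replace each select gate by one fixed input or by zero. The invariant to maintain is the following. After all gates of depth at most $k$ have been handled, there is a nonempty basic open set $D(q_k)\subseteq U$ and an ordinary rational circuit $C_k$ on those gates such that two things hold: every gate of $C_k$ computes a rational function, and on $D(q_k)$ it agrees pointwise with the value of the corresponding gate of $\Phi$.

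Start with $D(q_0)\subseteq U$, which exists by the remark after \cref{lem:nonvanishing}. The non-select gates need no modification. For a division gate $u/v$, the value of $v$ under $\Phi$ is nonzero at every point of $U$, hence on $D(q)$. By the invariant this value equals the rational function $v_C$, so $v_C$ is not identically zero. Therefore the division is a legal formal division in $C$, and pointwise agreement persists. One should note that $v_C$ is defined on $D(q)$: I would track its numerator and denominator and multiply $q$ by the denominators, so that every rational function involved is defined pointwise on the current open set.

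The key step is the select gate $\operatorname{select}(y_1,\ldots,y_m)$, whose inputs already have rational-function representatives $r_1,\ldots,r_m$. Let $j$ be the least index with $r_j\not\equiv0$. If no such index exists, replace the gate by the constant $0$. In that case all $y_i$ vanish on $D(q)$, so the select outputs $0$ there and agreement holds. Otherwise, replace the gate by its $j$-th input. For $i<j$ we have $r_i\equiv 0$, so $y_i=0$ on $D(q)$. Write $r_j=p/s$ with $p\neq0$ and $s$ nonvanishing on $D(q)$, and shrink the open set to $D(q\,p)$, which is nonempty by \cref{lem:nonvanishing}. On this smaller set the select gate outputs $y_j=r_j$, as required. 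Only finitely many shrinkings occur, one per gate, so the final set $D(q_{\mathrm{fin}})$ is nonempty. Replacing gates by inputs or by zero never increases size or depth.

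At the end, the output of $C$ is a rational function that agrees with $h$ on the nonempty open set $D(q_{\mathrm{fin}}\cdot\mathrm{den}(h))$. Their difference, after clearing denominators, is a polynomial vanishing on a nonempty basic open set $D(w)$. Multiplying it by $w$ gives a polynomial vanishing everywhere, so by \cref{lem:nonvanishing} that product is zero, and hence $C=h$ in $\F(z)$.

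The main obstacle is the bookkeeping showing that each gate's value is defined and equals a fixed rational function pointwise on the shrinking open set, particularly for divisions whose denominators are themselves quotients. Controlling this through explicit numerator/denominator tracking and multiplying them into $q$ should suffice.
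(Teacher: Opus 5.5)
Your proposal is correct and is essentially the paper's proof: process the select gates in topological order, replace each by its first child whose rational function is nonzero (or by the constant $0$), shrink the basic open set by that child's numerator and denominator, and conclude $C=h$ by the density argument from \cref{lem:nonvanishing}. Your extra bookkeeping for division gates, namely that each divisor in $C$ is a nonzero rational function and that pointwise agreement persists, spells out what the paper packages into its invariant that every gate of $\Phi^{(q)}$ is defined on $U_q$.
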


\begin{proof}
List the select gates of $\Phi$ as $v_1,\ldots,v_s$ in a topological order.
We construct circuits
\[
\Phi=\Phi^{(0)},\Phi^{(1)},\ldots,\Phi^{(s)}
\]
and nonempty open sets
\[
U=U_0\supseteq U_1\supseteq\cdots\supseteq U_s
\]
so that after stage $q$: the first $q$ select gates have been removed; every
gate of $\Phi^{(q)}$ is defined on $U_q$; and $\Phi^{(q)}$ agrees with $\Phi$
on $U_q$.

Suppose the construction is complete through stage $q-1$, and write
\[
v_q=\operatorname{select}(w_1,\ldots,w_m).
\]
Every select gate in the child subcircuits has already been removed, so $w_j$
computes an ordinary rational function $r_j$.

If every $r_j$ is the zero rational function, all children evaluate to zero on
$U_{q-1}$.  Replace $v_q$ by zero and take $U_q=U_{q-1}$.

Otherwise, let $j$ be the least index for which $r_j\ne0$, and write
\[
r_j=A_j/B_j,
\qquad A_j,B_j\in\F[z_1,\ldots,z_N],
\qquad A_jB_j\ne0.
\]
Choose a nonzero $u$ with $D(u)\subseteq U_{q-1}$ and put
\[
U_q:=D(uA_jB_j).
\]
By \cref{lem:nonvanishing}, this is a nonempty open subset of $U_{q-1}$.
For $k<j$, the function $r_k$ is identically zero, while $r_j$ is defined and
nonzero throughout $U_q$.  Thus on $U_q$ the select gate equals its child
$w_j$.  Replace $v_q$ by that child.  Values at all downstream gates are
unchanged on $U_q$, so their division gates remain defined there.

After all select gates have been removed, let $C=\Phi^{(s)}$ and $W=U_s$.
Then $C=h$ on the nonempty open set $W$.  Write
\[
C-h=A/B
\]
with $A,B\in\F[z_1,\ldots,z_N]$ and $B\ne0$, and choose a nonzero $u$ with
$D(u)\subseteq W$.  On $D(uB)$, one has $A=0$.  If $A\ne0$, then $uAB$ is a
nonzero polynomial and \cref{lem:nonvanishing} supplies a point of $D(uB)$ at
which $A\ne0$, a contradiction.  Therefore $A=0$ and $C=h$ identically.

Each replacement deletes a select gate and uses an existing child subcircuit
or the constant zero.  Hence neither size nor depth increases.
\end{proof}

\begin{remark}
The extraction lemma is existential. It does not provide an 
identity-testing algorithm for choosing the nonzero child of a select gate.
\end{remark}

\subsection{The output conventions}\label{subsec:remainder-output}

Fix a degree bound $m$.  Every polynomial is encoded by its coefficient vector
in the ordered basis
\[
x^m,x^{m-1},\ldots,1,
\]
with leading zeros when its actual degree is smaller.

A complete ordinary remainder output contains exactly $m+1$ polynomial slots.
The nonzero remainders $R_0,\ldots,R_t$ occupy the first $t+1$ slots; the
terminal zero remainder is not stored; and all remaining slots are zero.  On a
locus where $\deg R_i=m-i$, the leading coefficient $\lc(R_i)$ is therefore a
fixed coordinate of the $i$th slot.  Any input-independent rearrangement or
zero-padding convention is interconvertible with this one by depth-one
coordinate maps.

\begin{definition}[Complete remainder and extended-Euclidean families]\label[definition]{def:cers-family}
For integers $m>r\geq0$, let $\mathcal M_{m,r}$ consist of the monic pairs
\[
\left(
 x^m+f_1x^{m-1}+\cdots+f_m,
 x^r+g_1x^{r-1}+\cdots+g_r
\right).
\]
For $(f,g)\in\mathcal M_{m,r}$, let
\[
R_0=f,\qquad R_1=g,\qquad
R_{i-1}=Q_iR_i+R_{i+1},\qquad \deg R_{i+1}<\deg R_i,
\]
be the ordinary Euclidean sequence, terminated immediately before the first
zero remainder. For a polynomial $S$ of degree at most $m$, let
\[
\operatorname{coeff}_m(S)
  :=\bigl([x^m]S,[x^{m-1}]S,\ldots,[x^0]S\bigr)\in F^{m+1}.
\]
For any list of polynomials $S_1,\ldots,S_v$, where $v\leq m+1$
and every $S_i$ has degree at most $m$, define
\[
\operatorname{Pad}_m(S_1,\ldots,S_v)
 :=
 \bigl(
   \operatorname{coeff}_m(S_1),\ldots,
   \operatorname{coeff}_m(S_v),
   \underbrace{\operatorname{coeff}_m(0),\ldots,
               \operatorname{coeff}_m(0)}_{m+1-v\text{ slots}}
 \bigr).
\]

Write the nonzero Euclidean remainders as
\[
R_0,R_1,\ldots,R_t.
\]
Thus the computed quotients are $Q_1,\ldots,Q_t$, with
\[
R_{i-1}=Q_iR_i+R_{i+1}
        \qquad (1\leq i<t),
\]
and the final exact division is
\[
R_{t-1}=Q_tR_t.
\]
The complete remainder map is
\[
\mathrm{CERS}_{m,r}(f,g)
   :=\operatorname{Pad}_m(R_0,\ldots,R_t).
\]

For the extended output, define
\[
(U_0,V_0)=(1,0),
\qquad
(U_1,V_1)=(0,1),
\]
and, for $1\leq i<t$, define
\[
U_{i+1}=U_{i-1}-Q_iU_i,
\qquad
V_{i+1}=V_{i-1}-Q_iV_i.
\]
Then
\[
U_if+V_ig=R_i
\qquad (0\leq i\leq t).
\]
The complete extended-Euclidean map is the concatenation
\[
\begin{split}
\mathrm{CEEA}_{m,r}(f,g):={}&
\Bigl(
 \operatorname{Pad}_m(Q_1,\ldots,Q_t),\\
&\operatorname{Pad}_m(R_0,\ldots,R_t),\\
&\operatorname{Pad}_m(U_0,\ldots,U_t),\\
&\operatorname{Pad}_m(V_0,\ldots,V_t)
\Bigr).
\end{split}
\]
Here each polynomial is represented by its coefficient vector in the
fixed ambient basis
\[
x^m,x^{m-1},\ldots,1.
\]

We record the degree bounds that make all four applications of
$\operatorname{Pad}_m$ well defined. Put
\[
d_i:=\deg R_i\qquad (0\leq i\leq t).
\]
Since the remainder degrees decrease strictly, one has $t\leq m$, so
each displayed list contains at most (m+1) polynomials. 

Moreover,
\[
\begin{aligned}
\deg Q_i
  &= d_{i-1}-d_i \leq m
  && (1\leq i\leq t),\\
\deg U_i
  &\leq d_1-d_{i-1}\leq m
  && (2\leq i\leq t),\\
\deg V_i
  &\leq d_0-d_{i-1}\leq m
  && (1\leq i\leq t).
\end{aligned}
\]
These bounds follow directly by induction from the extended Euclidean
recurrences, with the convention \(\deg 0=-\infty\).

The initial polynomials $U_0,U_1,V_0,V_1$ also have degree at most
$m$, and of course every remainder $R_i$ has degree at most $m$.
Thus every polynomial occurring in the quotient, remainder, and
Bézout-coefficient lists lies in the ambient coefficient space used by
$\operatorname{Pad}_m$. With this convention, $\mathrm{CERS}_{m,r}$ is obtained from
$\mathrm{CEEA}_{m,r}$ by a fixed coordinate projection.

We write $\CERS\in\PAC^0_{\F}$ (and analogously for $\CEEA$) if constants
$c,D$ exist, independent of $m$ and $r$, such that every corresponding map has
a piecewise circuit of size at most $m^c$ and depth at most $D$.
\end{definition}

\section{Continuant Universality and Constant-Depth Lower Bounds}\label{sec:Contin}

\subsection{Formula families and the continuant}

\begin{definition}[The formula class $\VPe$]\label[definition]{def:vpe}
A polynomial family $(f_n)_{n\geq1}$ belongs to $\VPe$ if some polynomial
$p\in\N[X]$ bounds the size of a fan-in-two, division-free arithmetic formula \(\Phi_n\)
for $f_n$ by $p(n)$.  
The leaves of \(\Phi_n\), which are its input nodes, are labelled by variables or constants from \(F\); 
every internal node is labelled by \(+\) or \(\times\); and the value computed at the unique root is the output polynomial \(f_n\).
\end{definition}

All polynomial bounds in this paper refer to the family index.  In the
nonuniform algebraic model, a field constant is an atomic label of unit
syntactic cost.

\begin{definition}[Continuant]\label[definition]{def:continuant}
The continuant polynomials are defined by
\[
K_0=1,\qquad K_1(x_1)=x_1,
\]
and, for $n\geq2$,
\[
K_n(x_1,\ldots,x_n)
=x_nK_{n-1}(x_1,\ldots,x_{n-1})
 +K_{n-2}(x_1,\ldots,x_{n-2}).
\]
\end{definition}

It turns out that the continuant also admits a useful combinatorial
description in terms of matchings of a path. Let \[ P_n=1-2-\cdots-n \] be the path graph on the vertex set \(\{1,\ldots,n\}\). 
Denote by \(\Match(P_n)\) the set of all matchings of \(P_n\), that is, all sets \(M\) of pairwise vertex-disjoint edges of \(P_n\). For \(M\in\Match(P_n)\), let \(V(M)\) be the set of vertices incident with the edges of \(M\). Then \[ K_n(x_1,\ldots,x_n) = \sum_{M\in\Match(P_n)} \prod_{i\notin V(M)} x_i. \] 
Thus each matching contributes the product of the variables associated with the vertices left uncovered by that matching.

Also, for
\[
Q(z)=\begin{pmatrix}z&1\\1&0\end{pmatrix},
\]
a direct induction gives
\begin{equation}\label{eq:continuant-matrix}
K_n(z_1,\ldots,z_n)
=\bigl(Q(z_n)\cdots Q(z_1)\bigr)_{1,1}.
\end{equation}

\subsection{Effective Laurent degenerations}\label{sec:degenerations}

An \emph{affine Laurent form} in $x=(x_1,\ldots,x_m)$ is
\[
\ell(\eps,x)=c_0(\eps)+\sum_{j=1}^m c_j(\eps)x_j,
\qquad c_j(\eps)\in\F[\eps,\eps^{-1}].
\]

\begin{definition}[Effective polynomial-error Laurent degeneration]\label[definition]{def:degeneration}
For $f\in\F[x_1,\ldots,x_m]$ and $g\in\F[y_1,\ldots,y_t]$, write
$f\DegRed g$ if affine Laurent forms $\ell_1,\ldots,\ell_t$ and an integer
$E\geq0$ satisfy
\begin{equation}\label{eq:deg-def}
g(\ell_1(\eps,x),\ldots,\ell_t(\eps,x))
=f(x)+\sum_{r=1}^{E}\eps^rh_r(x)
\end{equation}
for polynomials $h_r\in\F[x]$.  Thus the substituted expression contains no
negative power of $\eps$, and its constant coefficient is $f$.

For polynomial families \((f_n)_{n\geq 1}\) and
\((g_m)_{m\geq 1}\), we write
\[
(f_n)\preceq_{\mathrm{DEG}}(g_m)
\]
if there exists a function \(m\colon\mathbb{N}\to\mathbb{N}\),
bounded by a polynomial in \(n\), such that for every \(n\),
\[
f_n\preceq_{\mathrm{DEG}}g_{m(n)},
\]
and the number of substituted forms, the error degree, the largest
absolute Laurent exponent, and the total syntactic description length
of the forms are all polynomially bounded in \(n\).
\end{definition}

\medskip
\noindent\textbf{Nonuniformity convention.}
The adjective \emph{effective} is used here in the nonuniform
syntactic sense appropriate to algebraic circuit complexity.  Namely,
for every family index $n$, the required Laurent forms and the
reindexing are allowed to depend on $n$, but their number, exponents,
error degree, and total descriptions must satisfy the polynomial bounds
specified above.  Definition~3.3 does not, by itself, assert the
existence of a uniform algorithm that produces these descriptions from
$n$.  The nonuniform formulation is sufficient for all circuit
reductions used in this paper.
\medskip

The computational motivation for this definition is that an effective
polynomial-error Laurent degeneration behaves as a constant-depth reduction.
Indeed, suppose that
\[
(f_n)\preceq_{\mathrm{DEG}}(g_n)
\]
and that \(g_{m(n)}\) is computed by polynomial-size circuits of fixed depth.
Given an input \(x\) for \(f_n\), substitute the prescribed Laurent forms into
a circuit for \(g_{m(n)}\).  After assigning any fixed nonzero value to
\(\varepsilon\), all Laurent coefficients become field constants, so the
specialized expression can be evaluated with only a constant increase in
depth.  Since the resulting polynomial in \(\varepsilon\) has polynomially
bounded degree, its constant coefficient \(f_n(x)\) can be recovered from
polynomially many such evaluations by interpolation.  Thus an efficient
constant-depth circuit for \(g\) yields one for \(f\), with only polynomial
size overhead and constant additional depth.  Equivalently, a lower bound for
\(f\) transfers through the degeneration to a lower bound for \(g\). 
We prove this formally in \cref{lem:degeneration-preserves-depth}.

\subsection{Quantitative continuant simulation}\label{sec:continuant-simulation}

Bringmann, Ikenmeyer, and Zuiddam prove that the continuant is complete
for formula families under polynomially bounded degenerations
\cite[Propositions~3.5 and~3.6 and Theorem~3.12]{BIZ}.
The following formulation records the explicit quantitative bounds in their
construction, together with the Laurent-exponent bound required for the
effective degeneration used here.

For an integer $k\geq 0$, we write
\[
B=O(\varepsilon^k)
\]
for a matrix
\[
B\in
\varepsilon^k\operatorname{Mat}_{2\times 2}
\bigl(F[\varepsilon,\mathbf{x}]\bigr).
\]
Accordingly, for
\[
M\in
\operatorname{Mat}_{2\times 2}
\bigl(F[\varepsilon,\varepsilon^{-1},\mathbf{x}]\bigr),
\]
the notation
\[
M+O(\varepsilon^k)
\]
denotes a matrix of the form $M+B$ with
\[
B\in
\varepsilon^k\operatorname{Mat}_{2\times 2}
\bigl(F[\varepsilon,\mathbf{x}]\bigr).
\]
In particular, the error matrix contains no negative powers of
$\varepsilon$. We say that the error has $\varepsilon$-degree at most
$E$ if it admits an expansion
\[
B=\sum_{r=k}^{E}\varepsilon^r B_r,
\qquad
B_r\in\operatorname{Mat}_{2\times 2}(F[\mathbf{x}]).
\]

As in \cite{BIZ}, we call each individual factor \(Q(\ell_i)\) 
\emph{primitive \(Q\)-matrix} or \emph{primitive factor}, and call
the affine Laurent form \(\ell_i\) its \emph{primitive label}.

\begin{remark}
The recursive \(Q\)-matrix gadgets used below are taken from Bringmann, Ikenmeyer, and Zuiddam~\cite{BIZ}.  
\Cref{prop:quant} supplements their construction with the quantitative bookkeeping needed here, 
namely polynomial bounds on the error degree, the Laurent exponents, and the total syntactic 
description length of the primitive labels.
\end{remark}

\begin{proposition}[Quantitative \(Q\)-matrix simulation]\label[proposition]{prop:quant}
Assume that $\operatorname{char}F\neq 2$. Let $f$ be computed by a
fan-in-two division-free arithmetic formula of depth $d$. Then there are
affine Laurent forms
\[
\ell_1(\varepsilon,\mathbf{x}),\ldots,
\ell_t(\varepsilon,\mathbf{x})
\]
such that
\[
Q(\ell_t)\cdots Q(\ell_1)
=
Q(f)+\sum_{r=1}^{E}\varepsilon^r A_r,
\qquad
A_r\in\operatorname{Mat}_{2\times 2}(F[\mathbf{x}]),
\]
where
\[
t\leq 45\cdot 9^d,
\qquad
E\leq 12\cdot 25^d.
\]
Moreover, every exponent of $\varepsilon$ occurring in a coefficient of
one of the forms $\ell_i$ has absolute value at most
\[
2\cdot 3^d,
\]
and the total syntactic description length of the list
$(\ell_1,\ldots,\ell_t)$ is polynomial in $45\cdot 9^d$ and $d$.
\end{proposition}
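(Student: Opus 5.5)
We proceed by structural induction on the formula, following the recursive $Q$-matrix gadgets of Bringmann, Ikenmeyer, and Zuiddam~\cite{BIZ}, and carry along at each node three quantities: the number $t(d)$ of primitive factors, the error degree $E(d)$, and the maximal absolute Laurent exponent $L(d)$. The invariant at a node computing $g$ at depth $d$ is a product of primitive factors equal to $Q(g)+O(\eps)$, with error polynomial in $\eps$ of degree at most $E(d)$ and no negative powers of $\eps$.

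For the base case, a leaf labelled by a variable or constant $c$ is the single primitive factor $Q(c)$, with $t=1$, $E=0$, $L=0$. Before the inductive steps, we record a few exact identities:
\[
Q(a)Q(0)Q(b)=Q(a+b),\qquad Q(0)^2=I,
\]
together with a scaling gadget realizing $\operatorname{diag}(\lambda,\lambda^{-1})$ as a product of a constant number of primitive $Q$-factors with constant labels; this gadget uses $\operatorname{char}F\neq2$, through the entries $\pm1$ and $1/2$. Conjugating by it gives
\[
\operatorname{diag}(\lambda,\lambda^{-1})\,Q(z)\,\operatorname{diag}(\lambda^{-1},\lambda)=\begin{pmatrix}\lambda^2 z&1\\1&0\end{pmatrix}\cdot(\text{adjustments}),
\]
and with $\lambda=\eps^{\pm1}$ this produces the rescalings $Q(\eps^{\pm k}z)$. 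These rescalings are where Laurent exponents enter.

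For the inductive steps, addition $g=g_1+g_2$ is handled exactly by the first identity: if $P_i=Q(g_i)+\eps B_i$, then $P_1Q(0)P_2=Q(g_1+g_2)+\eps(\dots)$. No new negative powers appear, and the error degree is at most $E_1+E_2$. Multiplication $g=g_1g_2$ follows the BIZ gadget, which expresses $Q(g_1g_2)$ as a product of boundedly many copies of rescaled $P_1,P_2$ (for instance via $\eps$-rescaled copies $Q(\eps g_1)$, $Q(\eps^{-1}g_2)$ interleaved with constant factors) whose lowest-order term is $Q(g_1g_2)$. Because a negative power $\eps^{-1}$ multiplies an error of the form $\eps B$, we must argue that each error term sits inside a conjugation that raises its order enough to compensate. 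To do this, we count $\eps$-orders explicitly in the gadget: using at most $9$ copies of subgadgets of the children gives $t(d)\le 9\,t(d-1)+O(1)$, hence $t\le45\cdot 9^d$. Products of up to $5$ error-carrying factors, each shifted by a bounded power, give $E(d)\le 25E(d-1)+O(1)$, hence $E\le12\cdot25^d$. The rescalings compose at most three-fold per level, giving $L(d)\le 3L(d-1)+2$, hence $L\le2\cdot3^d$. Finally, the syntactic length is at most $t$ times the maximal length of a label. Each label is a monomial $c\,\eps^k$ times a variable, or a constant, so its length is $O(\log$ of the exponent$)=O(d)$, which gives a bound polynomial in $45\cdot9^d$ and $d$.

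The main obstacle is the multiplication step. There we must verify that after substituting the imperfect products $Q(g_i)+O(\eps)$ with negative-power rescalings, every term with a negative $\eps$-exponent cancels exactly rather than merely vanishing in the limit. This exact cancellation is what the effective definition requires, since the error must lie in $\eps\,\mathrm{Mat}(F[\eps,\mathbf x])$. We would first attempt to strengthen the inductive invariant to require that the error of a depth-$d$ gadget lie in $\eps^{c\cdot 3^{d}}$ after a suitable normalization, or alternatively that it have a specific triangular shape (zero in the $(2,2)$ entry), so that the conjugations in the product gadget only ever multiply errors by $\eps^{-k}$ for $k$ below the guaranteed vanishing order. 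The constants $9$, $25$, $3$ are then read off from the shape of this gadget.
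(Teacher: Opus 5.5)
\textbf{Gap: the multiplication step is not carried out.} Your base case and your addition step are fine. The multiplication step, however, is the entire content of the proposition, and you do not carry it out. You state
\[
t(d)\le 9t(d-1)+O(1),\qquad E(d)\le 25E(d-1)+O(1),\qquad L(d)\le 3L(d-1)+2
\]
without exhibiting a gadget that realizes them. You also leave open, explicitly, the one point that must be proved: why the negative powers of $\varepsilon$ inside the gadget never meet the children's error terms.

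Your three counts (nine copies, five error-carrying factors, threefold exponent growth) do not describe one construction. A multiplicative tripling of Laurent exponents can only come from substituting $\varepsilon\mapsto\varepsilon^3$ in the children. That substitution also triples their error degrees, so nine such copies give only $E(d)\le 27E(d-1)+O(1)$, which does not yield $12\cdot 25^d$. Your $\operatorname{diag}(\lambda,\lambda^{-1})$ rescalings with $\lambda=\varepsilon^{\pm1}$ behave differently: they shift exponents additively, and they multiply some entry of a child's error by a negative power of $\varepsilon$. That is exactly the difficulty you have not resolved.

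\textbf{What is missing.} Two ideas are needed, and no triangular-shape or $\varepsilon^{c\cdot 3^d}$ invariant is required.

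First, the Bringmann--Ikenmeyer--Zuiddam multiplication gadget (their Lemma~3.4) consumes approximations of $Q(f/2)$ and $Q(g)$, not of $Q(f)$ and $Q(g)$. So the invariant should be: for every $\alpha\in F$, some product equals $Q(\alpha p)+O(\varepsilon)$. At $p=g+h$ you recurse on $(g,\alpha),(h,\alpha)$; at $p=gh$ you recurse on $(g,\alpha/2),(h,1)$. This is where $\operatorname{char}F\neq 2$ enters.

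Second, and this settles your ``main obstacle'': at $p=gh$, substitute $\varepsilon\mapsto\varepsilon^3$ in every primitive label of both children's products. Their errors are genuine polynomials in $\varepsilon$ with no negative powers. The substitution therefore turns them exactly into
\[
Q\bigl(\tfrac{\alpha}{2}g\bigr)+\sum_{r\ge1}\varepsilon^{3r}B_r
\qquad\text{and}\qquad
Q(h)+\sum_{r\ge1}\varepsilon^{3r}C_r,
\]
that is, into $O(\varepsilon^3)$-approximations. This is precisely the input accuracy the gadget requires. The gadget then uses four copies of each input together with $37$ fixed factors whose labels have $\varepsilon$-exponents of absolute value at most $2$. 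It returns $Q(\alpha gh)+O(\varepsilon)$ with no negative powers, and its error degree is at most $4e_1+4e_2+12$, where $e_1,e_2$ are the error degrees of its (already reparametrized) inputs. This yields
\[
t_d\le 8t_{d-1}+37,\qquad
E_d\le 12\,(2E_{d-1}+1),\qquad
L_d\le\max\{3L_{d-1},2\}.
\]
The paper takes base values $t_0=45$, $E_0=12$, $L_0=2$, because it also routes the leaves through the gadget. With these, the recurrences give exactly the bounds $45\cdot 9^d$, $12\cdot 25^d$ and $2\cdot 3^d$ of the statement.
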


\begin{proof}
We use the recursive constructions of Bringmann, Ikenmeyer, and
Zuiddam~\cite[Lemmas~3.2--3.4 and Proposition~3.5]{BIZ}, while
keeping track of the Laurent exponents and of the syntactic
descriptions of the primitive factors.

We prove the following stronger assertion by induction on \(d\).
If a polynomial \(p\) is computed by a fan-in-two division-free
arithmetic formula of depth at most \(d\), then, for every
\(\alpha \in F\), there exist affine Laurent forms
\[
\ell_1(\varepsilon,x),\ldots,\ell_t(\varepsilon,x)
\]
such that
\[
Q(\ell_t)\cdots Q(\ell_1)
=
Q(\alpha p)
+
\sum_{r=1}^{e}\varepsilon^r A_r,
\qquad
A_r \in \operatorname{Mat}_{2\times 2}(F[x]),
\]
where
\[
t \leq t_d := 45\cdot 9^d,
\qquad
e \leq E_d := 12\cdot 25^d,
\]
and every exponent of \(\varepsilon\) occurring in a coefficient of
one of the forms \(\ell_i\) has absolute value at most
\[
L_d := 2\cdot 3^d.
\]
In particular, the error contains no negative powers of
\(\varepsilon\).

We also maintain the following syntactic invariant. Every primitive
label is obtained either from a leaf label of the original formula or
from one of the fixed labels appearing in the gadgets of
\cite[Lemmas~3.2--3.4]{BIZ}, possibly after repeatedly applying the
substitution
\[
\varepsilon \longmapsto \varepsilon^3.
\]
Consequently, every primitive label remains an affine Laurent form in
the original variables and has Laurent support bounded by an absolute
constant.

Suppose first that \(d=0\). Then \(p\) is either an input variable or
a field constant. The matrices
\[
Q(\alpha/2)
\qquad\text{and}\qquad
Q(p)
\]
are exact primitive \(Q\)-matrices. In particular, they may be viewed
as elements of
\[
Q(\alpha/2)+O(\varepsilon^3)
\qquad\text{and}\qquad
Q(p)+O(\varepsilon^3),
\]
respectively. Since \(\operatorname{char} F\neq 2\), the scalar
\(\alpha/2\) is defined.

Apply the multiplication gadget of
\cite[Lemma~3.4]{BIZ}, with the polynomial denoted there by \(f\)
equal to \(\alpha\), and the polynomial denoted there by \(g\) equal
to \(p\). Its first input is therefore an approximation to
\[
Q(f/2)=Q(\alpha/2),
\]
as required. The gadget produces a product satisfying
\[
Q(\ell_t)\cdots Q(\ell_1)
=
Q(\alpha p)+O(\varepsilon).
\]
The number of primitive factors is at most
\[
4\cdot 1+4\cdot 1+37
=
45
=
t_0,
\]
and the error degree is at most
\[
4\cdot 0+4\cdot 0+12
=
12
=
E_0.
\]

The additional primitive labels introduced by the gadget are fixed
affine Laurent expressions involving only
\[
0,\quad
\pm 1,\quad
\pm\varepsilon,\quad
\pm\varepsilon^{-1},\quad
\pm\varepsilon^2,\quad
\varepsilon-1,\quad
\varepsilon^{-1}-1,
\]
and scalar multiples of these expressions. Thus every Laurent
exponent has absolute value at most
\[
2=L_0.
\]
This proves the induction assertion for \(d=0\).

Now assume that \(d\geq 1\). The root of the formula computing \(p\)
is either an addition gate or a multiplication gate.

Suppose first that
\[
p=g+h,
\]
where \(g\) and \(h\) are computed by formulas of depth at most
\(d-1\). Apply the induction hypothesis to the pairs
\((g,\alpha)\) and \((h,\alpha)\). We obtain products
\[
G
=
Q(\alpha g)
+
\sum_{r=1}^{e_g}\varepsilon^r B_r
\]
and
\[
H
=
Q(\alpha h)
+
\sum_{r=1}^{e_h}\varepsilon^r C_r,
\]
where
\[
B_r,C_r \in \operatorname{Mat}_{2\times 2}(F[x]),
\]
and
\[
t_g,t_h \leq 45\cdot 9^{d-1},
\qquad
e_g,e_h \leq 12\cdot 25^{d-1}.
\]

By the addition gadget of \cite[Lemma~3.2]{BIZ},
\[
GQ(0)H
\in
Q(\alpha g+\alpha h)+O(\varepsilon)
=
Q(\alpha p)+O(\varepsilon).
\]
The number of primitive factors is at most
\[
\begin{aligned}
t_g+t_h+1
&\leq
2\cdot 45\cdot 9^{d-1}+1 \\
&\leq
45\cdot 9^d.
\end{aligned}
\]
The error degree is at most
\[
\begin{aligned}
e_g+e_h
&\leq
2\cdot 12\cdot 25^{d-1} \\
&\leq
12\cdot 25^d.
\end{aligned}
\]
No rescaling of \(\varepsilon\) occurs in this construction, and the
only new primitive label is \(0\). Hence the largest absolute Laurent
exponent is at most
\[
L_{d-1}\leq L_d.
\]

Suppose next that
\[
p=gh,
\]
where \(g\) and \(h\) are computed by formulas of depth at most
\(d-1\). Apply the induction hypothesis to the pairs
\[
(g,\alpha/2)
\qquad\text{and}\qquad
(h,1).
\]
This gives products
\[
G
=
Q((\alpha/2)g)
+
\sum_{r=1}^{e_g}\varepsilon^r B_r
\]
and
\[
H
=
Q(h)
+
\sum_{r=1}^{e_h}\varepsilon^r C_r,
\]
with
\[
t_g,t_h \leq 45\cdot 9^{d-1},
\qquad
e_g,e_h \leq 12\cdot 25^{d-1}.
\]

Replace \(\varepsilon\) by \(\varepsilon^3\) in every primitive label
of both products. Denote the resulting products by \(G^{[3]}\) and
\(H^{[3]}\). Then
\[
G^{[3]}
=
Q((\alpha/2)g)
+
\sum_{r=1}^{e_g}\varepsilon^{3r}B_r
\in
Q((\alpha/2)g)+O(\varepsilon^3)
\]
and
\[
H^{[3]}
=
Q(h)
+
\sum_{r=1}^{e_h}\varepsilon^{3r}C_r
\in
Q(h)+O(\varepsilon^3).
\]
Their error degrees are at most
\[
3e_g
\qquad\text{and}\qquad
3e_h,
\]
respectively. Moreover, every Laurent exponent occurring in their
primitive labels has absolute value at most
\[
3L_{d-1}.
\]

We now apply the multiplication gadget of
\cite[Lemma~3.4]{BIZ}. In the notation of that lemma, take
\[
f=\alpha g
\qquad\text{and}\qquad
g=h.
\]
The first input required by the gadget is an approximation to
\[
Q(f/2)
=
Q((\alpha g)/2)
=
Q((\alpha/2)g),
\]
which is exactly the matrix approximated by \(G^{[3]}\). The second
input is the matrix approximated by \(H^{[3]}\). The gadget therefore
produces
\[
Q(\ell_t)\cdots Q(\ell_1)
=
Q((\alpha g)h)+O(\varepsilon)
=
Q(\alpha p)+O(\varepsilon).
\]

The number of primitive factors is at most
\[
\begin{aligned}
4t_g+4t_h+37
&\leq
8\cdot 45\cdot 9^{d-1}+37 \\
&\leq
45\cdot 9^d.
\end{aligned}
\]
The error degree is at most
\[
4(3e_g)+4(3e_h)+12
=
12(e_g+e_h+1).
\]
Using the induction bounds, we obtain
\[
\begin{aligned}
12(e_g+e_h+1)
&\leq
12\bigl(24\cdot 25^{d-1}+1\bigr) \\
&=
288\cdot 25^{d-1}+12 \\
&\leq
300\cdot 25^{d-1} \\
&=
12\cdot 25^d.
\end{aligned}
\]
Here the penultimate inequality uses \(d\geq 1\), and therefore
\(25^{d-1}\geq 1\).

The fixed primitive labels inserted by the multiplication gadget have
Laurent exponents of absolute value at most \(2\). Therefore the
largest absolute Laurent exponent in the resulting product is at most
\[
\begin{aligned}
\max\{3L_{d-1},2\}
&=
\max\{2\cdot 3^d,2\} \\
&=
2\cdot 3^d \\
&=
L_d.
\end{aligned}
\]
This completes the induction.

It remains to verify the claimed syntactic-effectiveness bound. At the
leaves, all primitive labels are affine Laurent forms in the original
variables. The addition construction merely concatenates previously
constructed products and inserts the fixed form \(0\). The
multiplication construction only repeats previously constructed
labels, applies the rescaling
\[
\varepsilon\longmapsto\varepsilon^3,
\]
and inserts the fixed affine Laurent forms occurring in the gadgets of
\cite[Lemmas~3.2--3.4]{BIZ}. Hence every \(\ell_i\) is an affine
Laurent form in the original variables, and every such form has
Laurent support bounded by an absolute constant.

Every exponent occurring in a coefficient of an \(\ell_i\) has
absolute value at most
\[
2\cdot 3^d,
\]
and hence has binary description length \(O(d)\). Moreover, a fan-in-two formula of depth \(d\) has at most \(2^d\)
leaf occurrences. After renumbering the variables that actually occur
in the formula, every variable index therefore has binary description
length \(O(d)\). Since there are at most
\[
45\cdot 9^d
\]
forms, their total syntactic description length is
\[
O\bigl((45\cdot 9^d)(d+1)\bigr),
\]
with field constants counted as atomic labels. In particular, this
length is polynomial in \(45\cdot 9^d\) and \(d\).

Finally, take
\[
\alpha=1
\qquad\text{and}\qquad
p=f.
\]
The induction gives affine Laurent forms
\[
\ell_1,\ldots,\ell_t
\]
such that
\[
Q(\ell_t)\cdots Q(\ell_1)
=
Q(f)+O(\varepsilon),
\]
where
\[
t\leq 45\cdot 9^d,
\qquad
E\leq 12\cdot 25^d,
\]
and every Laurent exponent occurring in a coefficient of one of the
forms has absolute value at most
\[
2\cdot 3^d.
\]

By the definition of \(O(\varepsilon)\), the error belongs to
\[
\varepsilon
\operatorname{Mat}_{2\times 2}(F[\varepsilon,x]).
\]
Since its \(\varepsilon\)-degree is at most \(E\), it can be written
as
\[
Q(\ell_t)\cdots Q(\ell_1)
=
Q(f)
+
\sum_{r=1}^{E}\varepsilon^r A_r,
\qquad
A_r\in \operatorname{Mat}_{2\times 2}(F[x]),
\]
after inserting zero coefficient matrices for any missing powers of
\(\varepsilon\). This proves all the assertions.
\end{proof}

\begin{theorem}[Effective continuant simulation]\label[theorem]{thm:effective-continuant-simulation}
Assume $\operatorname{char}\F\ne2$.  Every family in $\VPe$ is an effective
polynomial-error Laurent degeneration of the continuant family:
\[
(f_n)\in\VPe
\quad\Longrightarrow\quad
(f_n)\DegRed(K_n),
\]
with polynomially bounded reindexing of the continuant.
\end{theorem}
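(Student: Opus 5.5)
The plan is to combine Brent's formula balancing with \cref{prop:quant} and the matrix identity \eqref{eq:continuant-matrix}. Start with $(f_n)\in\VPe$, computed by fan-in-two formulas of size at most $p(n)$. By Brent's balancing theorem~\cite{Brent}, each $f_n$ also has a fan-in-two division-free formula of depth $d_n\le c\log p(n)+c$ for an absolute constant $c$. The price is only a polynomial increase in size, and Brent's construction is division-free when the input formula is.

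Apply \cref{prop:quant} to this balanced formula. This is legitimate because $\operatorname{char}\F\ne2$. It gives affine Laurent forms $\ell_1,\dots,\ell_t$ with
\[
Q(\ell_t)\cdots Q(\ell_1)=Q(f_n)+\sum_{r=1}^{E}\eps^rA_r .
\]
The parameters satisfy $t\le 45\cdot9^{d_n}$ and $E\le12\cdot25^{d_n}$, every Laurent exponent is at most $2\cdot3^{d_n}$ in absolute value, and the total description length is polynomial in $9^{d_n}$ and $d_n$. Since $d_n=O(\log n)$, the quantities $9^{d_n}$, $25^{d_n}$ and $3^{d_n}$ are all $n^{O(1)}$, so every bound is polynomial in $n$.

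Take the $(1,1)$ entry of both sides. Because $Q(f)_{1,1}=f$, \eqref{eq:continuant-matrix} gives
\[
K_t(\ell_1,\dots,\ell_t)=f_n+\sum_{r=1}^E\eps^r(A_r)_{1,1}.
\]
This is precisely $f_n\DegRed K_t$ in the sense of \cref{def:degeneration}, with $h_r=(A_r)_{1,1}\in\F[x]$, and there are no negative powers of $\eps$.

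Two bookkeeping points remain.

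\textbf{Reindexing.} The definition of a family degeneration uses a single function $m(n)$, but $t$ may vary. One option is to set $m(n):=t$ directly, since $t\le 45\cdot 9^{d_n}$ is polynomially bounded. If a monotone bound is wanted instead, pad the product up to the exact length $m(n)=\lceil 45\cdot 9^{c\log p(n)+c}\rceil$. Padding works because $Q(0)Q(0)=I$, so appending pairs of factors $Q(0)$ leaves the product unchanged; this fixes the parity. If necessary, first insert a single $Q(0)$ into a position where the addition gadget already contributes one; alternatively, take $m(n)$ to be whichever of the two bounds has the right parity.

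\textbf{Variable sets.} The continuant $K_{m(n)}$ lives in its own variables $y_1,\dots,y_{m(n)}$, and each $y_i$ is assigned the affine Laurent form $\ell_i$ in $x$. This matches \cref{def:degeneration} exactly.

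I expect the main obstacle to be the balancing step. Brent's theorem must deliver depth $O(\log s)$ for a fan-in-two formula over an arbitrary field without using divisions, and with constants small enough that $9^{d}$ and $25^{d}$ remain polynomial. Logarithmic depth is enough for this, whatever the constant. If Brent's statement as cited produces formulas with divisions, I would first invoke the division-free variant: depth $O(\log s)$ is standard for formulas without divisions, via the $f=A\cdot g+B$ decomposition at a separator node. The rest of the argument is mechanical.
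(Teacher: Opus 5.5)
Your proposal is correct and matches the paper's proof: Brent balancing to depth $O(\log n)$, then \cref{prop:quant}, then the $(1,1)$ entry via \eqref{eq:continuant-matrix}, and taking $m(n)=t(n)$ already satisfies \cref{def:degeneration}. The padding aside is unnecessary and partly wrong: inserting a single $Q(0)$ next to an existing one cancels it, since $Q(0)^2=I$, and so changes the product; no parity fix is needed anyway, because $\det Q(z)=-1$ forces $t$ to be odd, so appending pairs $Q(0)Q(0)$ reaches any odd polynomial bound.
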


\begin{proof}
Let $s(n)$ be a polynomial bound on formula size.  Brent's balancing theorem
\cite{Brent} supplies formulas of size polynomial in $s(n)$ and depth
$d(n)=O(\log s(n))=O(\log n)$.  Apply \Cref{prop:quant}.  Its bounds become polynomial in $n$:
\[
45\cdot9^{d(n)}=n^{O(1)},\qquad
12\cdot25^{d(n)}=n^{O(1)},\qquad
2\cdot3^{d(n)}=n^{O(1)}.
\]
Taking the $(1,1)$ entry and using \eqref{eq:continuant-matrix} gives
\[
K_{t(n)}(\ell_1,\ldots,\ell_{t(n)})
=f_n+\sum_{r=1}^{E(n)}\eps^rh_{n,r}.
\]
All effectiveness requirements in \cref{def:degeneration} are therefore
satisfied.
\end{proof}

\subsection{The constant-depth hierarchy and depth preservation}

The continuant simulation alone does not yield a lower bound: to derive a contradiction, 
we also need explicit polynomial families that can be computed at some fixed depth but cannot be computed at a slightly smaller fixed depth. 
Such families are provided by the adjacent-depth hierarchy theorem of Limaye, Srinivasan, and Tavenas. 
This theorem expresses the fact that constant-depth arithmetic computation does not collapse to a single depth: 
for every fixed depth bound, allowing one additional layer gives strictly greater polynomial-size computational power. 
The circuit model of Limaye, Srinivasan, and Tavenas allows a sum gate with inputs \(u_1,\ldots,u_r\) to compute an arbitrary \(F\)-linear 
combination \[ \alpha_1u_1+\cdots+\alpha_ru_r, \qquad \alpha_1,\ldots,\alpha_r\in F, \] with the coefficients represented as labels on the incoming edges; 
see \cite[p.~1, Footnote~1]{LST}. 
Since the model used here has only ordinary \(+\) and \(\times\) gates, we record the elementary translation between the two conventions.

\begin{remark}[Comparison of the two circuit models]
There are two minor differences between the circuit model of
Limaye, Srinivasan, and Tavenas and the ordinary arithmetic-circuit
model used in this paper.

First, a sum gate in the model of Limaye, Srinivasan, and Tavenas may
compute an arbitrary linear combination
\[
\alpha_1u_1+\cdots+\alpha_tu_t,
\qquad
\alpha_1,\ldots,\alpha_t\in F,
\]
where the coefficients are labels on the incoming edges. In the model
used here, addition gates compute ordinary sums, while multiplication
by a field constant must be performed explicitly. A linear-combination
gate can therefore be replaced by \(t\) parallel multiplication gates
computing
\[
\alpha_1u_1,\ldots,\alpha_tu_t,
\]
followed by one ordinary unbounded-fan-in addition gate. Thus each
linear-combination gate is replaced by at most two arithmetic layers.
Product gates require no modification.

Second, the two papers use slightly different size conventions.
Limaye, Srinivasan, and Tavenas measure circuit size by the number of
gates, whereas in this paper size is measured by the number of wires.
A circuit with \(s\) gates has at most \(s^2\) wires. Consequently, the
translation described above introduces at most polynomially many
ordinary gates and wires: the number of scalar-multiplication gates is
at most the number of incoming wires of the original
linear-combination gates, and hence is \(O(s^2)\). In particular, a
polynomial-size circuit in their convention becomes a polynomial-size
circuit in ours.

Conversely, an ordinary addition gate is a linear-combination gate
whose edge coefficients are all equal to \(1\), and an ordinary
multiplication gate is already allowed in their model. Moreover, an
ordinary circuit with polynomially many wires has polynomially many
relevant gates after unused gates are removed. Hence a
polynomial-size ordinary circuit of depth \(d\) is also a
polynomial-size depth-\(d\) circuit in the model of Limaye,
Srinivasan, and Tavenas.

It follows that the forward translation preserves polynomial size
and increases depth by at most a factor of \(2\), while the reverse
translation preserves depth. These are the only model-conversion
facts needed below.
\end{remark}

\begin{theorem}[Constant-depth hierarchy]\label[theorem]{thm:constant-depth-hierarchy}
Let $\F$ have characteristic zero.  For every fixed integer $D\geq2$, there is
an explicit polynomial family $H^{(D)}=(H_n^{(D)})_{n\geq1}$ such that:
\begin{enumerate}[label=(\roman*)]
\item $H^{(D)}$ has polynomial-size division-free $+$, $\times$ circuits of
      depth at most $2D$;
\item $H^{(D)}$ has no polynomial-size $+$, $\times$ circuits of depth at most
      $D-1$.
\end{enumerate}
\end{theorem}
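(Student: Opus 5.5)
This theorem is a translation of the Limaye--Srinivasan--Tavenas adjacent-depth hierarchy \cite{LST} into the $+,\times$ model, using the model comparison in the preceding remark. Fix $D\geq 2$. I would apply the LST hierarchy at depth $D$. It supplies an explicit family $H^{(D)}$ with polynomial-size depth-$D$ circuits in their linear-combination-gate model, and with no polynomial-size depth-$(D-1)$ circuits in that model. The hypothesis $\operatorname{char}\F=0$ is exactly the regime in which \cite{LST} states its lower bounds, so no further field assumption is needed.

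For (i), I would take a polynomial-size depth-$D$ LST circuit for $H_n^{(D)}$ and apply the forward translation. Each linear-combination gate becomes a layer of scalar multiplications (constant times wire) followed by one unbounded fan-in addition gate. Product gates are unchanged. Each original layer therefore costs at most two layers, so the depth is at most $2D$. The number of new gates is bounded by the number of original wires, which is at most $s^2$, so the size stays polynomial in the wire-count convention. A small point to check is that constants are available as leaves, so that each multiplication $\alpha_iu_i$ is a fan-in-two product gate with a constant input; this is allowed in our model.

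For (ii), I would argue by contradiction. Suppose $H^{(D)}$ had polynomial-size $+,\times$ circuits of depth at most $D-1$. By the reverse translation, addition gates are linear-combination gates with all coefficients $1$, and unused gates can be pruned so that the gate count is bounded by the wire count. These would then be polynomial-size depth-$(D-1)$ circuits in the LST model, contradicting their lower bound.

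The only real obstacle is bookkeeping: matching the exact depth convention of \cite{LST}, including how input leaves and constants are counted and whether their depth-$D$ upper bound circuits have a product or sum output gate. If their depth convention differs by an additive constant, I would absorb it into the factor-$2$ slack in (i), replacing $2D$ by $2D+O(1)$ if necessary. The weaker lower bound in (ii), depth $D-1$ rather than something finer, is exactly what their theorem gives. I would also note that the statement only needs the family to be explicit in the sense of \cite{LST}, with no uniformity requirement.
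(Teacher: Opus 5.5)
Your proposal is correct and follows the paper's proof essentially step for step. You invoke \cite[Theorem~5]{LST} at depth $D$, then apply the forward translation, replacing each linear-combination gate by scalar multiplications and one addition gate with new gates bounded by $s^2$ wires, to get depth at most $2D$. For the depth-$(D-1)$ lower bound you embed ordinary $+,\times$ circuits into the LST model. Two small points: the paper makes the indexing explicit by setting $H_n^{(D)}:=Q_{D,n}$, so that LST's $s^{\omega(1)}$ bound is superpolynomial in the family index, and your $2D+O(1)$ hedge is unnecessary, since the paper notes that LST's depth is ordinary circuit depth.
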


\begin{proof}
By \cite[Theorem~5]{LST}, for every fixed $D\geq2$ and growing size parameter
$s$, there is an explicit set-multilinear polynomial $Q_{D,s}$ computed by a
depth-$D$ circuit of size at most $s$ in their model, whereas every
depth-$(D-1)$ circuit in that model computing $Q_{D,s}$ has size
$s^{\omega(1)}$.  Their footnote following the theorem specifies that this is
ordinary circuit depth, not merely product-depth.

Define
\[
H_n^{(D)}:=Q_{D,n}
\]
for all sufficiently large $n$, and define the finitely many remaining
members arbitrarily.  The circuit supplied by~\cite[Theorem~5]{LST}
has at most $n$ gates in their size convention.  Consequently, it has
at most $n^2$ wires, and the number of variables that actually occur
in $Q_{D,n}$ is also polynomially bounded in $n$.  Thus
$\bigl(H_n^{(D)}\bigr)_{n\geq1}$ is a polynomially indexed family.

Consider a linear-combination gate
\[
v=\alpha_1u_1+\cdots+\alpha_tu_t
\]
in the circuit of Limaye, Srinivasan, and Tavenas.  Replace it by
$t$ parallel scalar-multiplication gates computing
\[
\alpha_1u_1,\ldots,\alpha_tu_t,
\]
followed by one unbounded-fan-in ordinary addition gate.  Product gates
are left unchanged.

Along every input--output path, each original linear-combination gate
is thereby replaced by at most two ordinary arithmetic layers, while
each product gate remains one layer.  Hence an original circuit of
depth at most $D$ becomes an ordinary $+,\times$ circuit of depth at
most $2D$.

The number of newly introduced scalar-multiplication gates is at most
the number of incoming wires to linear-combination gates.  Since a
circuit with at most $n$ gates has at most $n^2$ wires, the translated
circuit has polynomially many gates and wires.  Thus
$H^{(D)}$ has polynomial-size ordinary $+,\times$ circuits of depth at
most $2D$.

Conversely, an ordinary addition gate is a linear-combination gate
whose incoming coefficients are all equal to $1$, and an ordinary
product gate is already permitted in the model of
Limaye, Srinivasan, and Tavenas.  Therefore any polynomial-size
ordinary $+,\times$ circuit of depth at most $D-1$ computing
$H_n^{(D)}$ would also be a polynomial-size depth-$(D-1)$ circuit in
their model.  This contradicts the $n^{\omega(1)}$ lower bound of
\cite[Theorem~5]{LST}.
\end{proof}

\begin{lemma}[Unfolding a constant-depth circuit]\label[lemma]{lem:constant-depth-unfolding}
A division-free arithmetic circuit of size $s$ and depth $D$ can be
converted into a fan-in-two division-free arithmetic formula of size
\[
O\bigl((D+1)s^{D+1}\bigr).
\]
In particular, its formula size is $s^{O(D)}$. Consequently, if a polynomial family \((f_n)\) is computed by
division-free arithmetic circuits of polynomial size and fixed depth,
then \((f_n)\in\mathrm{VP}_e\).
\end{lemma}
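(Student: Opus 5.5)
We unfold the circuit level by level, starting from the output gate and duplicating every shared subcircuit. Let $C$ be a division-free circuit of size $s$, where size counts wires, and depth $D$. Write $\mathrm{depth}(v)$ for the length of the longest path from an input to a gate $v$. Define the \emph{tree unfolding} $T(v)$ recursively. If $v$ is an input gate, $T(v)$ is a single leaf with the same label. If $v$ is an internal gate with children $u_1,\ldots,u_k$, then $T(v)$ has a root labelled like $v$ and fresh copies of $T(u_1),\ldots,T(u_k)$ as subtrees. By induction on the depth of $v$, $T(v)$ computes the same polynomial as $v$, since the value at a gate depends only on its label and on the values of its children.

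\textbf{Counting nodes.} The fan-in of any gate is at most $s$, because fan-in counts wires. Let $N(j)$ be the maximum number of nodes of $T(v)$ over gates of depth at most $j$. Then $N(0)=1$ and $N(j)\le 1+s\,N(j-1)$. This gives
\[
N(D)\le 1+s+\cdots+s^D\le (D+1)s^D .
\]
So the unfolded formula has unbounded fan-in and at most $(D+1)s^D$ nodes, with at most $s^D$ leaves.

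\textbf{Fan-in two.} Replace each internal node of fan-in $k\ge 2$ by a binary tree of $k-1$ gates of the same type. A fan-in-one node is deleted by identifying it with its child; this is harmless because unary $+$ and $\times$ compute the identity. A fan-in-zero gate is replaced by the constant $0$ for addition and $1$ for multiplication. The number of binary gates created at a node equals its number of children minus one, so the total is bounded by the number of edges of the tree, which is at most $N(D)$. The resulting fan-in-two formula therefore has size $O((D+1)s^D)$, which is within the claimed $O\bigl((D+1)s^{D+1}\bigr)$. This gives formula size $s^{O(D)}$.

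\textbf{The consequence.} If $(f_n)$ is computed by circuits of size $p(n)$ and fixed depth $D$, the formulas above have size $O((D+1)p(n)^{D+1})$. Since $D$ is a constant, this is polynomial in $n$, so $(f_n)\in\VPe$ by \cref{def:vpe}.

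There is no real obstacle here. The only points needing care are the bookkeeping of the size measure (wires versus gates, and fan-in at most $s$) and the degenerate fan-in-zero and fan-in-one gates.
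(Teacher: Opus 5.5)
Your proof is correct and takes essentially the same route as the paper: unfold the circuit into a tree, use fan-in at most $s$ to bound the node count by $1+s+\cdots+s^D\le (D+1)s^D$, binarize, and conclude membership in $\mathrm{VP}_e$ for fixed $D$. Two small differences are worth noting. Your edge-counting in the binarization step gives the slightly sharper bound $O\bigl((D+1)s^D\bigr)$, whereas the paper charges up to $s-1$ binary gates per occurrence and obtains $O\bigl((D+1)s^{D+1}\bigr)$. You also handle fan-in-zero and fan-in-one gates explicitly, which the paper leaves implicit.
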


\begin{proof}
We first unfold all sharing in the circuit. Starting at the output gate,
replace each occurrence of a gate by a fresh copy together with fresh
copies of all subcircuits feeding it. Equivalently, a gate is copied once
for every directed path from that gate to the output.

Because the original circuit has $s$ wires, every gate has fan-in at
most $s$. Consequently, the unfolded computation has at most $s^k$
gate occurrences at distance $k$ from the output. Since the circuit has
depth $D$, the total number of gate occurrences in the unfolded
computation is at most
\[
1+s+s^2+\cdots+s^D
\leq
(D+1)s^D.
\]
The resulting directed graph is a formula: every non-output gate
occurrence has fan-out one, because a separate copy was made for each use
of its value.

The gates in the unfolded formula may still have unbounded fan-in.
Replace every addition or multiplication gate of fan-in \(r\) by a
fan-in-two binary tree containing \(r-1\) gates of the same type. Since
the original circuit has \(s\) wires, every gate has fan-in at most \(s\).
Therefore, each gate occurrence in the unfolded formula is replaced by
at most \(s-1\) binary gates. As the unfolded formula has at most
\[
1+s+s^2+\cdots+s^D \leq (D+1)s^D
\]
gate occurrences, the resulting fan-in-two formula has size at most
\[
O\bigl(s(1+s+s^2+\cdots+s^D)\bigr)
=
O\bigl((D+1)s^{D+1}\bigr)
=
s^{O(D)}.
\]

All transformations preserve the polynomial computed at the output and
introduce neither division gates nor new field operations.

Now consider a polynomial family computed by circuits of size
$n^{O(1)}$ and fixed depth $D$. The preceding construction gives
fan-in-two division-free formulas of size
\[
\bigl(n^{O(1)}\bigr)^{O(D)}=n^{O(1)}.
\]
Hence the family belongs to $\mathrm{VP}_e$.
\end{proof}

\begin{lemma}[Effective degeneration preserves fixed depth]\label[lemma]{lem:degeneration-preserves-depth}
Suppose $(g_m)$ has polynomial-size division-free circuits of depth $d$, and
$(f_n) \preceq_{DEG} (g_m)$.
Then $(f_n)$ has polynomial-size circuits of depth at most $d+4$.
\end{lemma}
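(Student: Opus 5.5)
The plan is to make the interpolation argument sketched in the introduction explicit, and to count layers carefully so the total overhead is at most four. Fix $n$ and put $m=m(n)$. By \cref{def:degeneration} there are affine Laurent forms $\ell_1,\ldots,\ell_t$ and an error degree $E$ with
\[
g_m(\ell_1(\eps,x),\ldots,\ell_t(\eps,x))=f_n(x)+\sum_{r=1}^{E}\eps^r h_r(x).
\]
Here $t$, $E$, the Laurent exponents and the description length are all polynomial in $n$. Let $C_m$ be a division-free circuit for $g_m$ of size polynomial in $m$, hence in $n$, and depth $d$. Since $\F$ has characteristic zero it is infinite, so we can choose $E+1$ distinct nonzero $\alpha_0,\ldots,\alpha_E\in\F$. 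Lagrange interpolation then gives constants $\lambda_0,\ldots,\lambda_E$ with $[\eps^0]P=\sum_j\lambda_jP(\alpha_j)$ for every polynomial $P$ of degree at most $E$ in $\eps$. Because the left-hand side above is a polynomial in $\eps$ of degree at most $E$ (no negative powers survive), we get
\[
f_n(x)=\sum_{j=0}^{E}\lambda_j\,g_m\bigl(\ell_1(\alpha_j,x),\ldots,\ell_t(\alpha_j,x)\bigr)
\]
as an identity of polynomials in $x$.

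Next I build the circuit layer by layer. For each $j$, the specialized form $\ell_i(\alpha_j,x)=c_{i0}(\alpha_j)+\sum_k c_{ik}(\alpha_j)x_k$ is an affine form with field-constant coefficients; the coefficients $c_{ik}(\alpha_j)$ are evaluations of Laurent polynomials at nonzero points, so they are constants available nonuniformly. One layer of product gates computes the terms $c_{ik}(\alpha_j)\cdot x_k$. One unbounded fan-in addition gate then adds these terms together with the constant $c_{i0}(\alpha_j)$. This takes two layers, and their size is bounded by the description length of the forms, which is polynomial. Feeding these outputs into a fresh copy $C_m^{(j)}$ of $C_m$ gives depth $d+2$. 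One more layer of product gates multiplies each copy's output by $\lambda_j$, and a final addition gate sums over $j$. The total depth is $d+4$, and the size is $(E+1)$ times (size of $C_m$ plus the cost of the forms), plus $O(E)$, which is polynomial in $n$.

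The step I expect to need the most care is confirming that the interpolation really recovers $f_n$ and not some Laurent artefact. We must use that the substituted expression is a genuine polynomial in $\eps$ of degree at most $E$. This is exactly the content of \eqref{eq:deg-def}: there are no negative powers, and the top power is $E$. So evaluating at nonzero $\alpha_j$ is legitimate even though the individual forms contain $\eps^{-1}$. I would also note two edge cases. Affine forms that are constants, or single variables with coefficient $1$, may need fewer layers, and padding with identity (fan-in-one) gates keeps the depth bound uniform. The field constants being free is what keeps every gate division-free.
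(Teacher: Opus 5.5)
Your proposal is correct and follows the paper's proof essentially step for step. Both arguments use Lagrange interpolation at $E+1$ distinct nonzero points, two layers for the specialized affine forms, parallel depth-$d$ copies of the circuit for $g_{m(n)}$, and then one scalar-multiplication layer and one addition layer, giving depth $d+4$. Your appeal to characteristic zero to guarantee enough nonzero interpolation points is the same field-size assumption the paper makes tacitly when it simply chooses the $\alpha_j$.
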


\begin{proof}
Write
\[
G_n(\eps,x)=g_{m(n)}(\ell_1(\eps,x),\ldots,\ell_{t(n)}(\eps,x))
=f_n(x)+\sum_{r=1}^{E(n)}\eps^rh_{n,r}(x).
\]
Choose distinct nonzero elements
\[
\alpha_0,\ldots,\alpha_{E(n)}\in F.
\]
Lagrange interpolation supplies constants
\(\lambda_0,\ldots,\lambda_{E(n)}\in F\) such that, for every
polynomial
\[
P(\varepsilon)\in F[x_1,\ldots,x_m][\varepsilon]
\]
of \(\varepsilon\)-degree at most \(E(n)\),
\[
[\varepsilon^0]P(\varepsilon)
=
\sum_{j=0}^{E(n)}\lambda_jP(\alpha_j).
\]
Applying this identity to
\[
P(\varepsilon)=G_n(\varepsilon,x)
\]
and using
\(
[\varepsilon^0]G_n(\varepsilon,x)=f_n(x)
\),
we obtain
\[
f_n(x)
=
\sum_{j=0}^{E(n)}
\lambda_jG_n(\alpha_j,x).
\]
Each specialized affine form is computed in two layers; the depth-$d$ circuit
for $g_{m(n)}$ follows; one scalar-multiplication layer and one final addition
layer perform interpolation.  The depth is at most $d+4$, and effectiveness
ensures polynomial size.
\end{proof}

\begin{theorem}[Continuant lower bound]\label[theorem]{thm:continuant-lower-bound}
Over every field of characteristic zero, the continuant family is not in
$\AC^0_{\F}$.
\end{theorem}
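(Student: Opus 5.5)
We argue by contradiction, assembling the results already established in this section. Suppose $(K_n)\in\AC^0_{\F}$, so there are constants $c$ and $d$ such that $K_n$ is computed by division-free circuits of size at most $n^c$ and depth at most $d$. Since $\operatorname{char}\F=0\neq 2$, \cref{thm:effective-continuant-simulation} applies. Hence every family $(f_n)\in\VPe$ satisfies $(f_n)\DegRed(K_n)$ with a polynomially bounded reindexing $m(n)$. The polynomial-size circuits for $K_{m(n)}$ remain polynomial-size as functions of $n$, because $m(n)$ is polynomially bounded.

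\Cref{lem:degeneration-preserves-depth} then shows that every family in $\VPe$ has polynomial-size division-free circuits of depth at most $d+4$. This depth bound is uniform: it depends only on the assumed depth $d$ for the continuant, not on the particular family in $\VPe$.

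Now choose the fixed integer $D:=d+5\geq 2$ and take the hierarchy family $H^{(D)}$ from \cref{thm:constant-depth-hierarchy}. By part (i), $H^{(D)}$ has polynomial-size circuits of depth at most $2D$, which is a fixed constant. \Cref{lem:constant-depth-unfolding} therefore gives $H^{(D)}\in\VPe$. By the previous paragraph, $H^{(D)}$ then has polynomial-size circuits of depth at most $d+4=D-1$. This contradicts part (ii) of \cref{thm:constant-depth-hierarchy}, so $(K_n)\notin\AC^0_{\F}$.

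The main point needing care is the bookkeeping of indices and sizes, since the depth bookkeeping is immediate. Two things must be checked:

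\begin{itemize}
\item The degeneration in \cref{thm:effective-continuant-simulation} applies to $H^{(D)}$ with polynomial bounds in the family index $n$. This holds because $H^{(D)}_n$ has formula size $n^{O(D)}$, which is polynomial for fixed $D$, and Brent balancing keeps it polynomial.
\item The interpolation of \cref{lem:degeneration-preserves-depth} needs $E(n)+1$ distinct nonzero field elements. This holds because $\F$ has characteristic zero and is therefore infinite.
\end{itemize}

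With these checks, the size of the resulting depth-$(D-1)$ circuit is polynomial in $n$, and the contradiction is genuine.
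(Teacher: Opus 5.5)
Your proposal is correct and follows the paper's proof essentially verbatim: assume depth-$d$ circuits for $K_n$, use \cref{thm:effective-continuant-simulation} and \cref{lem:degeneration-preserves-depth} to get depth-$(d+4)$ circuits for every family in $\VPe$, place $H^{(d+5)}$ in $\VPe$ via \cref{lem:constant-depth-unfolding}, and contradict \cref{thm:constant-depth-hierarchy} with $D=d+5$. Your two extra checks, that the polynomial bounds survive Brent balancing and that $\F$ is infinite so interpolation points exist, are left implicit in the paper, and they are correct.
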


\begin{proof}
Suppose the continuant had polynomial-size circuits of fixed depth $d$.
By \cref{thm:effective-continuant-simulation,lem:degeneration-preserves-depth},
every family in $\VPe$ would then have polynomial-size circuits of depth at
most $d+4$.

Choose the hierarchy family $H^{(d+5)}$ from
\cref{thm:constant-depth-hierarchy}.  It has circuits of fixed depth
$2(d+5)$ and hence, by \cref{lem:constant-depth-unfolding}, belongs to $\VPe$.
The assumed continuant upper bound would give depth-$(d+4)$ circuits for this
family, contradicting the depth-$(d+4)$ lower bound supplied by
\cref{thm:constant-depth-hierarchy} with $D=d+5$.
\end{proof}

\section{An explicit continuant--Hankel degeneration}\label{sec:degeneration}

We now give the first new reduction.  Separating the weight computation from
the combinatorial and sign analysis makes the mechanism of the degeneration
explicit.

\begin{theorem}[Path matchings inside a Hankel determinant]
\label{thm:path-matching-hankel}
Let $n\geq 1$ and let
\[
P_n=1-2-\cdots-n
\]
be the path introduced in Definition~3.2, and define
\[
M_{P_n}(x_1,\ldots,x_n)
 :=
 \sum_{T\in\operatorname{Match}(P_n)}
 \prod_{j\notin V(T)}x_j.
\]
Set
\[
N=n+1,\qquad
M=2n^2,\qquad
E_n=NM-2\sum_{i=0}^{n}i^2
   =\frac{n(n+1)(4n-1)}{3},
\]
and
\[
\sigma_n=(-1)^{\lfloor n/2\rfloor}.
\]
For $0\leq i<n$, put
\[
A_{2i}=(-1)^i x_{i+1}\varepsilon^{M-2i^2},
\qquad
A_{2i+1}
   =\varepsilon^{M-i^2-(i+1)^2},
\]
and put
\[
A_{2n}=\sigma_n\varepsilon^{-E_n}.
\]
Then
\[
[\varepsilon^0]\,
\Delta_{n+1}(A_0,\ldots,A_{2n})
 =
M_{P_n}(x_1,\ldots,x_n)
 =
K_n(x_1,\ldots,x_n).
\]
Moreover, the substitution defines an effective polynomial-error
Laurent degeneration
\[
K_n\preceq_{\mathrm{DEG}}\Delta_{n+1}.
\]
\end{theorem}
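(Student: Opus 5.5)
The plan is to expand $\Delta_{n+1}(A_0,\ldots,A_{2n})=\sum_{\pi\in S_{\{0,\ldots,n\}}}\operatorname{sgn}(\pi)\prod_{i=0}^{n}A_{i+\pi(i)}$ via the Leibniz formula. Since every $A_k$ is a monomial in $\varepsilon$ times a constant or a single variable, each permutation $\pi$ contributes a single $\varepsilon$-power $\varepsilon^{w(\pi)}$, where $w(\pi)=\sum_i w(i+\pi(i))$. Here, for $k<2n$,
\[
w(k)=M-\lceil k^2/2\rceil ,
\]
because $2i^2=(2i)^2/2$ and $i^2+(i+1)^2=((2i+1)^2+1)/2$. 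In addition $w(2n)=-E_n$. The whole proof reduces to three things: show $w(\pi)\geq 0$ for every $\pi$, identify the $\pi$ with $w(\pi)=0$, and check the signs of those terms.

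\emph{Case $\pi(n)\neq n$.} The entry $A_{2n}$ sits only at position $(n,n)$, so no factor $A_{2n}$ occurs and every index satisfies $i+\pi(i)\leq 2n-1$. Then each factor has weight at least $M-((2n-1)^2+1)/2=2n-1$, so
\[
w(\pi)\geq (n+1)(2n-1)>0 .
\]
These terms contribute only to positive powers of $\varepsilon$.

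\emph{Case $\pi(n)=n$.} Let $\pi$ restrict to a permutation of $\{0,\ldots,n-1\}$. Using $E_n=NM-2\sum_{i=0}^n i^2$ and $M=2n^2$, we get $nM-E_n=2\sum_{i<n}i^2$. Writing $\lceil k^2/2\rceil=(k^2+[k\text{ odd}])/2$ and expanding $(i+\pi(i))^2$ gives
\[
w(\pi)=\sum_{i<n}\frac{(i-\pi(i))^2-[\,i-\pi(i)\text{ odd}\,]}{2}.
\]
Each summand is a nonnegative integer, and it vanishes exactly when $|i-\pi(i)|\leq 1$. So $w(\pi)=0$ precisely for the involutions that are products of disjoint adjacent transpositions $(i\ i{+}1)$. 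These correspond to matchings $T$ of $P_n$ via the relabelling $i\mapsto i+1$.

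\emph{Sign and coefficient check, which is the only delicate bookkeeping.} For such a $\pi$, a fixed point $i$ contributes $(-1)^i x_{i+1}$. Each transposition $(i\ i{+}1)$ contributes $A_{2i+1}^2$ with coefficient $1$ and sign $-1$. The corner contributes $\sigma_n$. Taking the product of $(-1)^i$ over all $i<n$ gives $(-1)^{n(n-1)/2}$, and removing a matched pair $\{i,i+1\}$ divides this by $(-1)^{2i+1}=-1$. That $-1$ exactly cancels the sign of the transposition. The total coefficient is therefore
\[
\sigma_n(-1)^{n(n-1)/2}\prod_{j\notin V(T)}x_j .
\]
Since $n(n-1)/2\equiv\lfloor n/2\rfloor\pmod 2$ (check $n\bmod 4$), the prefactor is $1$. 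Summing over matchings gives $[\varepsilon^0]\Delta_{n+1}=M_{P_n}$. The matching expansion of $K_n$ stated in Section~3 then yields $K_n$.

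\emph{Effectiveness.} There are no negative powers of $\varepsilon$ by the above. The number of forms is $2n+1$. The Laurent exponents are bounded by $\max(M,E_n)=O(n^3)$. The error degree is at most $(n+1)M=O(n^3)$, since every $w(k)\leq M$ and the corner term only lowers the weight. Each form is a single monomial, so the description length is polynomial. Hence all conditions of \cref{def:degeneration} hold with reindexing $m(n)=n+1$. I expect the main obstacle to be the sign computation, that is, getting $\sigma_n$ and the $(-1)^i$ twists to cancel uniformly over all matchings; the weight inequality itself is a clean convexity/rearrangement argument once $w(k)$ is written as $M-\lceil k^2/2\rceil$.
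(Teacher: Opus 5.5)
Your proposal is correct and follows essentially the paper's proof: the same Leibniz expansion with weights $M-\lceil k^2/2\rceil$ and corner weight $-E_n$, the same reduction to $\sum_{i<n}\lfloor (i-\pi(i))^2/2\rfloor$ when $\pi(n)=n$, the same matching correspondence, and the same sign cancellation via $n(n-1)/2\equiv\lfloor n/2\rfloor \pmod 2$. The differences are cosmetic. For $\pi(n)\neq n$ you use the cruder per-factor bound $w(k)\geq 2n-1$ instead of the paper's exact exponent $E_n+P(\pi)$, and you bound the error degree by $(n+1)M$ rather than $E_n+(n+1)n^2/2$. You also leave implicit the standard fact that a permutation with displacement at most $1$ is a product of disjoint adjacent transpositions, which the paper proves by a left-to-right scan.
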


\begin{lemma}[Weight of a determinant term]\label[lemma]{lem:hankel-term-weight}
For each permutation $\pi\in S_{n+1}$, let
\[
T_{\pi}
:=
\operatorname{sgn}(\pi)
\prod_{i=0}^{n} A_{i+\pi(i)}
\]
denote the summand indexed by $\pi$ in the Leibniz expansion
\[
\Delta_{n+1}(A_0,\ldots,A_{2n})
=
\sum_{\pi\in S_{n+1}}
\operatorname{sgn}(\pi)
\prod_{i=0}^{n} A_{i+\pi(i)}.
\]
Define
\[
P(\pi)
:=
\sum_{i=0}^{n}
\left\lfloor
\frac{(i-\pi(i))^2}{2}
\right\rfloor.
\]
Then $T_\pi$ is a monomial in $\varepsilon$ whose
$\varepsilon$-exponent is
\begin{equation}
\operatorname{exp}_{\varepsilon}(T_\pi)
=
\begin{cases}
P(\pi), & \pi(n)=n,\\[4pt]
E_n+P(\pi), & \pi(n)\neq n.
\end{cases}
\label{eq:term-exponent}
\end{equation}

In particular, every term in the Leibniz expansion has
nonnegative $\varepsilon$-exponent.
\end{lemma}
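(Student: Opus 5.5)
The plan is a direct exponent computation. First I record the $\varepsilon$-exponent of each substituted entry. Then I use a single algebraic identity to turn the exponent of each Leibniz term into $P(\pi)$ plus a constant. Finally I split into the two cases $\pi(n)=n$ and $\pi(n)\neq n$ according to whether the corner entry $A_{2n}$ occurs.

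\emph{Step 1: exponents of the entries.} Each $A_k$ is a field constant, possibly times one variable $x_{i+1}$, multiplied by a single power of $\varepsilon$. Hence each $T_\pi$ is a monomial in $\varepsilon$ with coefficient in $\F[x]$, and only its exponent must be computed. For $k=2i<2n$ the exponent is $M-2i^2=M-k^2/2$. For $k=2i+1\le 2n-1$ it is $M-i^2-(i+1)^2=M-(k^2+1)/2$. In both cases
\[
e(k)=M-\left\lceil k^2/2\right\rceil \qquad (0\le k\le 2n-1),
\]
while $e(2n)=-E_n$. The index $2n$ arises as $i+\pi(i)$ only when $i=\pi(i)=n$, so $A_{2n}$ appears in $T_\pi$ exactly when $\pi(n)=n$.

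\emph{Step 2: the key identity.} For integers $i,j$ we have $(i+j)^2/2=i^2+j^2-(i-j)^2/2$. Since $i+j$ and $i-j$ have the same parity, taking ceilings gives
\[
\left\lceil (i+j)^2/2\right\rceil = i^2+j^2-\left\lfloor (i-j)^2/2\right\rfloor .
\]
Summing over any set $S$ of indices that $\pi$ maps onto itself, the squares contribute $\sum_{i\in S}(i^2+\pi(i)^2)=2\sum_{i\in S}i^2$. What remains is the part of $P(\pi)$ indexed by $S$.

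\emph{Step 3: the two cases.} If $\pi(n)\neq n$, all $n+1$ entries have index below $2n$. The exponent is then
\[
NM-2\sum_{i=0}^n i^2+P(\pi)=E_n+P(\pi).
\]
If $\pi(n)=n$, take $S=\{0,\dots,n-1\}$. The exponent is
\[
nM-2\sum_{i<n}i^2+P(\pi)-E_n,
\]
where the $i=n$ summand of $P$ vanishes. Using $M=2n^2$ we get
\[
E_n=(n+1)M-2\sum_{i\le n}i^2=nM-2\sum_{i<n}i^2,
\]
so the constant part cancels exactly and the exponent is $P(\pi)$. The closed form $E_n=n(n+1)(4n-1)/3$ is a routine check: it equals $2n^2(n+1)-n(n+1)(2n+1)/3$. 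Nonnegativity then follows because $P(\pi)\ge 0$ termwise and $E_n\ge 0$.

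The main obstacle is Step 2, specifically handling the parity so that the ceiling in $e(k)$ matches the floor in $P(\pi)$ for both parities of $k$. Once that identity is in place, the rest is bookkeeping around the single corner entry.
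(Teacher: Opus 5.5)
Your proof is correct and follows the paper's argument. It uses the same entry exponents $M-\lceil k^2/2\rceil$, the same ceiling--floor identity, and the same case split on whether the corner entry $A_{2n}$ occurs; the only difference is bookkeeping (the paper gives $A_{2n}$ the provisional weight $M-2n^2=0$ and then subtracts $E_n$, while you sum over $\{0,\dots,n-1\}$ and use $M=2n^2$ to get $E_n=nM-2\sum_{i<n}i^2$), which amounts to the same fact.
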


\begin{proof}
For $0\leq s\leq 2n$, define the provisional weight
\[
\widetilde{w}_s
:=
M-\left\lceil\frac{s^2}{2}\right\rceil.
\]
For every $s<2n$, this is exactly the exponent of $\varepsilon$
in $A_s$. Indeed, for $0\leq r<n$,
\[
\widetilde{w}_{2r}
=
M-\left\lceil\frac{(2r)^2}{2}\right\rceil
=
M-2r^2,
\]
and
\[
\widetilde{w}_{2r+1}
=
M-\left\lceil\frac{(2r+1)^2}{2}\right\rceil
=
M-r^2-(r+1)^2.
\]
For $s=2n$, we have
\[
\widetilde{w}_{2n}
=
M-\left\lceil\frac{(2n)^2}{2}\right\rceil
=
2n^2-2n^2
=
0,
\]
whereas the actual exponent of $\varepsilon$ in
\[
A_{2n}=\sigma_n\varepsilon^{-E_n}
\]
is $-E_n$.

Ignoring this final correction for the moment, the provisional
$\varepsilon$-exponent of the term $T_\pi$ is
\[
\begin{aligned}
\widetilde{e}(\pi)
&=
\sum_{i=0}^{n}
\widetilde{w}_{i+\pi(i)}\\
&=
\sum_{i=0}^{n}
\left(
M-
\left\lceil
\frac{(i+\pi(i))^2}{2}
\right\rceil
\right).
\end{aligned}
\]
For arbitrary integers $i$ and $j$, we have
\[
\left\lceil\frac{(i+j)^2}{2}\right\rceil
=
i^2+j^2-
\left\lfloor\frac{(i-j)^2}{2}\right\rfloor.
\]
Applying this identity with $j=\pi(i)$ gives
\[
\begin{aligned}
\widetilde{e}(\pi)
&=
\sum_{i=0}^{n}
\left(
M-i^2-\pi(i)^2
+
\left\lfloor
\frac{(i-\pi(i))^2}{2}
\right\rfloor
\right)\\
&=
(n+1)M
-
\sum_{i=0}^{n}i^2
-
\sum_{i=0}^{n}\pi(i)^2
+
P(\pi).
\end{aligned}
\]
Since $\pi$ is a permutation of $\{0,\ldots,n\}$,
\[
\sum_{i=0}^{n}\pi(i)^2
=
\sum_{i=0}^{n}i^2.
\]
Hence, by the definition
\[
E_n
=
(n+1)M-2\sum_{i=0}^{n}i^2,
\]
we obtain
\[
\widetilde{e}(\pi)=E_n+P(\pi).
\]

It remains to account for the actual exponent of $A_{2n}$.
The factor $A_{2n}$ occurs in $T_\pi$ precisely when
\[
i+\pi(i)=2n
\]
for some $i\in\{0,\ldots,n\}$. Since both $i$ and $\pi(i)$ lie
between $0$ and $n$, this equality holds if and only if
\[
i=n
\qquad\text{and}\qquad
\pi(n)=n.
\]

If $\pi(n)\neq n$, the term $T_\pi$ does not contain $A_{2n}$,
so no correction is required, and
\[
\operatorname{exp}_{\varepsilon}(T_\pi)
=
E_n+P(\pi).
\]

If $\pi(n)=n$, the term contains $A_{2n}$. Its provisional
exponent was taken to be $0$, whereas its actual exponent is
$-E_n$. We must therefore subtract $E_n$, obtaining
\[
\operatorname{exp}_{\varepsilon}(T_\pi)
=
\widetilde{e}(\pi)-E_n
=
P(\pi).
\]

This proves the asserted formula. Finally,
\[
P(\pi)\geq 0
\]
for every $\pi$, and
\[
E_n=\frac{n(n+1)(4n-1)}{3}>0
\]
for $n\geq 1$. Thus every determinant term has nonnegative
$\varepsilon$-exponent.
\end{proof}

\begin{lemma}[Zero-weight permutations and their signs]\label[lemma]{lem:zero-weight-permutations}
For each permutation $\pi\in S_{n+1}$, let
\[
T_\pi(\varepsilon,x)
:=
\operatorname{sgn}(\pi)
\prod_{i=0}^{n} A_{i+\pi(i)}
\]
be the summand indexed by $\pi$ in the Leibniz expansion
\[
\Delta_{n+1}(A_0,\ldots,A_{2n})
=
\sum_{\pi\in S_{n+1}}
T_\pi(\varepsilon,x).
\]

For a matching $T\in\Match(P_n)$, define
$\pi_T\in S_{n+1}$ as follows:
\[
\pi_T(n)=n;
\]
for every edge $\{j,j+1\}\in T$, set
\[
\pi_T(j-1)=j,
\qquad
\pi_T(j)=j-1;
\]
and fix every remaining index in $\{0,\ldots,n-1\}$.

Then $T_\pi(\varepsilon,x)$ has $\varepsilon$-exponent zero if
and only if
\[
\pi=\pi_T
\]
for a unique matching $T\in\Match(P_n)$. Moreover, for that
matching,
\[
T_{\pi_T}(\varepsilon,x)
=
\prod_{j\notin V(T)}x_j.
\]
In particular, every zero-$\varepsilon$-exponent summand occurs
with positive sign.
\end{lemma}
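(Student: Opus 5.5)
By \cref{lem:hankel-term-weight}, every permutation with $\pi(n)\neq n$ has exponent $E_n+P(\pi)\geq E_n>0$. So a zero-exponent term requires both $\pi(n)=n$ and $P(\pi)=0$. Since each summand $\lfloor (i-\pi(i))^2/2\rfloor$ is nonnegative, $P(\pi)=0$ holds exactly when $|i-\pi(i)|\leq 1$ for every $i$. (Indeed $\lfloor d^2/2\rfloor=0$ if and only if $|d|\leq1$.)

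First I classify these permutations. A permutation of $\{0,\ldots,n-1\}$ that moves each point by at most one is a product of disjoint adjacent transpositions $(j-1\ j)$. To see this, argue by induction from the left end. If $\pi(0)\neq 0$, then $\pi(0)=1$. The only remaining preimage candidate for $0$ is $1$, so $\pi(1)=0$, and we recurse on $\{2,\ldots\}$. Otherwise $0$ is fixed and we recurse on $\{1,\ldots\}$. The transposition $(j-1\ j)$ corresponds to the path edge $\{j,j+1\}$, via the index shift $i\mapsto i+1$ between matrix indices $0,\ldots,n-1$ and path vertices $1,\ldots,n$. Disjoint transpositions correspond to vertex-disjoint edges, so $T\mapsto\pi_T$ is a bijection from $\Match(P_n)$ onto the zero-exponent permutations, which gives uniqueness. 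The converse is immediate: each $\pi_T$ fixes $n$ and has $P(\pi_T)=0$.

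Next I compute $T_{\pi_T}$ explicitly. The factors are as follows.
\begin{itemize}
\item Row $n$ contributes $A_{2n}=\sigma_n\varepsilon^{-E_n}$.
\item A fixed index $i<n$ contributes $A_{2i}=(-1)^i x_{i+1}\varepsilon^{M-2i^2}$. The vertex $i+1$ is then uncovered by $T$.
\item An edge $\{j,j+1\}$ contributes two factors, from rows $j-1$ and $j$, both equal to $A_{2j-1}=A_{2(j-1)+1}=\varepsilon^{M-(j-1)^2-j^2}$. These are sign-free.
\end{itemize}
The $\varepsilon$-powers cancel to $\varepsilon^0$ by \cref{lem:hankel-term-weight}. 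The monomial part is therefore $\prod_{j\notin V(T)}x_j$, and it remains to check that the scalar coefficient equals $+1$.

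That coefficient is
\[
\operatorname{sgn}(\pi_T)\,\sigma_n\prod_{i\ \mathrm{fixed},\,i<n}(-1)^i
=(-1)^{|T|}(-1)^{\lfloor n/2\rfloor}(-1)^{\sum_{i\ \mathrm{fixed}}i}.
\]
I would show that $\sum_{i\ \mathrm{fixed}} i\equiv |T|+\lfloor n/2\rfloor\pmod 2$. Note that $\sum_{i=0}^{n-1}i=n(n-1)/2\equiv\lfloor n/2\rfloor\pmod 2$; checking $n$ mod $4$ confirms this. Also, the sum of the indices in a transposed pair $\{j-1,j\}$ is $2j-1$, which is odd. Hence
\[
\sum_{\mathrm{fixed}}i\equiv \lfloor n/2\rfloor-|T|\equiv\lfloor n/2\rfloor+|T|\pmod 2,
\]
so the total sign is $+1$.

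This parity bookkeeping is the only delicate step, so I would double-check it on small cases ($n=1,2$) before finalizing.
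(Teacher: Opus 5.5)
Your proposal is correct and follows the paper's proof essentially step for step: you use \cref{lem:hankel-term-weight} to reduce to $\pi(n)=n$ and $P(\pi)=0$, classify the displacement-at-most-one permutations as disjoint adjacent transpositions by a left-to-right sweep, and fix the sign via $n(n-1)/2\equiv\lfloor n/2\rfloor \pmod 2$ together with the oddness of $2j-1$ for each transposed pair. The only cosmetic difference is that you index fixed points by the matrix row $i$ rather than by the path vertex $j=i+1$, and this leaves the parity computation unchanged.
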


\begin{proof}
By Lemma~\ref{lem:hankel-term-weight}, the
$\varepsilon$-exponent of
\[
T_\pi(\varepsilon,x)
=
\operatorname{sgn}(\pi)
\prod_{i=0}^{n}A_{i+\pi(i)}
\]
is
\[
\begin{cases}
P(\pi), & \pi(n)=n,\\[4pt]
E_n+P(\pi), & \pi(n)\neq n,
\end{cases}
\]
where
\[
P(\pi)
=
\sum_{i=0}^{n}
\left\lfloor
\frac{(i-\pi(i))^2}{2}
\right\rfloor.
\]
Since $E_n>0$ and every summand in $P(\pi)$ is nonnegative,
the $\varepsilon$-exponent is zero if and only if
\[
\pi(n)=n
\qquad\text{and}\qquad
P(\pi)=0.
\]

For an integer $d$,
\[
\left\lfloor\frac{d^2}{2}\right\rfloor=0
\quad\Longleftrightarrow\quad
|d|\leq 1.
\]
Consequently,
\[
P(\pi)=0
\quad\Longleftrightarrow\quad
|i-\pi(i)|\leq 1
\quad\text{for every }0\leq i\leq n.
\]
Since $\pi(n)=n$, it remains to understand the restriction of
$\pi$ to $\{0,\ldots,n-1\}$.

We claim that every permutation of $\{0,\ldots,n-1\}$
satisfying
\[
|i-\pi(i)|\leq 1
\]
is a product of disjoint adjacent transpositions.

Read the indices from left to right, and let $i$ be the smallest
index not already fixed or paired. If $\pi(i)=i$, then $i$ is
fixed. Otherwise, the displacement condition gives
\[
\pi(i)\in\{i-1,i+1\}.
\]
The value $i-1$ is impossible, because all smaller indices have
already been fixed or paired among themselves. Hence
\[
\pi(i)=i+1.
\]
Because $\pi$ is bijective, the value $i$ must be the image of
some index. The displacement condition shows that the only
possible remaining preimage is $i+1$, and therefore
\[
\pi(i+1)=i.
\]
Thus $i$ and $i+1$ form the adjacent transposition
\[
(i,i+1).
\]
Continuing from left to right proves that the restriction of
$\pi$ is a product of disjoint adjacent transpositions.

Under the index shift
\[
\{0,\ldots,n-1\}\longrightarrow\{1,\ldots,n\},
\qquad
i\longmapsto i+1,
\]
the transposition $(j-1,j)$ corresponds to the edge
\[
\{j,j+1\}
\]
of the path
\[
P_n=1-2-\cdots-n.
\]
Disjoint adjacent transpositions therefore correspond
bijectively to matchings $T\in\Match(P_n)$. The corresponding
permutation is precisely the explicitly defined permutation
$\pi_T$.

It remains to compute the summand
\[
T_{\pi_T}(\varepsilon,x)
=
\operatorname{sgn}(\pi_T)
\prod_{i=0}^{n}A_{i+\pi_T(i)}.
\]

Suppose first that a vertex $j\in\{1,\ldots,n\}$ is unmatched
in $T$. Then the index $i=j-1$ is fixed by $\pi_T$, so the
factor selected from row $i$ is
\[
A_{i+\pi_T(i)}
=
A_{2i}
=
A_{2(j-1)}.
\]
By the definition of the substitution,
\[
A_{2(j-1)}
=
(-1)^{j-1}x_j
\varepsilon^{M-2(j-1)^2}.
\]
Thus every unmatched vertex $j$ contributes the factor
\[
(-1)^{j-1}x_j
\]
to the coefficient of $T_{\pi_T}$.

Now suppose that $\{j,j+1\}\in T$. The associated
transposition is $(j-1,j)$, so
\[
\pi_T(j-1)=j,
\qquad
\pi_T(j)=j-1.
\]
The two selected factors are therefore
\[
A_{(j-1)+\pi_T(j-1)}
=
A_{2j-1}
\]
and
\[
A_{j+\pi_T(j)}
=
A_{2j-1}.
\]
Hence the edge $\{j,j+1\}$ contributes
\[
A_{2j-1}^2.
\]
Since
\[
A_{2j-1}
=
\varepsilon^{M-(j-1)^2-j^2},
\]
this contribution has coefficient $1$ and contains no
$x$-variable. The transposition itself contributes one factor
of $-1$ to $\operatorname{sgn}(\pi_T)$.

Finally, $\pi_T(n)=n$, so the last index contributes
\[
A_{n+\pi_T(n)}
=
A_{2n}
=
\sigma_n\varepsilon^{-E_n}.
\]

Since $\pi_T$ consists of exactly $|T|$ transpositions,
\[
\operatorname{sgn}(\pi_T)=(-1)^{|T|}.
\]
Therefore, after collecting all factors, we obtain
\[
T_{\pi_T}(\varepsilon,x)
=
\sigma_n
(-1)^{|T|+\sum_{j\notin V(T)}(j-1)}
\left(\prod_{j\notin V(T)}x_j\right)
\varepsilon^{\,0},
\]
where the total exponent of $\varepsilon$ is zero by the first
part of the proof. Hence
\[
T_{\pi_T}(\varepsilon,x)
=
\sigma_n
(-1)^{|T|+\sum_{j\notin V(T)}(j-1)}
\prod_{j\notin V(T)}x_j.
\]

We now verify that the displayed sign is always positive. Put
\[
S
:=
\sum_{j=1}^{n}(j-1)
=
\frac{n(n-1)}{2}.
\]
For each matched edge $\{j,j+1\}\in T$, the two corresponding
terms removed from this sum are
\[
(j-1)+j=2j-1,
\]
which is odd. Thus
\[
S
=
\sum_{j\notin V(T)}(j-1)
+
\sum_{\{j,j+1\}\in T}(2j-1).
\]
Reducing modulo $2$ gives
\[
S
\equiv
\sum_{j\notin V(T)}(j-1)+|T|
\pmod 2.
\]
Consequently,
\[
(-1)^{|T|+\sum_{j\notin V(T)}(j-1)}
=
(-1)^S.
\]

By the definition
\[
\sigma_n=(-1)^{\lfloor n/2\rfloor},
\]
and since
\[
\frac{n(n-1)}{2}
\equiv
\left\lfloor\frac n2\right\rfloor
\pmod 2,
\]
we have
\[
(-1)^S=\sigma_n.
\]
The total sign is therefore
\[
\sigma_n(-1)^S
=
\sigma_n^2
=
1.
\]
It follows that
\[
T_{\pi_T}(\varepsilon,x)
=
\prod_{j\notin V(T)}x_j,
\]
as claimed.
\end{proof}

\begin{proof}[Proof of \Cref{thm:path-matching-hankel}]
Expand
\[
\Delta_{n+1}(A_0,\ldots,A_{2n})
=\sum_{\pi\in S_{n+1}}\operatorname{sgn}(\pi)
  \prod_{i=0}^{n}A_{i+\pi(i)}.
\]
By \cref{lem:hankel-term-weight,lem:zero-weight-permutations}, the constant
coefficient is exactly
\[
\sum_{T\in\operatorname{Match}(P_n)}
\prod_{j\notin V(T)}x_j
 =
M_{P_n}(x_1,\ldots,x_n).
\]
Partitioning the matchings according to whether the last vertex $n$
is unmatched or is matched to $n-1$ gives
\[
M_{P_n}
 =
x_n M_{P_{n-1}}+M_{P_{n-2}},
\qquad
M_{P_0}=1,\qquad M_{P_1}=x_1.
\]
Thus
\[
M_{P_n}=K_n.
\]

It remains only to record effectiveness.  The largest absolute exponent in the
substitution is at most $\max\{E_n,M\}=O(n^3)$.  Moreover,
\[
P(\pi)\leq\frac{(n+1)n^2}{2},
\]
so \eqref{eq:term-exponent} shows that the degree in $\eps$ after substitution
is at most
\[
E_n+\frac{(n+1)n^2}{2}=O(n^3).
\]
There are $2n+1$ Laurent forms, each of constant Laurent support and
polynomial description length.  Hence all requirements of
\cref{def:degeneration} are satisfied.
\end{proof}

\begin{remark}[The degeneration in a small case]
The case \(n=3\) illustrates explicitly how the construction in
Theorem~4.1 extracts the matching terms from the Hankel determinant.
Here
\[
M=18,
\qquad
E_3=44,
\qquad
\sigma_3=-1,
\]
and the prescribed substitution is
\[
\begin{aligned}
A_0&=x_1\varepsilon^{18},
&
A_1&=\varepsilon^{17},
&
A_2&=-x_2\varepsilon^{16},
\\
A_3&=\varepsilon^{13},
&
A_4&=x_3\varepsilon^{10},
&
A_5&=\varepsilon^5,
&
A_6&=-\varepsilon^{-44}.
\end{aligned}
\]
A direct expansion gives
\[
[\varepsilon^0]\,
\Delta_4(A_0,\ldots,A_6)
=
x_1x_2x_3+x_1+x_3
=
K_3(x_1,x_2,x_3).
\]
The three surviving monomials correspond respectively to the empty
matching of the path \(1-2-3\), the matching containing the edge
\(\{2,3\}\), and the matching containing the edge \(\{1,2\}\).
Thus this example already displays the general mechanism: the
constant-weight determinant terms are precisely those indexed by
matchings of the path.
\end{remark}

\begin{proof}[Proof of \Cref{thm:hankel-lb}]
If $(\Delta_N)$ had polynomial-size circuits of fixed depth $d$, then
\cref{thm:path-matching-hankel,lem:degeneration-preserves-depth} would give
polynomial-size fixed-depth circuits for the continuant.  This contradicts
\cref{thm:continuant-lower-bound}.
\end{proof}

\section{Fixed-bound principal subresultants and normal Euclidean sequences}\label{sec:subresultants}

The purpose of this section is to recover a principal subresultant
coefficient from fixed coordinates of a normal Euclidean remainder
sequence.  \Cref{sec:hankel-psc} identifies the relevant principal
subresultant with a consecutive Hankel determinant, and \Cref{sec:euclid-to-hankel} combines the two facts.  The key formula proved below is
\[
    \operatorname{psc}_{m-i}(R_0,R_1)
    =
    \operatorname{lc}(R_i)
    \prod_{k=1}^{i-1}\operatorname{lc}(R_k)^2.
\]

For classical background on Euclidean remainder sequences and subresultants, see \cite{BrownTraub,vzGG}.  All identities used below are nevertheless proved directly.

A fixed-bound definition is convenient because the coefficients will later be specialized to loci on which the actual degree may drop.
It also makes every principal subresultant coefficient a polynomial in the displayed coefficient variables.

Let
\[
F=f_0x^m+f_1x^{m-1}+\cdots+f_m,
\qquad
G=g_0x^r+g_1x^{r-1}+\cdots+g_r,
\qquad m>r,
\]
where $m$ and $r$ are fixed degree bounds; we do not initially require $f_0$ or $g_0$ to be nonzero.

\begin{definition}[Fixed-bound principal subresultant coefficient]\label[definition]{def:psc}
For $0\le j\le r$, let $M_j^{m,r}(F,G)$ be the square matrix whose rows are the coefficient vectors, in descending powers from $x^{m+r-j-1}$ to $x^j$, of
\[
x^{r-j-1}F,\ldots,F,
\qquad
x^{m-j-1}G,\ldots,G.
\]
The first list is empty when $j=r$.
There are
\[
(r-j)+(m-j)=m+r-2j
\]
rows and columns.
Define
\[
\psc_j^{m,r}(F,G):=\det M_j^{m,r}(F,G).
\]
When the actual degrees of $F$ and $G$ are $m$ and $r$, respectively,
the determinant above is the principal subresultant coefficient in the
row-and-column convention fixed by Definition~5.1.  Some references use
an ordering convention that differs from ours by a predictable sign.
Throughout this paper, the notation
\[
\operatorname{psc}^{m,r}_j(F,G)
\]
means exactly the determinant specified in Definition~5.1.  When the
degree bounds are clear from context, we suppress the superscript
$(m,r)$.
\end{definition}

When $m=r+1$ and $j=r$, the matrix is $1\times1$ and
\[
\psc_r^{r+1,r}(F,G)=g_0.
\]
If $\deg G=r$, this is $\lc(G)$.

\subsection{One Euclidean step}

\begin{lemma}[Subresultant descent for consecutive degrees]\label[lemma]{lem:psc-descent}
Let $p\geq2$, let
\[
\deg F=p,\qquad \deg G=p-1,
\]
and set $g=\lc(G)$.  Write one ordinary division step as
\[
F=QG+H,
\qquad \deg Q=1,
\qquad \deg H\leq p-2.
\]
Then
\[
\psc_{p-2}(F,G)=g^2[x^{p-2}]H.
\]
If $\deg H=p-2$, then for every $0\le j\le p-2$,
\[
\psc_j(F,G)=g^2\psc_j(G,H).
\]
\end{lemma}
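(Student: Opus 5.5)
The plan is to carry out row operations directly on the fixed-bound matrix $M_j^{p,p-1}(F,G)$ from \cref{def:psc}. Here $m=p$ and $r=p-1$. For $0\le j\le p-2$, the matrix has $r-j=p-1-j$ rows coming from shifts $x^kF$ and $m-j=p-j$ rows coming from shifts $x^kG$, with columns indexed by $x^{2p-2-j},\ldots,x^j$. Write $Q=q_1x+q_0$. Then each $F$-row $x^kF$ with $0\le k\le p-2-j$ can be replaced by $x^kF-q_1x^{k+1}G-q_0x^kG=x^kH$. The shifts $x^{k+1}G$ and $x^kG$ needed for this have exponents at most $p-1-j$, which is exactly the range $0,\ldots,m-j-1$ of $G$-rows. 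Hence these are elementary row operations and do not change the determinant. After this step the matrix has rows $x^kH$ for $0\le k\le p-2-j$ and rows $x^kG$ for $0\le k\le p-1-j$.

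Next I will reorder the rows so that the $G$-rows come first. The sign of this permutation needs tracking; I expect the block-crossing sign $(-1)^{(p-1-j)(p-j)}=+1$, since it is a product of consecutive integers. The top $G$-row, $x^{p-1-j}G$, has leading entry $g$ in column $x^{2p-2-j}$. Every other row has zero in that column, because $\deg x^kH\le p-2+k\le 2p-4-j$ and $\deg x^kG\le 2p-3-j$ for the smaller $G$-shifts. Expanding along the first column therefore gives a factor $g$ times a minor.

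This minor has rows $x^kG$ for $0\le k\le p-2-j$ and rows $x^kH$ for $0\le k\le p-2-j$, with columns $x^{2p-3-j},\ldots,x^j$. The top $H$-row $x^{p-2-j}H$ has degree at most $2p-4-j$, so the leading column of the minor is again nonzero only in the row $x^{p-2-j}G$, whose entry is $g$. Expanding along it produces a second factor $g$ times the matrix with rows $x^kG$ for $0\le k\le p-3-j$ and rows $x^kH$ for $0\le k\le p-2-j$, on columns $x^{2p-4-j},\ldots,x^j$. I need to check the position sign of this second expansion, again using that the row moved is first in its block after a reordering of fixed parity.

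The resulting matrix is exactly $M_j^{p-1,p-2}(G,H)$, taken in the fixed-bound convention with $H$ padded to formal degree $p-2$. This proves the second statement. When $\deg H=p-2$, this fixed-bound determinant is the $\psc_j(G,H)$ of the statement.

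For the first claim, take $j=p-2$. The final matrix is then $1\times1$, with the single row $H$ read at column $x^{p-2}$, so it equals $[x^{p-2}]H$. This case needs no hypothesis on $\deg H$, consistent with the fixed-bound convention noted after \cref{def:psc} (namely $\psc_r^{r+1,r}=g_0$).

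I expect the main obstacle to be the sign bookkeeping in the row reordering and in the two cofactor expansions. My first attempt will be to avoid reordering at all. Instead I will expand along the first two columns directly in the original row order and check that the total sign $(-1)^{(\text{row index})+1}$ of each expansion combines to $+1$. I will sanity-check the result on $p=2$, $j=0$, where it should reduce to the $3\times3$ identity $\psc_0=g^2H$.
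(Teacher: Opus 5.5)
Your proposal is correct and follows essentially the same route as the paper: you replace each $x^kF$ by $x^kH$ via row operations using the available $G$-shifts, swap the blocks with sign $(-1)^{(p-j)(p-j-1)}=+1$, and extract $g^2$ from the top two $G$-rows, leaving $M_j^{p-1,p-2}(G,H)$, or the $1\times 1$ entry $[x^{p-2}]H$ when $j=p-2$. The only cosmetic difference is that you peel off the two factors of $g$ by two successive first-column expansions instead of the paper's single expansion along a $2\times 2$ triangular pivot block. After the block swap each of your expansions is at position $(1,1)$, so the second-expansion sign you flagged as needing a check is automatically $+1$.
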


\begin{proof}
Fix $j\le p-2$.
The rows of $M_j(F,G)$ are
\[
x^{p-j-2}F,\ldots,F,
\qquad
x^{p-j-1}G,\ldots,G,
\]
and the columns are indexed by descending degrees
\[
2p-j-2,\ 2p-j-3,\ldots,j.
\]
For each first-block row $x^kF$, subtract the coefficient vector of $x^kQG$.
Since $\deg Q=1$, every shifted copy of $G$ needed for this row operation occurs in the second block.
The determinant is unchanged and the first block becomes
\[
x^{p-j-2}H,\ldots,H.
\]

Move the $G$ block before the $H$ block.
The block lengths are $p-j$ and $p-j-1$, so the sign is
\[
(-1)^{(p-j)(p-j-1)}=1.
\]
The first two $G$ rows, corresponding to $x^{p-j-1}G$ and $x^{p-j-2}G$, form a triangular pivot block in the first two columns with diagonal entries $g,g$.
Every other row has degree at most $2p-j-4$ and hence has zero in these two columns.
Expansion along the first two rows and columns contributes $g^2$ and no additional cofactor sign.

If $\deg H=p-2$, the remaining rows are
\[
x^{p-j-3}G,\ldots,G,
\qquad
x^{p-j-2}H,\ldots,H,
\]
and the remaining columns run from degree $2p-j-4$ down to $j$.
This is exactly $M_j(G,H)$, proving
\[
\psc_j(F,G)=g^2\psc_j(G,H).
\]
For $j=p-2$, the remaining matrix is $1\times1$ with entry $[x^{p-2}]H$, whether or not this coefficient is zero.
\end{proof}

\begin{definition}[Normal Euclidean sequence]\label[definition]{def:normal}
A Euclidean remainder sequence
\[
R_0,R_1,\ldots,R_s,
\qquad R_{i-1}=Q_iR_i+R_{i+1},
\]
is \emph{normal through $R_s$} if
\[
\deg R_i=m-i
\qquad(0\le i\le s),
\]
where $m=\deg R_0$ and $\deg R_1=m-1$.
\end{definition}

\begin{proposition}[Normal scaling formula]\label[proposition]{prop:normal-scaling}
Suppose $R_0,R_1,\ldots,R_s$ is normal, and set $c_i=\lc(R_i)$.
Then for every $1\le i\le s$,
\[
\psc_{m-i}(R_0,R_1)
=c_i\prod_{k=1}^{i-1}c_k^2,
\]
where the product is empty and equal to $1$ when $i=1$.
\end{proposition}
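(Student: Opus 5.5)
We prove the formula by induction on $i$, using \cref{lem:psc-descent} as the only tool. Throughout, the principal subresultants are taken with the natural degree bounds. That is, $\psc_j(R_{k-1},R_k)$ means $\psc_j^{m-k+1,\,m-k}$. This convention is consistent because normality guarantees that the actual degrees equal these bounds.

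\emph{Base case $i=1$.} The claim reads $\psc_{m-1}(R_0,R_1)=c_1$. This is the $1\times1$ observation recorded after \cref{def:psc}: for $m=r+1$ and $j=r$, the matrix consists of the single entry $g_0=\lc(R_1)=c_1$, since $\deg R_1=m-1$.

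\emph{Reduction step.} For $i\ge 2$, we peel off Euclidean steps one at a time. Fix $1\le k\le i-1$ and set $p=m-k+1$, so that $\deg R_{k-1}=p$, $\deg R_k=p-1$, and $R_{k-1}=Q_kR_k+R_{k+1}$ with $\deg Q_k=1$. Since $k+1\le i\le s$, normality gives $\deg R_{k+1}=p-2$. The second part of \cref{lem:psc-descent} therefore applies for every $0\le j\le p-2$, giving
\[
\psc_j(R_{k-1},R_k)=c_k^2\,\psc_j(R_k,R_{k+1}).
\]
We use this with $j=m-i$. The index constraint $j\le p-2=m-k-1$ is exactly $k\le i-1$, so the lemma is available at every stage.

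\emph{Chaining.} Applying the reduction for $k=1,\dots,i-1$ in turn gives
\[
\psc_{m-i}(R_0,R_1)=\Bigl(\prod_{k=1}^{i-1}c_k^2\Bigr)\psc_{m-i}(R_{i-1},R_i).
\]
In the final factor, $\deg R_{i-1}=m-i+1$ and $\deg R_i=m-i$, so it is again a $1\times1$ determinant equal to $\lc(R_i)=c_i$. Equivalently, it follows from the first assertion of \cref{lem:psc-descent} at the last step, using $[x^{m-i}]R_i=c_i$. This gives the stated formula.

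The only delicate point is bookkeeping. At each stage we must check that the degree bounds implicit in $\psc_j(G,H)$ agree with the actual degrees, which normality ensures. We must also check that $j=m-i$ stays within the admissible range $j\le p-2$, which holds precisely because $k\le i-1$. If $i=s$ and $R_{s+1}$ has lower degree, nothing breaks, because the chain only uses $\deg R_{k+1}$ for $k+1\le i\le s$.
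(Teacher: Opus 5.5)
Your proof is correct and follows essentially the same route as the paper. You handle $i=1$ by the $1\times1$ top-index determinant, then iterate the second assertion of \cref{lem:psc-descent} with $j=m-i$ along the pairs $(R_0,R_1),\ldots,(R_{i-2},R_{i-1})$, and finish with the $1\times1$ determinant $\psc_{m-i}(R_{i-1},R_i)=c_i$; your explicit checks of the degree-bound convention and of the range $j\le p-2$ (equivalently $k\le i-1$) are the bookkeeping the paper compresses into ``normality guarantees consecutive degrees at every step.''
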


\begin{proof}
For $i=1$, the pair $(R_0,R_1)$ has degrees $m,m-1$, and the top-index subresultant matrix is $1\times1$:
\[
\psc_{m-1}(R_0,R_1)=\lc(R_1)=c_1.
\]

Let $i\ge2$ and put $j=m-i$.
Apply \cref{lem:psc-descent} successively to
\[
(R_0,R_1),(R_1,R_2),\ldots,(R_{i-2},R_{i-1}).
\]
Normality guarantees consecutive degrees at every step, yielding
\[
\psc_j(R_0,R_1)
=\left(\prod_{k=1}^{i-1}c_k^2\right)
\psc_j(R_{i-1},R_i).
\]
The last pair has degrees $j+1,j$, so its top-index principal subresultant matrix is $1\times1$ and
\[
\psc_j(R_{i-1},R_i)=\lc(R_i)=c_i.
\]
\end{proof}

\begin{corollary}[Nonvanishing criterion for normality]
\label[corollary]{cor:normality}
Let
\[
R_{i-1}=Q_iR_i+R_{i+1},
\qquad
\deg R_{i+1}<\deg R_i,
\]
be the ordinary Euclidean remainder sequence of $R_0$ and
$R_1$, where
\[
\deg R_0=m,
\qquad
\deg R_1=m-1.
\]
Let $2\leq s\leq m$. If
\[
\operatorname{psc}_{m-2}(R_0,R_1),\,
\operatorname{psc}_{m-3}(R_0,R_1),\,
\ldots,\,
\operatorname{psc}_{m-s}(R_0,R_1)
\]
are all nonzero, then the Euclidean remainder sequence is normal
through $R_s$; that is,
\[
\deg R_i=m-i
\qquad
\text{for every }0\leq i\leq s.
\]
\end{corollary}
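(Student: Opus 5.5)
The plan is to argue by induction on $i$, for $1\le i\le s$, that the sequence is normal through $R_i$, that is, $\deg R_k=m-k$ for $0\le k\le i$. The base case $i=1$ holds by hypothesis, since $\deg R_0=m$ and $\deg R_1=m-1$. For the inductive step we assume normality through $R_{i-1}$, with $2\le i\le s$, and we want to show $\deg R_i=m-i$.

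The main tool is \cref{lem:psc-descent} applied along the already normal prefix. Put $j=m-i$. Applying the second assertion of \cref{lem:psc-descent} successively to the pairs
\[
(R_0,R_1),\ (R_1,R_2),\ \ldots,\ (R_{i-3},R_{i-2})
\]
is legitimate. Each of these pairs has consecutive degrees $m-k,\,m-k-1$. The next remainder $R_{k+2}$ has degree exactly $m-k-2$ because $k+2\le i-1$. The index satisfies $j\le (m-k)-2$. Together these give
\[
\psc_j(R_0,R_1)=\Bigl(\prod_{k=1}^{i-2}c_k^2\Bigr)\psc_j(R_{i-2},R_{i-1}).
\]
The pair $(R_{i-2},R_{i-1})$ has degrees $p=m-i+2$ and $p-1$, and here $j=p-2$. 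So the first assertion of \cref{lem:psc-descent} applies. It needs no hypothesis on $\deg H$ and gives $\psc_j(R_{i-2},R_{i-1})=c_{i-1}^2\,[x^{m-i}]H$, where $H$ is the remainder of dividing $R_{i-2}$ by $R_{i-1}$.

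Since $\deg R_{i-1}=m-i+1$, this ordinary division step has a quotient of degree one, and its remainder $H$ is exactly $R_i$ (possibly zero). By hypothesis $\psc_{m-i}(R_0,R_1)\ne0$. Hence $[x^{m-i}]R_i\ne0$, while $\deg R_i\le m-i$. Therefore $\deg R_i=m-i$, and in particular $R_i$ is a genuine nonzero remainder in the sequence. This completes the induction.

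The only delicate point is bookkeeping. We must make sure the second assertion of \cref{lem:psc-descent} is used only where the degree of the next remainder is already known, and that the final step uses only the first assertion, which tolerates a degree drop or a zero remainder. We must also check that $j=m-i\ge0$, which follows from $s\le m$. When $i=2$ the product over the descent is empty, and we apply the first assertion directly to $(R_0,R_1)$.
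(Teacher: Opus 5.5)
Your proof is correct and follows the paper's argument essentially step for step. Like the paper, you use induction, transport the nonvanishing of $\psc_{m-i}(R_0,R_1)$ down the already normal prefix via the second assertion of \cref{lem:psc-descent}, and then apply the first assertion to the last consecutive-degree pair to get $[x^{m-i}]R_i\neq 0$, which together with $\deg R_i<\deg R_{i-1}$ forces $\deg R_i=m-i$. (Your index $i$ is the paper's $t+1$, and your explicit remark that the remainder might a priori be zero is a small clarification the paper leaves implicit.)
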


\begin{proof}
We prove by induction on $t$ that
\[
\deg R_i=m-i
\qquad
\text{for every }0\leq i\leq t.
\]

The assertion for $t=1$ follows immediately from the
assumptions
\[
\deg R_0=m,
\qquad
\deg R_1=m-1.
\]

Now let $1\leq t<s$, and assume our claim for $t$. We
will prove that
\[
\deg R_{t+1}=m-t-1,
\]
which would prove the induction step.

Set
\[
j:=m-t-1.
\]
Since $t+1\leq s$, the principal subresultant coefficient
\[
\operatorname{psc}_{j}(R_0,R_1)
=
\operatorname{psc}_{m-(t+1)}(R_0,R_1)
\]
is one of the coefficients assumed to be nonzero. Hence
\begin{equation}
\operatorname{psc}_{j}(R_0,R_1)\neq 0.
\label{eq:original-psc-nonzero}
\end{equation}

We first transport this nonvanishing condition down the already
normal part of the Euclidean sequence. For every
$1\leq k\leq t-1$, the induction hypothesis gives
\[
\deg R_{k-1}=m-k+1,
\qquad
\deg R_k=m-k,
\qquad
\deg R_{k+1}=m-k-1.
\]
Thus the pair $(R_{k-1},R_k)$ has consecutive degrees, and its
remainder $R_{k+1}$ has the expected degree. Moreover,
\[
j=m-t-1\leq m-k-1.
\]
Therefore the second assertion of
Lemma~\ref{lem:psc-descent} applies and gives
\[
\operatorname{psc}_{j}(R_{k-1},R_k)
=
\operatorname{lc}(R_k)^2
\operatorname{psc}_{j}(R_k,R_{k+1}).
\]

Iterating these identities for $k=1,\ldots,t-1$, we obtain
\begin{equation}
\operatorname{psc}_{j}(R_0,R_1)
=
\left(
\prod_{k=1}^{t-1}\operatorname{lc}(R_k)^2
\right)
\operatorname{psc}_{j}(R_{t-1},R_t).
\label{eq:iterated-psc-descent}
\end{equation}

When $t=1$, the product in \eqref{eq:iterated-psc-descent} is empty and is interpreted
as $1$, so \eqref{eq:iterated-psc-descent} is simply the identity
\[
\operatorname{psc}_{j}(R_0,R_1)
=
\operatorname{psc}_{j}(R_0,R_1).
\]

By \eqref{eq:original-psc-nonzero}, the left-hand side of \eqref{eq:iterated-psc-descent} is nonzero. Every
$R_k$ with $1\leq k\leq t$ is nonzero by the induction
hypothesis, so every leading coefficient
$\operatorname{lc}(R_k)$ is nonzero. It follows from \eqref{eq:iterated-psc-descent}
that
\begin{equation}
\operatorname{psc}_{j}(R_{t-1},R_t)\neq 0.
\label{eq:current-psc-nonzero}
\end{equation}

Now consider the next Euclidean division,
\[
R_{t-1}=Q_tR_t+R_{t+1}.
\]
By the induction hypothesis,
\[
\deg R_{t-1}=m-t+1,
\qquad
\deg R_t=m-t.
\]
If we put
\[
p:=m-t+1,
\]
then
\[
\deg R_{t-1}=p,
\qquad
\deg R_t=p-1,
\qquad
j=m-t-1=p-2.
\]
The first assertion of
Lemma~\ref{lem:psc-descent}, applied to the pair
$(R_{t-1},R_t)$, therefore gives
\begin{equation}
\operatorname{psc}_{j}(R_{t-1},R_t)
=
\operatorname{lc}(R_t)^2
[x^j]R_{t+1}.
\label{eq:psc-detects-leading-coefficient}
\end{equation}

The left-hand side of \eqref{eq:psc-detects-leading-coefficient} is nonzero by \eqref{eq:current-psc-nonzero}, and
$\operatorname{lc}(R_t)\neq0$. Hence
\[
[x^j]R_{t+1}\neq0.
\]
It follows that
\[
\deg R_{t+1}\geq j.
\]
On the other hand, since $R_{t+1}$ is the Euclidean remainder
upon division by $R_t$,
\[
\deg R_{t+1}<\deg R_t=m-t=j+1.
\]
Therefore
\[
\deg R_{t+1}\leq j.
\]
Combining the two inequalities yields
\[
\deg R_{t+1}=j=m-t-1.
\]

This proves the induction step. By induction,
\[
\deg R_i=m-i
\qquad
\text{for every }0\leq i\leq s,
\]
so the Euclidean sequence is normal through $R_s$.
\end{proof}

\section{A Hankel determinant as a middle principal subresultant}\label{sec:hankel-psc}

For $n\ge1$, introduce coefficients $b_1,\ldots,b_{2n}$ and set
\[
F_n(x)=x^{2n},
\qquad
G_b(x)=b_1x^{2n-1}+b_2x^{2n-2}+\cdots+b_{2n}.
\]
All principal subresultants in this section are understood with the fixed degree bounds $(2n,2n-1)$ from \cref{def:psc}, so the identities remain polynomial identities even after a specialization with $b_1=0$.

\begin{lemma}[Monomial--Hankel identity]\label[lemma]{lem:monomial-hankel}
For $1\le k\le n$,
\[
\psc_{2n-k}^{2n,2n-1}(F_n,G_b)
=(-1)^{\binom{k}{2}}
\det(b_{i+j+1})_{0\le i,j<k}.
\]
In particular,
\[
\psc_n^{2n,2n-1}(F_n,G_b)
=(-1)^{\binom{n}{2}}
\det(b_{i+j+1})_{0\le i,j<n}.
\]
\end{lemma}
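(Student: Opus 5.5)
The plan is to compute the determinant $\det M_j^{2n,2n-1}(F_n,G_b)$ from \cref{def:psc} directly, with $m=2n$, $r=2n-1$ and $j=2n-k$. No descent through \cref{lem:psc-descent} is needed. The only inputs are the shape of the matrix and one row reversal.

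First I record the shape. The first block consists of the $r-j=k-1$ rows $x^{k-2}F_n,\ldots,F_n$, i.e.\ the monomials $x^{2n+k-2},\ldots,x^{2n}$. The second block consists of the $m-j=k$ rows $x^{k-1}G_b,\ldots,G_b$. The columns are indexed by the degrees $2n+k-2$ down to $2n-k$, giving $2k-1$ columns in total. For $k=1$ the first block is empty, the matrix is $1\times1$ with entry $b_1$, and the claim is immediate.

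Next I use the structure of the $F_n$ rows. Row $a$ of the first block, for $0\le a\le k-2$, is $x^{k-2-a}F_n=x^{2n+k-2-a}$. This is the standard unit vector in column $a$. So the upper-left $(k-1)\times(k-1)$ block is the identity, and the upper-right block, formed by the first-block rows and the last $k$ columns (degrees $2n-1,\ldots,2n-k$), is zero. The matrix is therefore block lower triangular. Its determinant equals the determinant of the lower-right $k\times k$ block: the $G_b$ rows restricted to the columns of degree $2n-1-c$, for $0\le c\le k-1$. The $G_b$ entries in the first $k-1$ columns play no role, and no cofactor sign arises from this block structure.

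Now I identify the entries of that block. Row $a$ of it, for $0\le a\le k-1$, is $x^{k-1-a}G_b$. Its entry in the column of degree $2n-1-c$ is the coefficient of $x^{2n-1-c-(k-1-a)}$ in $G_b$, which is $b_{c+k-a}$. Reindexing the rows by $a'=k-1-a$ gives the entries $b_{a'+c+1}$, which is exactly $(b_{i+j+1})_{0\le i,j<k}$. Reversing $k$ rows is a permutation of sign $(-1)^{\binom{k}{2}}$, and this yields the stated identity. Setting $k=n$ gives the particular case.

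The main obstacle is keeping the bookkeeping straight: the descending column convention and the descending row order fixed in \cref{def:psc}. A slip there would change the sign. I would guard against this by checking the cases $k=1$ and $k=2$ by hand, where the formula gives $b_1$ and $-(b_1b_3-b_2^2)$.

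The argument uses only the fixed degree bounds, not $b_1\ne0$, so the identity holds as a polynomial identity in $b_1,\ldots,b_{2n}$.
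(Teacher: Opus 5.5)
Your proposal is correct and follows essentially the same route as the paper: the $k-1$ rows of $F_n$ are unit vectors in the leading columns, so the determinant reduces to the $k\times k$ block of $G_b$-rows with entries $b_{k+c-a}$, and reversing those rows contributes $(-1)^{\binom{k}{2}}$ and produces the Hankel matrix $(b_{i+j+1})_{0\le i,j<k}$. Your block-lower-triangular phrasing restates the paper's ``clear below the pivots and expand'' step more cleanly, and your index computation and your $k=1$ and $k=2$ sanity checks are accurate.
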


\begin{proof}
Put $j=2n-k$.
The $F_n$ block of $M_j^{2n,2n-1}(F_n,G_b)$ contains the $k-1$ rows
\[
x^{k-2}F_n,\ldots,F_n.
\]
In the descending-degree columns of \cref{def:psc}, these are unit vectors in the first $k-1$ columns.
Clear the entries below these pivots and expand along them.

The remaining $k\times k$ matrix has rows, in order,
\[
x^{k-1}G_b,x^{k-2}G_b,\ldots,G_b.
\]
Index the rows by $r=0,\ldots,k-1$ and the remaining columns by $c=0,\ldots,k-1$.
A direct coefficient check gives the entry
\[
b_{k-r+c}.
\]
Reverse the order of the $k$ rows.
This uses $\binom{k}{2}$ transpositions and produces the Hankel matrix
\[
(b_{1+r+c})_{0\le r,c<k}.
\]
\end{proof}

Now specialize
\[
b_{r+1}=a_r\quad(0\le r\le2n-2),
\qquad b_{2n}=1,
\]
and write
\[
G_a(x)=a_0x^{2n-1}+a_1x^{2n-2}+\cdots+a_{2n-2}x+1.
\]
Then \cref{lem:monomial-hankel} gives the polynomial identity
\begin{equation}\label{eq:psc-hankel}
\psc_n^{2n,2n-1}(F_n,G_a)
=(-1)^{\binom n2}\Delta_n(a_0,\ldots,a_{2n-2}).
\end{equation}

\begin{lemma}[The normal locus is nonempty]\label[lemma]{lem:normal-locus}
There is a nonempty Zariski-open set
\[
U_n\subseteq F^{2n-1}
\]
such that, for every $a\in U_n$, one has $a_0\neq 0$ and the
ordinary Euclidean remainder sequence
\[
(R_0=F_n, R_1=G_a, R_2,\ldots)
\]
is normal through $R_n$; equivalently,
\[
\deg R_i=2n-i
\qquad
\text{for every }0\leq i\leq n.
\]
In particular,
\[
\deg R_n=n.
\]
\end{lemma}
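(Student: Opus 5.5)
The plan is to define $U_n$ as a basic open set cut out by principal subresultant coefficients, and then apply \cref{cor:normality}. Concretely, put
\[
U_n:=D\Bigl(a_0\cdot\prod_{k=2}^{n}\psc_{2n-k}^{2n,2n-1}(F_n,G_a)\Bigr)\subseteq\F^{2n-1}.
\]
On $U_n$ we have $a_0\neq0$, so $\deg R_0=2n$ and $\deg R_1=2n-1$. On this locus the fixed-bound determinants of \cref{def:psc} coincide with the principal subresultant coefficients of the actual pair $(R_0,R_1)$. Hence \cref{cor:normality}, applied with $m=2n$ and $s=n$, gives $\deg R_i=2n-i$ for $0\le i\le n$. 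Thus normality through $R_n$ holds on all of $U_n$, and it remains only to show that $U_n$ is nonempty.

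For nonemptiness, \cref{lem:monomial-hankel} with the specialization $b_{r+1}=a_r$, $b_{2n}=1$ identifies each factor, up to sign, with a leading Hankel minor
\[
\psc_{2n-k}^{2n,2n-1}(F_n,G_a)=\pm\det(a_{i+j})_{0\le i,j<k}=\pm\Delta_k(a_0,\ldots,a_{2k-2}),\qquad 2\le k\le n.
\]
For $k\le n$ we have $2k-2\le 2n-2$, so only the free variables $a_r$ occur and the specialization $b_{2n}=1$ never enters. By \cref{lem:nonvanishing} and the remark on intersections of basic open sets, it therefore suffices to show that $a_0$ and each $\Delta_k$ are nonzero polynomials.

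For this we exhibit a single witness point. One option is the point $a_r=\delta_{r,\text{even}}$ chosen so that the Hankel matrices become identity-like. Simpler still is to note that $\Delta_k$ contains the anti-diagonal monomial $\prod_{i=0}^{k-1}a_{k-1}$, up to sign. Cleanest of all is the diagonal term $\prod_{i=0}^{k-1}a_{2i}$. This monomial arises only from the identity permutation, because the multiset of indices $\{i+\pi(i)\}$ equals $\{0,2,\ldots,2k-2\}$ only when $\pi=\mathrm{id}$. That uniqueness follows by the same left-to-right argument used in \cref{lem:zero-weight-permutations}: the index $0$ forces $\pi(0)=0$, then $2$ forces $\pi(1)=1$, and so on. Hence each $\Delta_k\neq0$, and $a_0\neq0$ trivially.

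Main obstacle. The only delicate point is to check carefully that the fixed-bound determinant agrees with the genuine subresultant used in \cref{cor:normality}. This is immediate because that corollary is phrased with the same determinant, and its hypothesis on actual degrees is guaranteed by $a_0\neq0$.
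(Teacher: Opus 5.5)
Your proof is correct. Its first half matches the paper: you define $U_n$ by the nonvanishing of $a_0=D_1$ and of the subresultants $\psc_{2n-k}^{2n,2n-1}(F_n,G_a)=\pm\Delta_k$ for $2\le k\le n$, and then invoke the normality corollary with $m=2n$, $s=n$.

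The difference is in how you prove nonemptiness.

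\textbf{The paper's route.} It exhibits one explicit witness point. Take distinct positive rationals $\lambda_1,\ldots,\lambda_n$ and the power sums $a_s=\sum_t\lambda_t^s$. Then every leading Hankel matrix factors as a Gram matrix $V_kV_k^{\mathsf T}$ with $V_k$ of full row rank. So $D_1,\ldots,D_n$ are simultaneously positive at a single rational point, and this point remains valid in $\F$ because $\operatorname{char}\F=0$.

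\textbf{Your route.} You show that each $\Delta_k$ is a nonzero polynomial: the diagonal monomial $a_0a_2\cdots a_{2k-2}$ arises only from the identity permutation, so its coefficient is $1$. The nonvanishing lemma, applied to the product polynomial over the infinite field $\F$, then supplies a common point.

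\textbf{Comparison.} Your argument is shorter, purely combinatorial, and works verbatim over any infinite field. The paper's argument is constructive and gives one explicit point for all $k$ at once, without the nonvanishing lemma. Its cost is an appeal to positivity over $\mathbb{R}$, and hence to characteristic zero.

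\textbf{One correction.} Drop the first ``option'' you mention, $a_r=1$ for even $r$ and $a_r=0$ for odd $r$. It is not identity-like: it makes $(a_{i+j})$ a checkerboard matrix of rank at most $2$. So $\Delta_k=0$ for every $k\ge3$; for instance, rows $0$ and $2$ coincide. Your other two witnesses are fine: the anti-diagonal monomial $a_{k-1}^{k}$, with coefficient $(-1)^{\binom{k}{2}}$, and the diagonal monomial you actually use.
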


\begin{proof}
For $1\le k\le n$, define the leading Hankel minor
\[
D_k(a):=\det(a_{i+j})_{0\le i,j<k}.
\]
By \cref{lem:monomial-hankel}, up to the fixed sign $(-1)^{\binom k2}$,
\[
D_k(a)=\psc_{2n-k}^{2n,2n-1}(F_n,G_a).
\]
In particular, $D_1=a_0$.
Define
\[
U_n:=\{a:D_1(a)D_2(a)\cdots D_n(a)\ne0\}.
\]
This set is Zariski open.
On $U_n$, $G_a$ has actual degree $2n-1$, and \cref{cor:normality}, applied to the nonzero subresultants for $k=2,\ldots,n$, proves normality through $R_n$.

It remains to prove nonemptiness.
Choose distinct positive rational numbers
\[
\lambda_1,\ldots,\lambda_n
\]
and set
\[
a_s=\sum_{t=1}^n\lambda_t^s
\qquad(0\le s\le2n-2).
\]
For $k\le n$, let
\[
V_k=(\lambda_t^i)_{\substack{0\le i<k\\1\le t\le n}}.
\]
Then
\[
(a_{i+j})_{0\le i,j<k}=V_kV_k^{\mathsf T}.
\]
The matrix $V_k$ has row rank $k$: a nonzero polynomial of degree less than $k$ cannot vanish at the $n\ge k$ distinct points $\lambda_t$.
Thus $V_kV_k^{\mathsf T}$ is positive definite over $\mathbb R$, so $D_k(a)>0$.
Each $D_k(a)$ is therefore a nonzero rational number.
Every characteristic-zero field contains a canonical copy of $\Q$, so the same rational specialization remains nonzero over $\F$.
\end{proof}

\subsection{Scaling the second input}

\begin{lemma}[Scaling of the relevant subresultants]\label[lemma]{lem:scaling}
For $1\le k\le n$ and $\lambda\in\F$,
\[
\psc_{2n-k}^{2n,2n-1}(F_n,\lambda G)
=\lambda^k\psc_{2n-k}^{2n,2n-1}(F_n,G).
\]
Consequently, when $\lambda\ne0$, the nonvanishing conditions defining normality through $R_n$ are invariant under replacing $G$ by $\lambda G$.
\end{lemma}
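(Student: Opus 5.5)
The plan is to read the scaling directly off the matrix $M_{2n-k}^{2n,2n-1}(F_n,\lambda G)$ of \cref{def:psc}. With $j=2n-k$, this matrix has $(2n-1)-j=k-1$ rows coming from shifted copies of $F_n$ and $(2n)-j=k$ rows coming from shifted copies of $\lambda G$. The $F_n$ block does not depend on $\lambda$. Each row of the $G$ block is the coefficient vector of $x^{s}\lambda G=\lambda\,x^sG$, so it is $\lambda$ times the corresponding row of $M_{2n-k}^{2n,2n-1}(F_n,G)$.

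Multilinearity of the determinant in its rows then pulls out one factor of $\lambda$ from each of the $k$ rows of the $G$ block. This gives
\[
\psc_{2n-k}^{2n,2n-1}(F_n,\lambda G)=\lambda^k\psc_{2n-k}^{2n,2n-1}(F_n,G).
\]
The identity is purely formal. It holds for the fixed-bound determinant regardless of the actual degree of $G$, so no case split on $b_1=0$ is needed. As a cross-check I would compare with \cref{lem:monomial-hankel}: scaling $G$ by $\lambda$ scales every $b_i$, and a $k\times k$ Hankel determinant in the $b$'s is homogeneous of degree $k$, which again yields $\lambda^k$.

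For the consequence, I would recall from the proof of \cref{lem:normal-locus} what ``the nonvanishing conditions defining normality through $R_n$'' means. It is the nonvanishing of $\psc_{2n-k}^{2n,2n-1}(F_n,G)$ for $k=1,\ldots,n$. For $k=1$ this quantity is just the leading coefficient $b_1$, which guarantees actual degree $2n-1$. For $k=2,\ldots,n$ these are exactly the hypotheses of \cref{cor:normality}. When $\lambda\neq0$, the factor $\lambda^k$ is nonzero, so each condition holds for $\lambda G$ if and only if it holds for $G$.

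There is no real obstacle here. The only points needing care are the bookkeeping that the $G$ block has exactly $k$ rows, namely $m-j=2n-(2n-k)$, and the observation that scaling also preserves the actual degree $2n-1$ through the $k=1$ case.
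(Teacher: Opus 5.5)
Your proposal is correct and matches the paper's proof: the paper also counts the $2n-(2n-k)=k$ rows of $M_{2n-k}^{2n,2n-1}(F_n,G)$ that come from the second argument, pulls out one factor of $\lambda$ from each, and derives the invariance statement from \cref{cor:normality}. Your additional remarks are accurate but not needed for the argument: the Hankel-homogeneity cross-check via \cref{lem:monomial-hankel}, and the observation that the $k=1$ condition ($b_1\neq 0$) preserves the actual degree $2n-1$.
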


\begin{proof}
The matrix $M_{2n-k}^{2n,2n-1}(F_n,G)$ contains exactly
\[
2n-(2n-k)=k
\]
rows belonging to the second argument.
Scaling $G$ by $\lambda$ scales every one of these rows by $\lambda$.
The final assertion follows from \cref{cor:normality}.
\end{proof}

On the basic open set $D(a_0)$, define the monic rational normalization
\begin{equation}\label{eq:monic-normalization}
\widehat G_a:=a_0^{-1}G_a.
\end{equation}
This is monic of degree $2n-1$ wherever $a_0\ne0$.
The reduction below is performed only on $U_n\subseteq D(a_0)$, where the composed circuit is a formal piecewise rational circuit in the sense of \cref{def:piecewise}.
By \cref{lem:normal-locus,lem:scaling}, the Euclidean sequence of $(F_n,\widehat G_a)$ is normal through $R_n$ on $U_n$.

\section{From complete Euclidean remainders to Hankel determinants}\label{sec:euclid-to-hankel}

\begin{proof}[Proof of \Cref{thm:euclid-hankel}]
Assume that
\[
\mathrm{CERS}\in \mathrm{PAC}\text{-}\mathrm{AC}^{0}_{F},
\]
and restrict the assumed circuit family to the input-degree pair
\[
(m,r)=(2n,2n-1).
\]
Recall that
\[
F_n(x)=x^{2n}
\]
and
\[
G_a(x)
=
a_0x^{2n-1}+a_1x^{2n-2}+\cdots+a_{2n-2}x+1.
\]
On the basic open set
\[
D(a_0)=\{a\in F^{2n-1}:a_0\neq 0\},
\]
define the monic rational normalization
\[
\widehat{G}_a:=a_0^{-1}G_a.
\]
Thus \(\widehat{G}_a\) is monic of degree \(2n-1\) at every point of
\(D(a_0)\).

Substitute \(F_n\) for the first input and \(\widehat{G}_a\) for the
second input of the assumed CERS circuit.  This substitution introduces one
division gate computing \(a_0^{-1}\), followed by parallel multiplication gates 
computing the coefficients of \(a_0^{-1}G_a\). Hence it increases the size
only polynomially and the depth only by a constant.  Since \(a_0^{-1}\) is
defined throughout \(D(a_0)\), and since the assumed CERS circuit is defined
on every monic input pair of degrees \(2n\) and \(2n-1\), the resulting
circuit is a formal piecewise rational circuit on \(D(a_0)\).  Its output is
the complete padded ordinary Euclidean remainder sequence of
\[
(F_n,\widehat{G}_a).
\]

Let \(U_n\subseteq D(a_0)\) be the nonempty Zariski-open set from
Lemma~6.2.  By Lemma~6.3, scaling the second input by the nonzero scalar
\(a_0^{-1}\) preserves the relevant subresultant nonvanishing conditions.
Consequently, throughout \(U_n\), the ordinary Euclidean remainder sequence
\[
R_0=F_n,\qquad
R_1=\widehat{G}_a,\qquad
R_2,\ldots,R_n
\]
is normal through \(R_n\).  Write
\[
c_i:=\operatorname{lc}(R_i)
\qquad (1\leq i\leq n).
\]
Normality gives
\[
\deg R_i=2n-i
\qquad (0\leq i\leq n).
\]
By the output convention of Section~2.4, each \(c_i\) is therefore a fixed
coordinate of the \(i\)-th remainder slot.  Its position depends only on
\(n\) and \(i\), and not on the input point \(a\in U_n\).

Applying Proposition~5.4 with \(m=2n\) and \(i=n\) gives
\begin{equation}
\operatorname{psc}^{\,2n,2n-1}_{n}
\bigl(F_n,\widehat{G}_a\bigr)
=
c_n\prod_{i=1}^{n-1}c_i^2.
\label{eq:psc-remainder-product}
\end{equation}
On the other hand, Lemma~6.3, applied with \(k=n\) and
\(\lambda=a_0^{-1}\), gives
\[
\operatorname{psc}^{\,2n,2n-1}_{n}
\bigl(F_n,\widehat{G}_a\bigr)
=
a_0^{-n}
\operatorname{psc}^{\,2n,2n-1}_{n}
\bigl(F_n,G_a\bigr).
\]
By the monomial--Hankel identity~\eqref{eq:psc-hankel},
\[
\operatorname{psc}^{\,2n,2n-1}_{n}
\bigl(F_n,G_a\bigr)
=
(-1)^{\binom{n}{2}}
\Delta_n(a_0,\ldots,a_{2n-2}).
\]
Combining the preceding two identities with~\eqref{eq:psc-remainder-product} yields, throughout
\(U_n\),
\begin{equation}
\Delta_n(a_0,\ldots,a_{2n-2})
=
(-1)^{\binom{n}{2}}
a_0^n c_n\prod_{i=1}^{n-1}c_i^2.
\label{eq:hankel-from-remainders}
\end{equation}
For \(n=1\), the product in~\eqref{eq:hankel-from-remainders} is empty and is interpreted as \(1\).

The right-hand side of~\eqref{eq:hankel-from-remainders} is computed from the input coordinate
\(a_0\) and the fixed remainder-output coordinates
\(c_1,\ldots,c_n\) with constant additional depth and polynomial size.
Indeed, the squares
\[
c_1^2,\ldots,c_{n-1}^2
\]
are computed in parallel.  A single unbounded-fan-in multiplication gate
then receives
\[
(-1)^{\binom{n}{2}},\qquad
c_n,\qquad
c_1^2,\ldots,c_{n-1}^2,
\]
together with \(n\) copies of the input \(a_0\).  The repeated wires carrying
\(a_0\) produce the factor \(a_0^n\).  Thus the additional circuit has
\(O(n)\) wires and constant depth.

We have therefore constructed a scalar-output formal piecewise rational
circuit \(\Phi_n\), of polynomial size and constant depth, that is defined
on \(D(a_0)\) and agrees with
\[
\Delta_n(a_0,\ldots,a_{2n-2})
\]
on the nonempty open subset \(U_n\).  Restrict \(\Phi_n\) to \(U_n\) and
apply Lemma~2.4.  Since \(F\) has characteristic zero, it is infinite, so
the lemma applies and produces an ordinary rational arithmetic circuit
\(C_n\) of no larger size or depth such that
\[
C_n
=
\Delta_n
\]
identically in the rational function field
\[
F(a_0,\ldots,a_{2n-2}).
\]

Finally, \(\Delta_n\) is a polynomial of total degree \(n\).  The
depth-preserving division-elimination theorem of Andrews and Wigderson
\cite[Theorem~2.13]{AW} therefore converts \(C_n\) into a
division-free arithmetic circuit of polynomial size and constant depth.
Consequently,
\[
(\Delta_n)_{n\geq 1}\in \mathrm{AC}^{0}_{F}.
\]
This proves the theorem.
\end{proof}

\section{Proof of the main theorem}\label{sec:main-proof}

\begin{proof}[Proof of \Cref{thm:main}]
By \cref{thm:euclid-hankel,thm:hankel-lb},
$\CERS\notin\PAC^0_{\F}$.  The complete remainder list is a coordinate
projection of the canonically defined extended output $\CEEA$.  Coordinate
projection does not increase size or depth, so
$\CEEA\notin\PAC^0_{\F}$ as well.
\end{proof}

\section{Further Consequences and Complexity-Theoretic Implications}

The argument above yields lower bounds for two further classical
outputs associated with the Euclidean algorithm. The first is the
complete polynomial continued-fraction expansion of a rational
function. The second is the complete profile of fixed-bound principal
subresultant coefficients. In both cases, the full output reveals the
same middle Hankel determinant used in the proof of the main theorem.

\begin{definition}[Complete quotient scheme]
For a monic pair
\[
(f,g)\in\mathcal{M}_{m,r},
\]
let
\[
R_0=f,
\qquad
R_1=g,
\qquad
R_{i-1}=Q_iR_i+R_{i+1}
\]
be its ordinary Euclidean sequence, with the final exact division
written as
\[
R_{t-1}=Q_tR_t.
\]
Define
\[
\operatorname{CQS}_{m,r}(f,g)
:=
\operatorname{Pad}_m(Q_1,\ldots,Q_t),
\]
using the padding convention of Definition~2.6. Equivalently, this
output is the complete polynomial continued-fraction expansion
\[
\frac{f}{g}
=
Q_1+
\cfrac{1}{
Q_2+
\cfrac{1}{
\ddots+
\cfrac{1}{Q_t}
}
}.
\]
\end{definition}

\begin{theorem}[Complete polynomial continued fractions]
\label{thm:continued-fraction-lower-bound}
Let \(F\) be a field of characteristic zero. The complete quotient
scheme is not in
\[
\mathrm{PAC}\text{-}\mathrm{AC}^{0}_{F}.
\]
In particular, the complete polynomial continued-fraction expansion
of a rational function cannot be computed by polynomial-size,
constant-depth piecewise arithmetic circuits.
\end{theorem}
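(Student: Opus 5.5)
Our plan is to mimic the proof of \Cref{thm:euclid-hankel}: assume $\operatorname{CQS}\in\PAC^0_{\F}$, restrict to degree pair $(m,r)=(2n,2n-1)$, substitute $F_n=x^{2n}$ and $\widehat G_a=a_0^{-1}G_a$, and show that on the nonempty open set $U_n$ of \cref{lem:normal-locus} a fixed polynomial expression in fixed output coordinates and $a_0$ equals $\Delta_n$. Then \cref{lem:generic-extraction} and the division elimination of \cite[Theorem~2.13]{AW} give $(\Delta_n)\in\AC^0_{\F}$, contradicting \Cref{thm:hankel-lb}.

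The key step is to recover the leading coefficients $c_i=\lc(R_i)$, $1\le i\le n$, from the quotients. On $U_n$ the sequence is normal through $R_n$, so each $Q_i$ ($1\le i\le n-1$, and in fact through $i=n$ if $R_{n+1}$ has the expected form, but we need only $i<n$) is linear with $\lc(Q_i)=c_{i-1}/c_i$. Since $R_0=F_n$ and $R_1=\widehat G_a$ are monic, $c_0=c_1=1$, and inductively
\[
c_i=\frac{c_{i-1}}{\lc(Q_i)},\qquad\text{so}\qquad c_i=\prod_{k=2}^{i}\lc(Q_k)^{-1}\quad(2\le i\le n).
\]
Because normality gives $\deg Q_i=1$ for $1\le i\le n$, the coordinate $\lc(Q_i)=[x^1]Q_i$ sits in a fixed position of the $i$th quotient slot of $\operatorname{Pad}_m$, independent of $a\in U_n$. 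We also need $\lc(Q_i)\ne0$ on $U_n$, which is automatic since it equals $c_{i-1}/c_i$ with both factors nonzero there.

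Substituting into \eqref{eq:hankel-from-remainders} we get, on $U_n$,
\[
\Delta_n=(-1)^{\binom n2}a_0^n\,c_n\prod_{i=1}^{n-1}c_i^2,
\]
with each $c_i$ a product of inverses of fixed quotient coordinates. This can be realized in constant depth: one layer of divisions computing $q_k^{-1}$ for $q_k=[x^1]Q_k$, and then a single unbounded fan-in product gate, with appropriate wire multiplicities (the exponent of $q_k^{-1}$ is $2(n-k)+1$ for $k\ge2$, which is polynomially bounded), together with $n$ copies of $a_0$ and the sign constant. The resulting circuit is a formal piecewise rational circuit on $U_n$, since all divisions are by nonvanishing quantities there, which is exactly the setting of \cref{lem:generic-extraction}.

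The main obstacle is the domain bookkeeping for the new divisions by $q_k$: they are defined on $U_n$ but not on all of $D(a_0)$. We handle this by stating the intermediate circuit only on $U_n$, which is allowed by the restricted-domain convention, and by observing that \cref{lem:generic-extraction} only needs definedness on the given open set. After extraction, $C_n=\Delta_n$ identically, and because $\Delta_n$ has degree $n$, \cite[Theorem~2.13]{AW} removes the divisions with polynomial size and constant depth. The second assertion of the statement is then just a rewording, since the continued-fraction expansion consists precisely of the quotients.
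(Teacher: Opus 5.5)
Your proposal is correct and follows essentially the same route as the paper. Both arguments recover $c_i$ from the linear quotients via $c_{i-1}=\lc(Q_i)\,c_i$ with $c_0=c_1=1$. Both then substitute into \eqref{eq:hankel-from-remainders} to write $\Delta_n$ as $(-1)^{\binom n2}a_0^n$ times inverse powers of fixed quotient coordinates, and finish with \cref{lem:generic-extraction} and constant-depth division elimination on $U_n$.

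There is one small slip. Your parenthetical "we need only $i<n$" is wrong, since $c_n=c_{n-1}/\lc(Q_n)$ requires $Q_n$. This is harmless: on $U_n$ one has $\deg Q_n=\deg R_{n-1}-\deg R_n=1$ no matter what $R_{n+1}$ is, including when $R_{n-1}=Q_nR_n$ is the final exact division, and you in fact use this one sentence later.
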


\begin{proof}
Assume, toward a contradiction, that the maps
\[
\operatorname{CQS}_{m,r}
\]
have polynomial-size, constant-depth piecewise arithmetic circuits.

Use the special inputs from Sections~6 and~7. Thus
\[
F_n(x)=x^{2n},
\]
\[
G_a(x)
=
a_0x^{2n-1}
+a_1x^{2n-2}
+\cdots
+a_{2n-2}x
+1,
\]
and, on the basic open set \(D(a_0)\), put
\[
\widehat{G}_a(x)
=
a_0^{-1}G_a(x).
\]
Let
\[
U_n\subseteq D(a_0)
\]
be the nonempty Zariski-open set from Lemma~6.2. As explained in
Section~7, scaling the second input by \(a_0^{-1}\) preserves the
relevant normality conditions. Hence, on \(U_n\), the Euclidean
sequence
\[
R_0=F_n,
\qquad
R_1=\widehat{G}_a,
\qquad
R_2,\ldots,R_n
\]
is normal through \(R_n\). Write
\[
c_i:=\operatorname{lc}(R_i)
\qquad
(0\leq i\leq n).
\]
Since both input polynomials are monic,
\[
c_0=c_1=1.
\]

For \(1\leq i\leq n\), normality gives
\[
\deg R_{i-1}=2n-i+1
\]
and
\[
\deg R_i=2n-i,
\]
so \(Q_i\) has degree one. Let
\[
q_i:=[x]Q_i=\operatorname{lc}(Q_i).
\]
The coefficient \(q_i\) is a fixed coordinate of the \(i\)-th
quotient slot. Comparing leading coefficients in
\[
R_{i-1}=Q_iR_i+R_{i+1}
\]
gives
\[
c_{i-1}=q_ic_i.
\]
All the \(c_i\) are nonzero on \(U_n\), and consequently all the
\(q_i\) are nonzero there. Iterating the preceding identity yields
\[
c_i
=
\left(
\prod_{j=1}^{i}q_j
\right)^{-1}
\qquad
(1\leq i\leq n).
\]

Equation \eqref{eq:hankel-from-remainders} of \Cref{sec:euclid-to-hankel} states that, throughout \(U_n\),
\[
\Delta_n(a_0,\ldots,a_{2n-2})
=
(-1)^{\binom{n}{2}}
a_0^n c_n
\prod_{i=1}^{n-1}c_i^2.
\]
Substituting the expressions for the \(c_i\) gives
\[
\Delta_n(a_0,\ldots,a_{2n-2})
=
(-1)^{\binom{n}{2}}
a_0^n
\prod_{j=1}^{n}
q_j^{-\left(2(n-j)+1\right)}.
\]
Indeed, \(q_j^{-1}\) occurs once through \(c_n\), and it occurs twice
through each of
\[
c_j,c_{j+1},\ldots,c_{n-1}.
\]

The right-hand side is computable from the complete quotient output
by an ordinary rational circuit of polynomial size and constant
depth. For example, each inverse \(q_j^{-1}\) is computed in parallel,
and the required powers are formed by feeding polynomially many
copies to an unbounded-fan-in multiplication gate. All the divisions
are defined on \(U_n\).

Therefore, the assumed circuits for
\[
\operatorname{CQS}_{2n,2n-1},
\]
after the rational substitution
\[
G_a\longmapsto\widehat{G}_a
\]
and the displayed postprocessing, give formal piecewise rational
circuits that agree with \(\Delta_n\) on \(U_n\).

Lemma~2.4 removes the select gates and produces polynomial-size,
constant-depth ordinary rational circuits computing \(\Delta_n\)
identically. Since \(\Delta_n\) has degree \(n\), constant-depth
division elimination gives division-free circuits of polynomial size
and constant depth for the consecutive Hankel determinants. This
contradicts Theorem~1.2.
\end{proof}

\begin{definition}[Complete principal-subresultant profile]
For fixed degree bounds \(m>r\), define
\[
\operatorname{PSP}_{m,r}(f,g)
:=
\bigl(
\operatorname{psc}^{m,r}_0(f,g),
\operatorname{psc}^{m,r}_1(f,g),
\ldots,
\operatorname{psc}^{m,r}_r(f,g)
\bigr),
\]
where the fixed-bound principal subresultant coefficients are those
of Definition~5.1. We call
\[
\operatorname{PSP}
\]
the complete principal-subresultant profile.
\end{definition}

\begin{theorem}[Complete principal-subresultant profiles]
\label{thm:subresultant-profile-lower-bound}
Let \(F\) be a field of characteristic zero. The complete
principal-subresultant profile is not in
\[
\mathrm{PAC}\text{-}\mathrm{AC}^{0}_{F},
\]
even when the input is restricted to monic pairs of consecutive
degree bounds.
\end{theorem}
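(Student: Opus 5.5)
We argue by contradiction, as in the proof of \Cref{thm:continued-fraction-lower-bound}, but the postprocessing is even simpler. Suppose the maps $\operatorname{PSP}_{m,r}$ have polynomial-size, constant-depth piecewise circuits on monic inputs, and restrict to $(m,r)=(2n,2n-1)$. We feed in $F_n(x)=x^{2n}$ as the first input. As the second input we use the monic normalization $\widehat G_a=a_0^{-1}G_a$ from \eqref{eq:monic-normalization}. This substitution adds one division gate computing $a_0^{-1}$ and one parallel layer of multiplications. The composed object is therefore a formal piecewise rational circuit on $D(a_0)$ of polynomial size and constant depth. Because the assumed circuit is defined on all monic pairs of these degree bounds, every division in it remains defined on $D(a_0)$.

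Next we read off the middle coordinate $\psc_n^{2n,2n-1}(F_n,\widehat G_a)$, which sits at a fixed position of the output. By \cref{lem:scaling} with $k=n$ and $\lambda=a_0^{-1}$, this coordinate equals $a_0^{-n}\psc_n^{2n,2n-1}(F_n,G_a)$. By \eqref{eq:psc-hankel}, it therefore equals $a_0^{-n}(-1)^{\binom n2}\Delta_n(a_0,\ldots,a_{2n-2})$ at every point of $D(a_0)$. Multiplying by $(-1)^{\binom n2}$ and by $n$ wired copies of $a_0$ in a single unbounded fan-in product gate gives a circuit that agrees with $\Delta_n$ on all of $D(a_0)$. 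This step does not need normality or the locus $U_n$. It uses only the fixed-bound polynomial identities, which hold even without the degree assumptions, since \cref{def:psc} is a determinant in the coefficient variables.

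We then apply \cref{lem:generic-extraction} with $U=D(a_0)$, which is nonempty and open. Since characteristic zero makes $\F$ infinite, this yields an ordinary rational circuit of no larger size or depth that computes $\Delta_n$ identically. Since $\deg\Delta_n=n$, the constant-depth division elimination of \cite[Theorem~2.13]{AW} converts it into a division-free $\AC^0_{\F}$ circuit. This contradicts \cref{thm:hankel-lb}.

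The only real point of care is the domain bookkeeping. We must check that the substituted circuit is a legitimate formal piecewise circuit on $D(a_0)$, and that the identity with $\Delta_n$ holds there as a genuine pointwise identity rather than merely on a smaller set; the fixed-bound identity \eqref{eq:psc-hankel} gives exactly this. We must also note that the restriction to monic inputs of consecutive degree bounds is respected, since $(2n,2n-1)$ is consecutive and both inputs are monic. This is what justifies the strengthened clause in the statement.
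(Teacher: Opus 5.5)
Your proposal is correct and follows essentially the same route as the paper's proof. Both substitute $(F_n,\widehat G_a)$ on $D(a_0)$, read off the index-$n$ coordinate, and use \cref{lem:scaling} with \eqref{eq:psc-hankel} to get $\Delta_n=(-1)^{\binom n2}a_0^n\psc_n^{2n,2n-1}(F_n,\widehat G_a)$ on all of $D(a_0)$ without any normality locus, then finish with \cref{lem:generic-extraction} and constant-depth division elimination to contradict \cref{thm:hankel-lb}.
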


\begin{proof}
Assume, toward a contradiction, that the maps
\[
\operatorname{PSP}_{m,r}
\]
have polynomial-size, constant-depth piecewise arithmetic circuits on
monic inputs.

Again take
\[
F_n(x)=x^{2n}
\]
and
\[
\widehat{G}_a(x)=a_0^{-1}G_a(x)
\]
on \(D(a_0)\). The pair
\[
\bigl(F_n,\widehat{G}_a\bigr)
\]
is monic and has degree bounds
\[
(2n,2n-1).
\]
Extract the coordinate with index \(n\) from
\[
\operatorname{PSP}_{2n,2n-1}
\bigl(F_n,\widehat{G}_a\bigr).
\]

By Lemma~6.3, scaling the second input by \(a_0^{-1}\) gives
\[
\operatorname{psc}^{2n,2n-1}_n
\bigl(F_n,\widehat{G}_a\bigr)
=
a_0^{-n}
\operatorname{psc}^{2n,2n-1}_n(F_n,G_a).
\]
By the monomial--Hankel identity~\eqref{eq:psc-hankel},
\[
\operatorname{psc}^{2n,2n-1}_n(F_n,G_a)
=
(-1)^{\binom{n}{2}}
\Delta_n(a_0,\ldots,a_{2n-2}).
\]
Consequently, throughout \(D(a_0)\),
\[
\Delta_n(a_0,\ldots,a_{2n-2})
=
(-1)^{\binom{n}{2}}
a_0^n
\operatorname{psc}^{2n,2n-1}_n
\bigl(F_n,\widehat{G}_a\bigr).
\]

Thus the assumed profile circuit, composed with the rational monic
normalization and followed by one fixed coordinate projection and
constant-depth multiplication, gives a polynomial-size,
constant-depth formal piecewise rational circuit agreeing with
\(\Delta_n\) on the nonempty Zariski-open set \(D(a_0)\).

Lemma~2.4 removes the select gates, and constant-depth division
elimination then gives polynomial-size, constant-depth division-free
circuits for \(\Delta_n\). This again contradicts Theorem~1.2.
\end{proof}

The second theorem highlights a useful contrast. A single resultant
is only one extremal member of the subresultant data, whereas the
complete profile contains the middle coefficients that encode the
hard Hankel family. More generally, any canonical complete
subresultant output from which the principal subresultant
coefficients are obtained by a fixed coordinate projection inherits
the same lower bound.

\subsection*{Pad\'e approximation}

We finally record a lower bound for a standard Pad\'e-approximation
problem.  Unlike the preceding consequences, the reduction below does
not require the complete Euclidean transcript.  A polynomial number of
ordinary normalized subdiagonal Pad\'e approximants, computed in
parallel, already exposes a consecutive Hankel determinant.

\begin{definition}[Normalized subdiagonal Pad\'e approximation]
\label{def:pade}
Let $k\geq 1$, and let
\[
C(z)=c_0+c_1z+\cdots+c_{2k-1}z^{2k-1}.
\]
A normalized $[k-1/k]$ Pad\'e approximant to $C$ is a pair of
polynomials
\[
P(z),Q(z)\in F[z],
\]
such that
\[
\deg P\leq k-1,\qquad
\deg Q\leq k,\qquad
Q(0)=1,
\]
and
\[
Q(z)C(z)-P(z)\equiv 0 \pmod{z^{2k}}.
\]
Write
\[
Q(z)=1+q_1z+\cdots+q_kz^k.
\]

Define
\[
\mathcal P_k
:=
\left\{
(c_0,\ldots,c_{2k-1})\in F^{2k}:
\det(c_{i+j})_{0\leq i,j<k}\neq 0
\right\}.
\]
On $\mathcal P_k$ the normalized $[k-1/k]$ Pad\'e approximant is
uniquely determined.  We denote the corresponding output map by
$\operatorname{Pade}_k$.
\end{definition}

\begin{theorem}[Lower bound for Pad\'e approximation]
\label{thm:pade-lower-bound}
Let $F$ be a field of characteristic zero.  The family of normalized
subdiagonal Pad\'e-approximation maps
\[
(\operatorname{Pade}_k)_{k\geq1}
\]
cannot be computed on the domains $\mathcal P_k$ by polynomial-size,
constant-depth piecewise arithmetic circuits.  In fact, the same lower
bound holds even if the required output consists only of the
denominator $Q$.
\end{theorem}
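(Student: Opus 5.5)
Assume, toward a contradiction, that polynomial-size, constant-depth piecewise circuits compute the denominator map $C\mapsto Q$ on $\mathcal P_k$ for every $k$. The abstract already names the strategy: use the classical determinantal formula for the top denominator coefficient, run polynomially many Pad\'e computations in parallel, and telescope the resulting ratios. Then, exactly as in the proof of \Cref{thm:euclid-hankel}, remove select gates generically and eliminate divisions, which contradicts \Cref{thm:hankel-lb}.

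\textbf{Determinantal ratio.} The congruence $QC\equiv P \pmod{z^{2k}}$ with $\deg P\le k-1$ says that the coefficients of $z^{k},\ldots,z^{2k-1}$ in $QC$ vanish. Since $q_0=1$, this gives
\[
\sum_{j=0}^{k} q_j c_{s-j}=0,\qquad k\le s\le 2k-1.
\]
This is a $k\times k$ linear system for $(q_1,\dots,q_k)$. After reversing the column order, its coefficient matrix is the Hankel matrix $(c_{i+j})_{0\le i,j<k}$, up to a sign $\pm1$ that depends only on $k$. Cramer's rule then gives the top coefficient. The numerator is $\pm$ a shifted Hankel determinant, obtained by replacing the column belonging to $q_k$ with $-(c_k,\dots,c_{2k-1})^{\mathsf T}$; this is $\det(c_{i+j})_{1\le i\le k,\,0\le j<k}$. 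Writing $H_k(c):=\det(c_{i+j})_{0\le i,j<k}$, we get
\[
q_k=\pm\frac{\det(c_{i+j+1})_{0\le i,j<k}}{H_k(c)},
\]
valid on $\mathcal P_k$.

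\textbf{Telescoping.} To reach $\Delta_n(a_0,\dots,a_{2n-2})$, I would feed the $k$-th parallel Pad\'e circuit a coefficient sequence whose shifted and unshifted Hankel minors are consecutive leading minors of one Hankel matrix. The intended choice is $c_s:=a_{s}$ on the even-indexed chain, combined with the companion sequence $a_{s+1}$, so that the ratios multiply out. The cleanest form I would aim for is the following. Take $k$ from $1$ to $n$, with input $c^{(k)}=(a_0,\dots,a_{2k-1})$, where $a_{2k-1}$ is an auxiliary variable when $k=n$. Then $q_k^{(k)}$ relates $H_k$ to a shifted minor. Pairing it with a second family using the shifted input $c'=(a_1,a_2,\dots)$ should give ratios of the form $D_{k+1}/D_k$ up to known sign. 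If that pairing does not close up, the fallback is to use the denominator coefficient $q_k$ for the pair $[k-1/k]$ and $[k/k+1]$ (the $[k/k+1]$ approximant being the approximant of order $k+1$) on the same sequence, together with the classical identity, from the Jacobi/Sylvester determinant identity, relating $q_k$ of consecutive orders to $H_{k+1}H_{k-1}/H_k^2$ and shifted analogues. In either version, a product of $n$ such ratios, times the explicit factor $D_1=a_0$, telescopes to $\pm\Delta_n$. It is computed by one unbounded fan-in multiplication gate plus parallel inversions, all defined on the set $W:=\{D_1\cdots D_n\cdot D'_1\cdots D'_n\neq 0\}$.

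\textbf{Open set and cleanup.} $W$ is nonempty by the same moment-sequence construction as in \Cref{lem:normal-locus}. With $a_s=\sum_t\lambda_t^s$ and distinct positive $\lambda_t$, both the plain and the shifted Hankel minors are Gram determinants, of $V_kV_k^{\mathsf T}$ and of $V_k\operatorname{diag}(\lambda)V_k^{\mathsf T}$ respectively, hence positive. Each parallel input lies in the corresponding $\mathcal P_k$ on $W$, so the composed circuit is a formal piecewise rational circuit on $W$ agreeing with $\Delta_n$. \Cref{lem:generic-extraction} and \cite[Theorem~2.13]{AW} then give $(\Delta_n)\in\AC^0_\F$, contradicting \Cref{thm:hankel-lb}.

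\textbf{Main obstacle.} I expect the hardest step to be the exact bookkeeping of the telescoping identity: choosing which shifted input sequences to feed so that the Cramer ratios multiply to $\Delta_n$ with a sign that is controlled. I would settle it first by checking $k=1,2$ by hand and then proving the general identity with the Desnanot--Jacobi identity on Hankel matrices.
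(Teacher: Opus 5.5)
Your Cramer computation of the top coefficient, $q_k=(-1)^k\det(c_{i+j+1})_{0\le i,j<k}/\det(c_{i+j})_{0\le i,j<k}$, is correct and is also the paper's starting point. The gap is in the telescoping identity, which is the step that actually carries the reduction: you do not establish it, and the pairing you propose cannot supply it.

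Write $\Delta_{s,k}=\det(a_{s+i+j})_{0\le i,j<k}$. On any input built from a window of the variables, the top denominator coefficient is $\pm\Delta_{s+1,k}/\Delta_{s,k}$. This is a ratio of two Hankel determinants of the \emph{same} size $k$. In particular, your shifted family $(a_1,a_2,\dots)$ gives $\Delta_{2,k}/\Delta_{1,k}$, not $D_{k+1}/D_k$. Each such factor has one size-$k$ determinant in the numerator and one in the denominator. Products of them therefore telescope only within a fixed size (e.g.\ to $\Delta_{S,k}/\Delta_{0,k}$, with an uncontrolled endpoint) and never pass to a different size. The Desnanot--Jacobi fallback is only gestured at. You do not say which output quantities are combined, what the resulting product is, or why its exponents stay polynomially bounded. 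So the claim that $n$ ratios times $a_0$ telescope to $\pm\Delta_n$ is unsupported.

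The missing idea is a way to read off a ratio of Hankel determinants whose sizes differ by one. The paper runs each Pad\'e circuit twice, on $C^{(\tau)}_{s,k}(z)=\sum_{r=0}^{2k-2}a_{s+r}z^r+\tau z^{2k-1}$ with $\tau\in\{0,1\}$.
\begin{itemize}
\item The Cramer numerator $\det(c_{1+i+j})_{0\le i,j<k}$ is affine in its bottom-right entry $c_{2k-1}=\tau$, and its slope is the cofactor $\Delta_{s+1,k-1}$.
\item Hence $(-1)^k\bigl(q^{(1)}_k-q^{(0)}_k\bigr)=\Delta_{s+1,k-1}/\Delta_{s,k}$.
\item Along the diagonal chain $(s,k)=(0,n),(1,n-1),\dots,(n-1,1)$, these ratios multiply to $\Delta_{n,0}/\Delta_{0,n}=1/\Delta_n$.
\end{itemize}
This uses only $2n$ parallel copies, no auxiliary variables, and the open set $\bigcap_s D(\Delta_{s,n-s})$.

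Since the full denominator is available, a Schur-complement identity would also work. For input $(a_0,\dots,a_{2k-1})$ one has $a_{2k}+\sum_{j=1}^{k}q_ja_{2k-j}=D_{k+1}/D_k$, so $\Delta_n=a_0\prod_{k=1}^{n-1}D_{k+1}/D_k$.

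Once such an identity is in place, the rest of your argument goes through: the moment-sequence positivity argument for the open set (adjusted to whichever determinants are used), generic select removal, and division elimination.
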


\begin{proof}
Assume, toward a contradiction, that the normalized denominators of
the $[k-1/k]$ Pad\'e approximants can be computed by polynomial-size,
constant-depth piecewise arithmetic circuits.

Fix $n\geq1$ and introduce variables
\[
a_0,a_1,\ldots,a_{2n-2}.
\]
For integers $s,k\geq0$ with $k\geq1$ and
$s+2k-2\leq2n-2$, put
\[
\Delta_{s,k}
:=
\det(a_{s+i+j})_{0\leq i,j<k},
\]
and set
\[
\Delta_{s,0}:=1.
\]
Thus
\[
\Delta_{0,n}
=
\Delta_n(a_0,\ldots,a_{2n-2}).
\]

Consider the nonempty Zariski-open set
\[
U_n
:=
\bigcap_{s=0}^{n-1}
D\!\left(\Delta_{s,n-s}\right).
\]
Each polynomial $\Delta_{s,k}$ is nonzero: for example, after setting
$a_{s+k-1}=1$ and all the other variables occurring in that determinant
equal to zero, the corresponding Hankel matrix is the anti-identity
matrix, up to row order, and therefore has nonzero determinant.
Hence \cref{lem:nonvanishing} shows that $U_n$ is nonempty.

For $0\leq s\leq n-1$, put
\[
k:=n-s.
\]
For each $\tau\in\{0,1\}$ consider the truncated power series
\[
C_{s,k}^{(\tau)}(z)
:=
\sum_{r=0}^{2k-2}a_{s+r}z^r
+
\tau z^{2k-1}.
\]
Let
\[
Q_{s,k}^{(\tau)}(z)
=
1+
q_{s,k,1}^{(\tau)}z+\cdots+
q_{s,k,k}^{(\tau)}z^k
\]
be the denominator of its normalized $[k-1/k]$ Pad\'e approximant.

The coefficient equations in degrees
\[
k,k+1,\ldots,2k-1
\]
are
\[
c_{k+r}
+
\sum_{j=1}^{k}q_{s,k,j}^{(\tau)}c_{k+r-j}
=0,
\qquad
0\leq r<k,
\]
where
\[
c_r=a_{s+r}\quad(0\leq r\leq2k-2),
\qquad
c_{2k-1}=\tau.
\]
If the unknowns are ordered as
\[
\bigl(
q_{s,k,k}^{(\tau)},
q_{s,k,k-1}^{(\tau)},
\ldots,
q_{s,k,1}^{(\tau)}
\bigr)^T,
\]
their coefficient matrix is
\[
H_{s,k}
=
(a_{s+i+j})_{0\leq i,j<k}.
\]
Its determinant is $\Delta_{s,k}$, which is nonzero on $U_n$.
Consequently the normalized denominator exists uniquely there.

We now compute its highest coefficient by Cramer's rule.  Let
\[
\widetilde\Delta_{s+1,k}^{(\tau)}
:=
\det
\begin{pmatrix}
a_{s+1} & a_{s+2} & \cdots & a_{s+k}\\
a_{s+2} & a_{s+3} & \cdots & a_{s+k+1}\\
\vdots  & \vdots  &        & \vdots\\
a_{s+k-1} & a_{s+k} & \cdots & a_{s+2k-2}\\
a_{s+k} & a_{s+k+1} & \cdots & \tau
\end{pmatrix},
\]
where, more invariantly, this is the Hankel determinant
\[
\det(c_{1+i+j})_{0\leq i,j<k}
\]
formed from the coefficients of $C_{s,k}^{(\tau)}$.

Replacing the first column of $H_{s,k}$ by the negative right-hand
side and then moving that column to the last position gives
\[
q_{s,k,k}^{(\tau)}
=
(-1)^k
\frac{\widetilde\Delta_{s+1,k}^{(\tau)}}
     {\Delta_{s,k}}.
\]
The two determinants
$\widetilde\Delta_{s+1,k}^{(1)}$ and
$\widetilde\Delta_{s+1,k}^{(0)}$
differ only in their bottom-right entry.  Expanding their difference
with respect to that entry gives
\[
\widetilde\Delta_{s+1,k}^{(1)}
-
\widetilde\Delta_{s+1,k}^{(0)}
=
\Delta_{s+1,k-1}.
\]
Therefore
\[
(-1)^k
\left(
q_{s,k,k}^{(1)}
-
q_{s,k,k}^{(0)}
\right)
=
\frac{\Delta_{s+1,k-1}}{\Delta_{s,k}}.
\tag{12}
\label{eq:pade-hankel-ratio}
\]

Define
\[
\rho_{s,k}
:=
(-1)^k
\left(
q_{s,k,k}^{(1)}
-
q_{s,k,k}^{(0)}
\right).
\]
Applying~\eqref{eq:pade-hankel-ratio} for
\[
(s,k)
=
(0,n),(1,n-1),\ldots,(n-1,1)
\]
and multiplying the resulting identities gives the telescoping product
\[
\prod_{s=0}^{n-1}\rho_{s,n-s}
=
\frac{\Delta_{1,n-1}}{\Delta_{0,n}}
\frac{\Delta_{2,n-2}}{\Delta_{1,n-1}}
\cdots
\frac{\Delta_{n,0}}{\Delta_{n-1,1}}
=
\frac{1}{\Delta_{0,n}},
\]
because $\Delta_{n,0}=1$.  Hence, throughout $U_n$,
\[
\Delta_n(a_0,\ldots,a_{2n-2})
=
\left(
\prod_{s=0}^{n-1}
\rho_{s,n-s}
\right)^{-1}.
\tag{13}
\label{eq:pade-hankel-recovery}
\]

We now use the assumed Pad\'e circuits.  For each
$s=0,\ldots,n-1$, take two copies of the circuit for the normalized
$[n-s-1/n-s]$ Pad\'e denominator, substitute respectively
$C_{s,n-s}^{(0)}$ and $C_{s,n-s}^{(1)}$, and extract the coefficient
of $z^{n-s}$.  All $2n$ copies are evaluated in parallel.
Equation~\eqref{eq:pade-hankel-ratio} then computes the quantities
$\rho_{s,n-s}$ with only constant additional depth.

On $U_n$, all the determinants appearing in the telescoping chain are
nonzero.  Hence all the quantities $\rho_{s,n-s}$ are nonzero there,
and the divisions in~\eqref{eq:pade-hankel-recovery} are defined.
A single unbounded-fan-in multiplication gate, followed by one
division, therefore produces a polynomial-size, constant-depth formal
piecewise rational circuit agreeing with
\[
\Delta_n(a_0,\ldots,a_{2n-2})
\]
on the nonempty Zariski-open set $U_n$.

Applying \cref{lem:generic-extraction} removes all select gates and yields an ordinary
rational arithmetic circuit of polynomial size and constant depth
that computes $\Delta_n$ identically in
\[
F(a_0,\ldots,a_{2n-2}).
\]
Since $\Delta_n$ is a polynomial of degree $n$, the
constant-depth division-elimination theorem of Andrews and Wigderson
then gives a polynomial-size, constant-depth division-free arithmetic
circuit for $\Delta_n$.

This contradicts \cref{thm:hankel-lb}.  Therefore the normalized subdiagonal
Pad\'e-approximation family is not in
$\mathrm{PAC}\text{-}\mathrm{AC}^0_F$.
Since the proof used only the highest-degree coefficient of the
normalized denominator, the same lower bound already holds for the
denominator output alone.
\end{proof}

\paragraph{Complexity-theoretic significance of the continuant.}
We also gives a useful completeness interpretation of
the continuant.  By \cref{thm:effective-continuant-simulation}, every family
$(f_n)\in\mathrm{VP}_e$ satisfies
\[
(f_n)\preceq_{\mathrm{DEG}}(K_n),
\]
with polynomially bounded reindexing.  Conversely,
$(K_n)\in\mathrm{VP}_e$: indeed, using
\[
K_n(x_1,\ldots,x_n)
=
\bigl(Q(x_n)\cdots Q(x_1)\bigr)_{1,1},
\qquad
Q(z)=
\begin{pmatrix}
z&1\\
1&0
\end{pmatrix},
\]
a balanced multiplication tree for the $2\times2$ matrices gives a
division-free arithmetic formula of polynomial size.  Thus the
continuant family is $\mathrm{VP}_e$-complete under the effective
degeneration $\preceq_{\mathrm{DEG}}$.

This completeness has a direct constant-depth interpretation.  By
\cref{lem:degeneration-preserves-depth}, $\preceq_{\mathrm{DEG}}$ preserves polynomial-size
constant-depth arithmetic computation up to an additive constant in
the depth.  Consequently,
\[
(K_n)\in\mathrm{AC}^0_F
\quad\Longleftrightarrow\quad
\mathrm{VP}_e\subseteq\mathrm{AC}^0_F.
\]
Thus the continuant is a canonical representative of the obstruction
to computing general polynomial-size arithmetic formulas in constant
depth: a constant-depth upper bound for the continuant would collapse
all of $\mathrm{VP}_e$ into $\mathrm{AC}^0_F$, while any
constant-depth lower bound for a family in $\mathrm{VP}_e$ transfers,
through the completeness reduction, to the continuant.  \Cref{thm:continuant-lower-bound}
shows that such a collapse does not occur over fields of characteristic
zero.

\section{Conclusion}

We have shown that the complete extended Euclidean scheme cannot be
computed by polynomial-size, constant-depth piecewise arithmetic
circuits.  In fact, the lower bound already holds for the complete
ordinary remainder sequence.  The proof proceeds by transferring
constant-depth hardness from the continuant to consecutive Hankel
determinants and then recovering a middle Hankel determinant from fixed
coordinates of the Euclidean remainder sequence.

The argument also yields lower bounds for two further complete outputs
associated with the Euclidean algorithm: the complete polynomial
continued-fraction expansion and the complete profile of fixed-bound
principal subresultant coefficients.  These results emphasize a common
feature of the proof: while several individual Euclidean invariants,
such as the gcd or the resultant, admit piecewise constant-depth
algorithms, sufficiently rich complete output data can retain the
constant-depth hardness encoded by intermediate Hankel determinants.

It would be interesting to identify other classical algebraic
algorithms for which a similar distinction occurs between computing a
single final invariant and recovering a complete sequence of
intermediate structural data.  More generally, one may ask which
structured polynomial families can serve as intermediates for
transferring arithmetic constant-depth lower bounds to exact algebraic
output problems.

\end{document}